\makeatletter \@addtoreset{equation}{section} \makeatother
\newtheorem{lem}{Lemma}[section]
\newtheorem{Example}{Example}[section]
\newtheorem*{R}{Remark}
\def\be{\begin{equation}}
\def\ee{\end{equation}}
\def\ba{\begin{eqnarray}}
\def\ea{\end{eqnarray}}
\newcommand\nn{\nonumber}
\newcommand\q{\quad}
\def\Nl{{\mathchoice
{\setbox0=\hbox{$\displaystyle\rm N$}\hbox{\hbox to0pt
{\kern0.4\wd0\vrule height0.9\ht0\hss}\box0}}
{\setbox0=\hbox{$\textstyle\rm N$}\hbox{\hbox to0pt
{\kern0.4\wd0\vrule height0.9\ht0\hss}\box0}}
{\setbox0=\hbox{$\scriptstyle\rm N$}\hbox{\hbox to0pt
{\kern0.4\wd0\vrule height0.9\ht0\hss}\box0}}
{\setbox0=\hbox{$\scriptscriptstyle\rm N$}\hbox{\hbox to0pt
{\kern0.4\wd0\vrule height0.9\ht0\hss}\box0}}}}
\def\Zl{{\mathchoice
{\setbox0=\hbox{$\displaystyle\rm Z$}\hbox{\hbox to0pt
{\kern0.4\wd0\vrule height0.9\ht0\hss}\box0}}
{\setbox0=\hbox{$\textstyle\rm Z$}\hbox{\hbox to0pt
{\kern0.4\wd0\vrule height0.9\ht0\hss}\box0}}
{\setbox0=\hbox{$\scriptstyle\rm Z$}\hbox{\hbox to0pt
{\kern0.4\wd0\vrule height0.9\ht0\hss}\box0}}
{\setbox0=\hbox{$\scriptscriptstyle\rm Z$}\hbox{\hbox to0pt
{\kern0.4\wd0\vrule height0.9\ht0\hss}\box0}}}}
\def\Ql{{\mathchoice
{\setbox0=\hbox{$\displaystyle\rm Q$}\hbox{\hbox to0pt
{\kern0.4\wd0\vrule height0.9\ht0\hss}\box0}}
{\setbox0=\hbox{$\textstyle\rm Q$}\hbox{\hbox to0pt
{\kern0.4\wd0\vrule height0.9\ht0\hss}\box0}}
{\setbox0=\hbox{$\scriptstyle\rm Q$}\hbox{\hbox to0pt
{\kern0.4\wd0\vrule height0.9\ht0\hss}\box0}}
{\setbox0=\hbox{$\scriptscriptstyle\rm Q$}\hbox{\hbox to0pt
{\kern0.4\wd0\vrule height0.9\ht0\hss}\box0}}}}
\def\Rl{{\mathchoice
{\setbox0=\hbox{$\displaystyle\rm R$}\hbox{\hbox to0pt
{\kern0.4\wd0\vrule height0.9\ht0\hss}\box0}}
{\setbox0=\hbox{$\textstyle\rm R$}\hbox{\hbox to0pt
{\kern0.4\wd0\vrule height0.9\ht0\hss}\box0}}
{\setbox0=\hbox{$\scriptstyle\rm R$}\hbox{\hbox to0pt
{\kern0.4\wd0\vrule height0.9\ht0\hss}\box0}}
{\setbox0=\hbox{$\scriptscriptstyle\rm R$}\hbox{\hbox to0pt
{\kern0.4\wd0\vrule height0.9\ht0\hss}\box0}}}}
\def\Cl{{\mathchoice
{\setbox0=\hbox{$\displaystyle\rm C$}\hbox{\hbox to0pt
{\kern0.4\wd0\vrule height0.9\ht0\hss}\box0}}
{\setbox0=\hbox{$\textstyle\rm C$}\hbox{\hbox to0pt
{\kern0.4\wd0\vrule height0.9\ht0\hss}\box0}}
{\setbox0=\hbox{$\scriptstyle\rm C$}\hbox{\hbox to0pt
{\kern0.4\wd0\vrule height0.9\ht0\hss}\box0}}
{\setbox0=\hbox{$\scriptscriptstyle\rm C$}\hbox{\hbox to0pt
{\kern0.4\wd0\vrule height0.9\ht0\hss}\box0}}}}
\def\Hl{{\mathchoice
{\setbox0=\hbox{$\displaystyle\rm H$}\hbox{\hbox to0pt
{\kern0.4\wd0\vrule height0.9\ht0\hss}\box0}}
{\setbox0=\hbox{$\textstyle\rm H$}\hbox{\hbox to0pt
{\kern0.4\wd0\vrule height0.9\ht0\hss}\box0}}
{\setbox0=\hbox{$\scriptstyle\rm H$}\hbox{\hbox to0pt
{\kern0.4\wd0\vrule height0.9\ht0\hss}\box0}}
{\setbox0=\hbox{$\scriptscriptstyle\rm H$}\hbox{\hbox to0pt
{\kern0.4\wd0\vrule height0.9\ht0\hss}\box0}}}}
\def\Ol{{\mathchoice
{\setbox0=\hbox{$\displaystyle\rm O$}\hbox{\hbox to0pt
{\kern0.4\wd0\vrule height0.9\ht0\hss}\box0}}
{\setbox0=\hbox{$\textstyle\rm O$}\hbox{\hbox to0pt
{\kern0.4\wd0\vrule height0.9\ht0\hss}\box0}}
{\setbox0=\hbox{$\scriptstyle\rm O$}\hbox{\hbox to0pt
{\kern0.4\wd0\vrule height0.9\ht0\hss}\box0}}
{\setbox0=\hbox{$\scriptscriptstyle\rm O$}\hbox{\hbox to0pt
{\kern0.4\wd0\vrule height0.9\ht0\hss}\box0}}}}
\newcommand{\cc}{\mathcal C}
\newcommand{\cg}{\mathcal G}
\newcommand{\ch}{\mathcal H}
\newcommand{\cp}{\mathcal P}
\newcommand{\cq}{\mathcal Q}
\newcommand{\fn}{\mathfrak{n}}  
\newcommand{\fo}{\mathfrak{o}}
\def\nn{\nonumber}
\newcommand{\eqa}{\begin{eqnarray}}
\newcommand{\neqa}{\end{eqnarray}}
\newcommand{\p}{\partial}
\def\f{\frac}
\def\vphi{\varphi}
\def\q{{\quad}}
\begin{document}

{\renewcommand{\thefootnote}{\fnsymbol{footnote}}

\title{Quantization of systems with temporally varying discretization I:\\
 Evolving Hilbert spaces}
\author{Philipp A H\"ohn\footnote{e-mail address: {\tt phoehn@perimeterinstitute.ca}}\\
 \small Perimeter Institute for Theoretical Physics,\\
 \small 31 Caroline Street North, Waterloo, Ontario, Canada N2L 2Y5
}

\date{}

}

\setcounter{footnote}{0}
\maketitle

{\abstract A temporally varying discretization often features in discrete gravitational systems and appears in lattice field theory models subject to a coarse graining or refining dynamics. To better understand such discretization changing dynamics in the quantum theory, an according formalism for constrained variational discrete systems is constructed. While the present manuscript focuses on global evolution moves and, for simplicity, restricts to flat configuration spaces $\mathbb{R}^N$, a companion article \cite{Hoehn:2014wwa} discusses local evolution moves. In order to link the covariant and canonical picture, the dynamics of the quantum states is generated by propagators which satisfy the canonical constraints and are constructed using the action and group averaging projectors. This projector formalism offers a systematic method for tracing and regularizing divergences in the resulting state sums. Non-trivial coarse graining evolution moves lead to non-unitary, and thus irreversible, projections of physical Hilbert spaces and Dirac observables such that these concepts become evolution move dependent on temporally varying discretizations. The formalism is illustrated in a toy model mimicking a `creation from nothing'. Subtleties arising when applying such a formalism to quantum gravity models are discussed. 

}

\section{Introduction}

In order to promote a classical theory to a quantum theory by means of a path integral, it is common practice to regularize the continuum dynamics by discretizing the action.  In mechanics and lattice field theory this permits to construct a state sum on a fixed and non-dynamical discretization. By contrast, for gravitational systems the dynamical nature of space-time also forces a discretization of the space-time geometry to be dynamical.
For instance, in Regge Calculus \cite{Regge:1961px,Williams:1996jb}---the most prominent simplicial discretization of General Relativity---the dynamical nature of the discretization is expressed, at the level of the action, by the fact that the equations of motion determine the lengths of the edges (in the bulk) of the space-time triangulation. 

In the canonical formulation of Regge Calculus \cite{Dittrich:2011ke,Hoehn:2011cm} the dynamical nature of the discretization additionally manifests itself in time evolution generically changing the spatial triangulation. Both its connectivity and the number of simplices contained in it vary in discrete time. This has severe consequences: it induces a temporally varying number of (kinematical and physical) geometric degrees of freedom \cite{Dittrich:2011ke,Dittrich:2013jaa}. Likewise, a varying number of matter degrees of freedom arises when modelling a field theory on a growing/shrinking lattice \cite{Dittrich:2013jaa,Hoehn:2014aoa,Foster:2004yc,Jacobson:1999zk}.

Such a discretization (or graph) changing canonical dynamics, in fact, appears in several quantum gravity approaches \cite{Thiemann:1996ay,Thiemann:1996aw,Alesci:2010gb,Dittrich:2011ke,Dittrich:2013xwa,Bonzom:2011hm} and remains a conundrum yet to be fully understood. Related to this, the relation between covariant state sum models and canonical quantum gravity approaches awaits a full clarification \cite{Noui:2004iy,Dittrich:2009fb,Dittrich:2011ke,Alesci:2011ia,Thiemann:2013lka,Alexandrov:2011ab,Bonzom:2011tf,Alesci:2010gb}.



As a step in this direction, the classical dynamics of variational discrete systems with, in particular, temporally varying discretization have been systematically analyzed in \cite{Dittrich:2013jaa,Dittrich:2011ke,Hoehn:2014aoa}. Generalizing an earlier formalism \cite{marsdenwest,Gambini:2002wn,DiBartolo:2004cg,Gambini:2005vn,DiBartolo:2002fu,DiBartolo:2004dn,Campiglia:2006vy,Jaroszkiewicz:1996gr} (for constant discretization), this work provides a detailed constraint analysis and discussion of Dirac observables for such discrete systems.

What makes gravity special is that, as a generally covariant system, its (canonical) continuum dynamics is totally constrained and generated by the so-called Hamiltonian and diffeomorphism constraints \cite{martinbuch,Rovelli:2004tv,Thiemann:2007zz}. This is deeply intertwined with the diffeomorphism symmetry of the theory which likewise is generated by the constraints. Unfortunately, in the discrete the diffeomorphism symmetry is generically broken \cite{Bahr:2009ku,Dittrich:2008pw,Bahr:2011xs} such that Hamiltonian and diffeomorphism constraints do not in general arise \cite{Dittrich:2011ke}. Furthermore, the discrete dynamics is no longer generated by constraints or a proper Hamiltonian (which would have an infinitesimal action) but by so-called time evolution moves \cite{Dittrich:2009fb,Dittrich:2011ke,Dittrich:2013jaa,Hoehn:2011cm}. Only in special cases of so-called perfect discretizations \cite{Bahr:2009qc,Bahr:2011uj} does the discrete dynamics coincide with the (integrated) continuum dynamics.

Canonical constraints arising from the discrete action can thus assume quite different roles from their continuum counterparts. In particular, for variational discrete systems with temporally varying discretization such constraints will {\it always} arise even in the absence of any symmetries \cite{Dittrich:2011ke,Dittrich:2013jaa,Hoehn:2014aoa}. As argued in \cite{Dittrich:2013xwa} such constraints resulting from discretization changing time evolution moves can be viewed either as non-trivial coarse graining conditions or, in the opposite refinement case, as conditions that allow one to consistently represent a state carrying coarser information on a finer discretization. For a detailed classical discussion of the different roles of the constraints in the discrete, the notion of Dirac observables on dynamical discretizations and a classification of the many cases that can occur, we refer the reader to \cite{Dittrich:2013jaa,Hoehn:2014aoa}. 


The aim of the present work (incl.\ the companion article \cite{Hoehn:2014wwa}) is to better understand such discretization changing dynamics in variational discrete systems in the quantum theory. In this manuscript we shall focus on so-called {\it global} {time evolution moves} \cite{Dittrich:2013jaa} which, in a space-time context, correspond to evolving an entire spatial hypersurface at once. By contrast, the quantization of {\it local}{ time evolution moves} which correspond to local changes in the discretization (e.g.\ Pachner moves \cite{Dittrich:2011ke} in triangulations), will be discussed in the companion article \cite{Hoehn:2014wwa}. For simplicity, we shall restrict to systems with Euclidean configuration spaces $\mathbb{R}^N$, although the formalism can be suitably adapted to general configuration manifolds.

We shall discuss the different types of constraints and their roles in the quantum theory and examine under which conditions they are responsible for divergences in the path integral. The formalism developed here offers a systematic method for tracing and regularizing divergences arising in the construction of a path integral for variational discrete systems. In close analogy to the `general boundary formulation' of quantum theory \cite{Oeckl:2003vu,Oeckl:2005bv,Oeckl:2010ra,Oeckl:2011qd}, we shall deal with transition amplitudes between different discrete time steps which, in a space-time context, can correspond to very general (not necessarily spatial) discretized boundaries of space-time regions. 

Classically a temporally varying discretization leads to the notion of evolving phase spaces \cite{Dittrich:2013jaa,Dittrich:2011ke}. In the quantum theory it will be necessary to extend the analogous notion of finite dimensional evolving Hilbert spaces \cite{Doldan:1994yg} to the infinite dimensional case. We emphasize that the notion of an `evolving' Hilbert or phase space does not (necessarily) refer to a varying dimension but to the temporally varying number of degrees of freedom of the underlying discretization which are necessary in order to describe a classical or quantum state. As we shall see, this will in general lead to an evolution move dependence of the physical Hilbert spaces, i.e.\ the spaces of solutions to the quantum constraints, and of Dirac observables; a non-trivial coarse graining time evolution leads to a non-unitary projection of the physical Hilbert spaces and physical degrees of freedom.


Although the present formalism is motivated from the desire to better understand discretization changing dynamics in quantum gravity approaches, it does not directly apply to non-perturbative quantum gravity models. Rather, in the form given below it applies to systems such as a scalar field on a temporally varying discrete space-time structure in which only the field, but not the geometry itself is dynamical. Before applying this formalism to a quantum gravity model a few issues have to be taken into account. 

Firstly, Euclidean configuration spaces are not appropriate for non-perturbative quantum gravity. A generalization to arbitrary configuration manifolds should, however, leave the qualitative features of the formalism largely unchanged. Secondly, in the sequel we shall always assume that a (single) direction of the discrete evolution is given. In quantum gravity, on the other hand, one can argue that both `forward' and `backward' evolution are to be included in a path integral because these are indistinguishable from the perspective of the evolving hypersurface (see also the discussion in \cite{Dittrich:2013xwa,Hoehn:2014wwa,Teitelboim:1983fh}).\footnote{E.g., this explains why semiclassical spin foam amplitudes yield the Regge action in a cosine \cite{Conrady:2008mk,Barrett:2009gg,Perez:2012wv}.} Thirdly, below we shall determine the evolution of the physical quantum states, satisfying the quantum constraints, in discrete time. In discrete quantum gravity models, on the other hand, physical quantum states do {\it not} `evolve' in an external discrete time. In particular, if the continuum symmetries survive in the discrete, time evolution acts as a projector onto solutions to the quantum constraints \cite{Halliwell:1990qr,Rovelli:1998dx,Noui:2004iy,Thiemann:2013lka} such that physical states do not evolve. Instead, time evolution moves should rather be viewed as refining, coarse graining or entangling operations on physical states rather than generating an `evolution' of the latter \cite{Dittrich:2013xwa}. 
We shall discuss these issues further in section \ref{sec_qg} and in \cite{Hoehn:2014wwa}.

The remainder of this article is organized as follows. In section \ref{sec_rev} we review the classical formalism for global evolution moves in variational discrete systems to make the article relatively self-contained. In section \ref{sec_reg1} we provide a summary of the situation in the quantum theory for regular global evolution moves where no constraints occur. A discussion of constrained global evolution moves in the quantum theory for systems with constant discretization is given in section \ref{sec_irreg}. Here group averaging projectors are used to construct physical states and propagators. In section \ref{sec_evol} the formalism is extended to temporally varying discretizations in which case the notion of cylindrical consistency appears. Section \ref{sec_fullcomp} discusses the generally non-trivial composition of constrained moves and the construction of a state sum. Section \ref{sec_dirac} elaborates on the notion of Dirac observables for temporally varying discretizations. To explicitly exhibit the concepts of this manuscript, we shall showcase a toy model for a `creation from nothing' in section \ref{sec_ex}. A conceptual discussion of the subtleties that arise when applying this formalism to quantum gravity models is provided in section \ref{sec_qg}. The paper finishes with a conclusion in section \ref{sec_conc}. 

Finally, local quantum evolution moves are discussed in the companion article \cite{Hoehn:2014wwa}.

\section{Review of the classical formalism}\label{sec_rev}

Before delving into the details of the quantization, it is necessary to review basic facts about the classical description of (global) time evolution in constrained variational discrete systems. For a detailed discussion and introduction to this formalism we refer the reader to \cite{Dittrich:2013jaa,Dittrich:2011ke} which builds up on and generalizes the earlier works \cite{marsdenwest,Jaroszkiewicz:1996gr,Gambini:2002wn,DiBartolo:2004cg,Gambini:2005vn,DiBartolo:2002fu,DiBartolo:2004dn,Campiglia:2006vy}.\footnote{For a related multisymplectic formulation see also the recent \cite{Arjang:2013dya}.} An explicit construction of the formalism and a detailed classification of constraints and degrees of freedom for quadratic discrete actions is given in \cite{Hoehn:2014aoa}.

In the systems under consideration, time evolution maps between discrete time steps such that a Hamiltonian as a generator of the dynamics (which would be infinitesimal) does not exist. Instead, the dynamics is generated by so-called time evolution moves. A {\it global evolution move}, which henceforth shall be denoted $n\rightarrow n+1$, carries out the discrete time evolution of the system from a given time step $n$ to the next time step $n+1$, where $n\in\mathbb{Z}$ labels the time steps. To each global evolution move there is associated an action $S_{n+1}(x_n,x_{n+1})$ which depends on the (continuous) configuration variables $x_n$ and $x_{n+1}$ from time steps $n,n+1$, respectively. That is, $x_n$ and $x_{n+1}$ coordinatize the (continuous) configuration manifolds $\cq_n$ and $\cq_{n+1}$ of the system at steps $n$ and $n+1$, respectively. For notational simplicity we shall often suppress a further index $x^i_n$, $i=1,\ldots,\dim\cq_n$, on the configuration variables. We emphasize that the formalism is applicable to arbitrary configuration manifolds and, in particular, to temporally varying discretizations which means that $\cq_n\ncong\cq_{n+1}$ is expressly allowed. A global evolution move is characterized by the feature that no neighbouring time steps $n,n+1$ share any subsets of coinciding variables. That is, different time steps $n,n+1$ do not overlap except in a possible boundary. In a space-time context, an evolution move can be viewed as a piece of space-time and the composition of two moves can be viewed as gluing two space-time regions together. This is illustrated in figure \ref{fig1}.

\begin{figure}[hbt!]
\begin{center}
\psfrag{0}{$0$}
\psfrag{1}{$1$}
\psfrag{2}{$2$}
\psfrag{s1}{$S_1$}
\psfrag{s2}{$S_2$}
\psfrag{p0}{\small${}^-p^0,{}^-C^0$}
\psfrag{p1}{\small ${}^+p^1,{}^+C^1$}
\psfrag{p12}{\small${}^-p^1,{}^-C^1$}
\psfrag{p2}{\small${}^+p^2,{}^+C^2$}
\includegraphics[scale=.5]{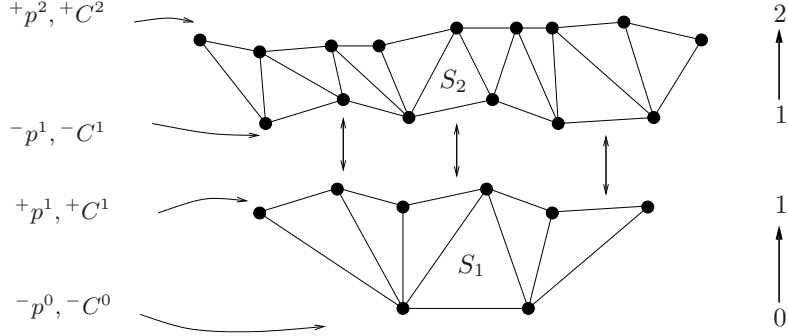}
\caption{\small Schematic illustration of two evolution moves $0\rightarrow1$ and $1\rightarrow2$. In a space-time context, the individual moves correspond to pieces of space-time. Each piece comes with its own set of pre-- and post--momenta and pre-- and post--constraints. The composition of two moves corresponds to a gluing of the two pieces and to solving the equations of motion or, equivalently, matching momenta at $n=1$, ${}^+p^1={}^-p^1$.}\label{fig1}
\end{center}
\end{figure}

An underlying requirement of the classical formalism is {\it additivity} of the action. More precisely, given a sequence of global evolution moves $n\rightarrow n+1\rightarrow\cdots\rightarrow n+X$, one can build an `effective' evolution move $n\rightarrow n+X$ by adding the action contributions of the individual moves
\ba
S_X=\sum_{m=1}^XS_{n+m}(x_{n+m-1},x_{n+m})\nn
\ea
and integrating out the variables associated to the intermediate time steps $n+1,\ldots,n+X-1$ by solving the intermediate equations of motion. Given that we are considering variational discrete systems and global moves, the (covariant) equations of motion for variables $x_n$ are obtained by varying the action $S_X$ with respect to these variables, 
\ba
\f{\p S_n(x_{n-1},x_n)}{\p x_n}+\f{\p S_{n+1}(x_n,x_{n+1})}{\p x_n}=0.\label{eom}
\ea
Since for global moves steps $n,n+1$ do not overlap, $x_n$ only occurs in the action contributions $S_n,S_{n+1}$ such that the equations of motion take the above form.

The classical formalism applies to `bare' evolutions moves as well as `effective' evolution moves such as $n\rightarrow n+X$ and we shall not further distinguish between these. The variables $x_n,x_{n+1}$ in the action contribution $S_{n+1}$ governing the global move $n\rightarrow n+1$ may thus contain `bulk' variables, i.e.\ variables $x_n^t$ whose equations of motion can be constructed from $S_{n+1}$ alone, $\f{\p S_{n+1}}{\p x^t_n}=0$.

The switch to the canonical picture can be made by noting that the discrete action $S_{n+1}(x_n,x_{n+1})$ coincides with Hamilton's principal function which is a generating function of the first kind (i.e.\ it depends on `old' and `new' configuration variables) for the canonical time evolution. `Old' and `new' momenta are obtained by differentiating the generating function,
\ba
{}^-p^n:=-\f{\p S_{n+1}(x_n,x_{n+1})}{\p x_n},\q\q\q{}^+p^{n+1}:=\f{\p S_{n+1}(x_n,x_{n+1})}{\p x_{n+1}}.\label{ham}
\ea
These equations also define the pre-- and post--Legendre transforms which map from $\cq_n\times\cq_{n+1}$---the discrete analogue of the tangent bundle $T\cq$ in the continuum---to the phase spaces $\cp_n:=T^*\cq_n$ at step $n$ and $\cp_{n+1}:=T^*\cq_{n+1}$ at step $n+1$, respectively \cite{Dittrich:2013jaa,Dittrich:2011ke}. Indeed, the so-called {\it pre--momenta} ${}^-p^n$ together with the $x_n$ define a canonical Darboux coordinate system on $\cp_n$, while, likewise, the {\it post--momenta} ${}^+p^{n+1}$ together with the $x_{n+1}$ define canonical coordinates on $\cp_{n+1}$. In particular, (\ref{ham}) defines an implicit global Hamiltonian time evolution map $\mathfrak{H}_{n}:\cp_n\rightarrow\cp_{n+1}$ because, given $(x_n,{}^-p^n)$, the left equation in (\ref{ham}) determines (possibly non-uniquely) $x_{n+1}$, while the right equation determines ${}^+p^{n+1}$.

If the Lagrangian two--form on $\cq_n\times\cq_{n+1}$ is degenerate---this is equivalent to
$\det\left(\f{\p^2S_{n+1}}{\p x^i_n\p x_{n+1}^j}\right)=0$---the Hamiltonian time evolution map $\mathfrak{H}_n$ (\ref{ham}) cannot be unique. Indeed, by the implicit function theorem, in this case both the left and right equations in (\ref{ham}) are not all independent and there exist relations 
\ba
{}^-C^n(x_n,{}^-p^n)=0,\q\q\q\q {}^+C^{n+1}(x_{n+1},{}^+p^{n+1})=0 \nn
\ea
on $\cp_n$ and $\cp_{n+1}$ which are called {\it pre--} and {\it post--constraints}, respectively \cite{Dittrich:2013jaa,Dittrich:2011ke}. The pre--constraints among themselves and the post--constraints among themselves each constitute first class Poisson algebras \cite{Dittrich:2013jaa}. The pre--constraint surface $\cc^-_n\subset\cp_n$ is the pre--image of $\mathfrak{H}_n$, while the post--constraint surface $\cc^+_{n+1}\subset\cp_{n+1}$ is the image of $\mathfrak{H}_n$, i.e.\ $\mathfrak{H}_n:\cc^-_n\rightarrow \cc_{n+1}^+$. In \cite{Dittrich:2011ke} it was shown that $\mathfrak{H}_n$ defines a pre--symplectic transformation; it preserves the symplectic structure which, however, at step $n$ is restricted to $\cc^-_n$ and at $n+1$ restricted to $\cc^+_{n+1}$. The pre-- and post--constraints account for all possible {\it primary} constraints arising at a given time step, not all of which are necessarily symmetry generators \cite{Dittrich:2013jaa,Hoehn:2014aoa}. Holonomic or boundary data constraints which are neither pre-- nor post--constraints arise only as {\it secondary} constraints upon implementing equations of motion \cite{Hoehn:2014aoa}.

Given that there are not sufficiently many independent equations in (\ref{ham}), for every post--constraint at $n+1$ there will exist a configuration datum $\lambda_{n+1}$ which cannot be predicted by the canonical data at $n$ and which we therefore call an {\it a priori free} parameter. Similarly, to every pre--constraint at $n$ there is associated a configuration datum $\mu_n$ at $n$ which cannot be postdicted, given the canonical data at $n+1$ and which we thus call an {\it a posteriori free} parameter. The set of pre--constraints generates the {\it pre--orbit} $\cg^-_n\subset\cp_n$ parametrized by the $\mu^i_n$, while the set of post--constraints generates the {\it post--orbit} $\cg^+_{n+1}\subset\cp_{n+1}$ parametrized by the $\lambda^i_1$ \cite{Dittrich:2013jaa}. 

Since (\ref{ham}) can be applied to every move, there exist both pre-- and post--momenta and both pre-- and post--constraints at each time step (see figure \ref{fig1}). However, the covariant equations of motion (\ref{eom}) are equivalent to a {\it momentum matching} \cite{marsdenwest,Dittrich:2011ke,Dittrich:2013jaa,Jaroszkiewicz:1996gr} at step $n$
\ba
p^n:={}^-p^n={}^+p^n\nn
\ea
such that on solutions each step is equipped with unique momenta. This momentum matching ultimately renders the canonical formalism equivalent to the covariant formulation.

The matching of the symplectic structures at step $n$ has very non-trivial repercussions for the dynamics \cite{Dittrich:2013jaa,Hoehn:2014aoa}, in particular, for a temporally varying discretization. We shall explain these repercussions at the relevant places along the way in the main body of this work. Let us, nevertheless, briefly summarize a few of them. The matching of symplectic structures at $n$ requires that the generally distinct sets of pre-- and post--constraints both have to be satisfied. This can lead to non-trivial restrictions of the dynamics and to a move dependence of the number of constraints at a given time step. Accordingly, the number of propagating degrees of freedom becomes move dependent in the general case. Furthermore, the combined set of pre-- and post--constraints at a step $n$ may generally be second class, while genuine gauge symmetry generators are first class. The formalism consistently describes how many canonical concepts become evolution move dependent for temporally varying numbers of degrees of freedom.

\section{Regular global discrete time evolution moves}\label{sec_reg1}

We begin by quantizing {\it regular}, i.e.\ non-constrained, {\it global} time evolution moves. The quantization of constrained global moves, including those leading to a temporally varying discretization, is discussed in sections \ref{sec_irreg}--\ref{sec_ex}. However, before we do so, it is useful to recall elementary properties of continuum propagators. For the remainder of this work, we decide to keep track of $\hbar$.

\subsection{Prelude: review of propagators in the continuum}\label{sec_cont}

In continuum quantum mechanics, the propagator is defined, in the position representation, as the transition amplitude between states at different times
\ba
K(q_1,t_1;q_0,t_0)=\left\langle q_1\left|e^{-i(t_1-t_0)\hat{H}/\hbar}\right|q_0\right\rangle\label{trans1},
\ea
where $\hat{H}$ is the Hamiltonian of the system and $q_{0,1}$ coordinatize the configuration manifold $\cq$. It satisfies the Schr\"odinger equation in {\it both} sets of variables
\ba
i\hbar\,\partial_{t_1}K(q_1,t_1;q_0,t_0)=\hat{H}K(q_1,t_1;q_0,t_0),\q\q\q i\hbar\,\p_{t_0}K^*(q_1,t_1;q_0,t_0)=\hat{H}K^*(q_1,t_1;q_0,t_0),\label{s-eq}
\ea
and, importantly, maps wave functions at $t_0$ one-to-one (at least for non-constrained moves) to wave functions at $t_1$:
\ba
\psi(q_1,t_1)=\int_{\cq}\,dq_0\,K(q_1,t_1;q_0,t_0)\,\psi(q_0,t_0).\label{contmap}
\ea
Four of the continuum propagator's basic properties are (e.g., see \cite{Klauderbook}):
\begin{itemize}
\item[(i)] \underline{Composition:} $K(q_2,t_2;q_0,t_0)=\int_{\cq} dq_1K(q_2,t_2;q_1,t_1)K(q_1,t_1;q_0,t_0)$, where $t_1<t_0$ and $t_2<t_1$ are permitted,
\item[(ii)] \underline{Time reversal:} $K(q_0,t_0;q_1,t_1)=\left(K(q_1,t_1;q_0,t_0)\right)^*$, 
\item[(iii)] \underline{Invertibility:} $\int_\cq dq_1\left(K(q_1,t_1;q_0,t_0)\right)^*K(q_1,t_1;q'_0,t_0)=\delta(q_0-q'_0)$, and
\item[(iv)] \underline{Infinitesimal transition:} $\lim_{t_1\rightarrow t_0}K(q_1,t_1;q_0,t_0)=\delta(q_1-q_0)$.
\end{itemize}

In the sequel, we shall study, among other aspects, how these well-known continuum properties of the propagator do or do not translate into a consistent quantum formalism for variational discrete systems.

\subsection{Propagators for regular global evolution moves}\label{sec_reg}



For simplicity, we shall henceforth restrict to variational discrete systems with flat Euclidean configuration spaces $\cq_n\simeq\mathbb{R}^{N_n}$ (here Euclidean does {\it not} refer to a space-time signature).\footnote{The classical formalism in \cite{Dittrich:2011ke,Dittrich:2013jaa}, on the other hand, is applicable to general configuration manifolds $\cq_n$.} 
We also specify to systems---as in Regge Calculus or scalar field theory on the lattice---in which no discretized time variable $t_n$ exists. The canonical Darboux coordinates on $T^*\mathbb{R}^{N_n}\simeq \mathbb{R}^{2N_n}$ take value on the full real line, $x^i_n,p^n_i\in(-\infty,\infty)$. Choosing the Hilbert space for time step $n$ (in standard position representation) to be $\ch_n=L^2(\cq_n,dx_n)$ with Lebesgue measure $dx_n$, the spectrum of the corresponding self-adjoint quantum operators will likewise be the real line. In this manner we avoid worrying about global or topological non-trivialities in the quantization \cite{isham2} which, for the purpose of these notes, would unnecessarily cloud the main results. The case of bounded and/or compact spectra, for instance relevant for Regge Calculus (and thereby spin foam models), will be studied elsewhere. In the regular case, the configuration manifolds at different time steps will be isomorphic to one another $\cq_n\simeq\cq_{n+1}\simeq\mathbb{R}^N$.

Consider a `bare' global evolution move $0\rightarrow1$, i.e.\ an evolution move which does not involve any bulk variables. The discrete propagator associated to this move {\it cannot} be given as a transition amplitude between states on one and the same Hilbert space in the form of the right hand side of (\ref{trans1}): the eigenstates of the `position operators' $\hat{x}^i$ at steps $n=0,1$ $\left|\vec{x}_0\right\rangle\in\ch_0$ and $\left|\vec{x}_{1}\right\rangle\in\ch_{1}$, respectively, are elements of two distinct Hilbert spaces $\ch_0,\ch_{1}$ and refer to different variables (e.g., in Regge Calculus the variables $x^i_0$ and $x^j_1$ refer to edges in two different hypersurfaces within the triangulation). 
Correspondingly, an expression such as $\left\langle\vec{x}_1\right|\left.\vec{x}_0\right\rangle$ is not defined. Instead, we shall construct the propagator directly as the quantum time evolution map between $\ch_0$ and $\ch_1$---in analogy to (\ref{contmap}). The inner product of a state in $\ch_1$ with the time evolution image (in $\ch_1$) of a state in $\ch_0$ can then be interpreted as the transition amplitude. In addition, given that time evolution is generated by the evolution moves, a Hamiltonian (which would generate a continuous time evolution) is absent in the systems under consideration \cite{Dittrich:2011ke,Dittrich:2013jaa}. Consequently, in the quantum theory a unitary of the type $\hat{U}(n,m)=e^{-i(n-m)\hat{H}/\hbar}$ cannot arise and be used to define the desired map between Hilbert spaces associated to arbitrary time steps $n,m$.

We must therefore proceed differently: just as in the classical formalism \cite{marsdenwest,Dittrich:2009fb,Dittrich:2011ke,Dittrich:2013jaa}, we shall employ the action (or rather Hamilton's principal function), instead of a Hamiltonian, to construct the propagator (and thereby transition amplitude) and to define the dynamics. This will directly link the path integral formalism with the canonical formulation. 

More precisely, in the spirit of the configuration space path integral expression for the continuum propagator, we shall make the following ansatz for the propagator of a global time evolution move $0\rightarrow1$. We associate a (possibly complex) path integration measure $M_{0\rightarrow1}(x_1,x_0)$ to this move and absorb it in the definition of the propagator\footnote{Recall that a time variable $t_{0,1}$ does not occur.}
\ba
K_{0\rightarrow1}(x_1,x_0):=M_{0\rightarrow1}(x_0,x_1)\,e^{iS_1(x_1,x_0)/\hbar}
,\label{prop}
\ea
where $M_{0\rightarrow1}$ remains to be determined and $S_1(x_0,x_1)$ is the classical action associated to $0\rightarrow1$. 

One may be surprised about this distinction between the measure and a phase factor involving the classical action---after all, even knowing $S_1$ does not seem particularly helpful as long as one is ignorant about $M_{0\rightarrow1}$. But this distinction is motivated by the desire to ensure the correct propagator expression in the semiclassical limit. Indeed, the classical canonical time evolution associated to the move $0\rightarrow1$ is generated by the classical action $S_1$ \cite{Dittrich:2011ke,Dittrich:2013jaa,marsdenwest}. The continuum analysis in \cite{miller1,miller2} concerning the semiclassical transformations corresponding to classical canonical transformations are therefore directly applicable to (regular) global evolution moves. Applying the results of \cite{miller1,miller2} to the present case implies that the condition of unitarity of the move $0\rightarrow1$ uniquely singles out that, as $\hbar\rightarrow0$,  
\ba
K_{0\rightarrow1}(x_0,x_1)\approx\sqrt{\left(\f{1}{-2\pi i\hbar}\right)^N\det\left(\f{\p^2 S_1(x_0,x_1)}{\p x^i_0\p x^j_1}\right)}\,e^{\f{i S_1(x_0,x_1)}{\hbar}}.\label{miller}
\ea
As is well known, this expression is exact (i.e.\ the full quantum expression) for actions quadratic in the configuration variables \cite{Klauderbook}. Consequently, in these cases, the propagator is indeed of the form (\ref{prop}). Accordingly, following the idea to absorb arbitrary quantum corrections to the semiclassical expression (\ref{miller}) in the measure $M_{0\rightarrow1}$, we shall henceforth use (\ref{prop}) as the general ansatz for the discrete propagator. Therefore, the real information about the quantum theory is contained in the measure. Otherwise, at this stage the measure is kept general; however, along the way, non-trivial restrictions on it will arise as a result of desired properties of the propagator.

As a side remark, we recall from \cite{Dittrich:2011ke,Dittrich:2013jaa} that $\f{\p^2 S_1(x_0,x_1)}{\p x^i_0\p x^j_1}$ is the coordinate expression for the Lagrange two-form $\Omega_{1}$, defined on $\cq_0\times\cq_1$. While this matrix is non-degenerate for regular global evolution moves, it is degenerate for irregular (constrained) global moves \cite{Dittrich:2011ke,Dittrich:2013jaa,Hoehn:2014aoa}. The above expression for the semiclassical limit of the propagator is therefore not valid for constrained moves. However, in the context of quadratic discrete actions this expression can be suitably regulated for constrained moves \cite{Hoehn:2014aoa}.

In contrast to the continuum and given the absence of a Hamiltonian, neither the discrete propagator nor the states need to satisfy a differential evolution equation such as (\ref{s-eq}). The discrete evolution is therefore subject to less conditions than its continuum counterpart. This has severe repercussions: for instance, the path integral measure may generally be non-unique.

Let us now begin with the construction. Classically, to every global move $0\rightarrow1$ there is associated a pair of phase spaces $\cp_0:=T^*\mathbb{R}^N$ and $\cp_1:=T^*\mathbb{R}^N$ at steps $n=0,1$, respectively \cite{Dittrich:2011ke,Dittrich:2013jaa}. In analogy, in the quantum theory we associate a pair of Hilbert spaces $\ch_0:=L^2(\mathbb{R}^N,dx_0)$ and $\ch_1:=L^2(\mathbb{R}^N,dx_1)$ with every global move $0\rightarrow1$. We refer to the elements ${}^-\psi_0(x_0)\in\ch_0$ and ${}^+\psi_1(x_1)\in\ch_1$ associated with the move $0\rightarrow1$ as {\it pre--} and {\it post--states}, respectively. We use the discrete propagator (\ref{prop})---in analogy to (\ref{contmap})---to define the time evolution map $U_{0\rightarrow1}:\ch_0\rightarrow\ch_1$ from {\it pre--states} at $n=0$ to {\it post--states} at $n=1$ via $U_{0\rightarrow1}:=\int dx_0 \,K_{0\rightarrow1}$ such that
\ba
{}^+\psi_1(x_1)=\int\,dx_0\,K_{0\rightarrow1}(x_0,x_1){}^-\psi_0(x_0).
\ea

Just as in the continuum (property (ii) of section \ref{sec_cont}), we require the reverse propagator to be the complex conjugate (see also \cite{Jaroszkiewicz:1996gr}):
\ba
K_{1\rightarrow0}(x_0,x_1)=M_{0\rightarrow1}^*(x_0,x_1)e^{-iS_1(x_0,x_1)/\hbar}=\left(K_{0\rightarrow1}(x_1,x_0)\right)^*.\label{inv-prop}
\ea
We shall see shortly that this implies unitarity of the move $0\rightarrow1$. Hence,
\ba
{}^+\psi_1(x_1)=\int\,dx_0\,K_{0\rightarrow1}(x_1,x_0){}^-\psi_0(x_0)=\int\,dx_0\,K_{0\rightarrow1}(x_1,x_0)\int\,dx'_1\left(K_{0\rightarrow1}(x'_1,x_0)\right)^*{}^+\psi_1(x'_1).\nn
\ea
This entails
\ba
\int\,dx_0\,K_{0\rightarrow1}(x_1,x_0)\left(K_{0\rightarrow1}(x'_1,x_0)\right)^*=\delta^{(N)}(x'_1-x_1)\label{cond1}
\ea
which, by (\ref{prop}), is a condition on the measure $M_{0\rightarrow1}$. (\ref{cond1}) will, in general, not uniquely determine $M_{0\rightarrow1}$.\footnote{This is unrelated to the fact that the left hand side can always be multiplied by some function $f(x_1,x'_1)$ defined on $\cq_1\times\cq_1$ without changing the result as long as $f(x_1,x_1)=1$. One can easily convince oneself that such a function must be of the form $f(x_1,x_1')=e^{i(r(x_1)-r(x_1'))}$, where $r(x_1)$ is an arbitrary real function on $\cq_1$, in order to be absorbed into the measure. But this means that the propagator (and thus states in $\ch_1$) are multiplied by a phase factor $e^{ir(x_1)}$, which amounts to a unitary change of representation without changing any dynamics. On the other hand, the possible non-uniqueness of $M_{0\rightarrow1}$ is related to the fact that the propagator does not need to satisfy a Schr\"odinger equation. In the continuum, the measure can often be determined via (\ref{s-eq}).} In complete analogy, by considering ${}^-\psi_0$ in terms of ${}^+\psi_1$, one finds a further condition
\ba
\int dx_1\left(K_{0\rightarrow1}(x_1,x_0)\right)^*K_{0\rightarrow1}(x_1,x'_0)=\delta^{(N)}(x'_0-x_0)\,,\label{cond2}
\ea
on the measure $M_{0\rightarrow1}(x_1,x_0)$. Both (\ref{cond1}, \ref{cond2}) are the discrete incarnation of the continuum property (iii) in section \ref{sec_cont} above.

Provided the two conditions (\ref{cond1}, \ref{cond2}) are fulfilled, the propagator defines a bijective quantum time evolution map between $\ch_0$ and $\ch_1$. Moreover, any of the three conditions (\ref{inv-prop}--\ref{cond2}) entails unitarity of the discrete evolution move $0\rightarrow1$. For instance, using (\ref{cond2}),
\ba
\int dx_1\left({}^+\phi_1(x_1)\right)^*{}^+\psi_1(x_1)&=&\int dx_1\,dx_0\,dx'_0\left(K_{0\rightarrow1}(x_1,x_0)\right)^*K_{0\rightarrow1}(x_1,x'_0)\left({}^-\phi_0(x_0)\right)^*{}^-\psi_0(x'_0)\nn\\
&\underset{{(\ref{cond2})}}{=}&\int dx_0\left({}^-\phi_0(x_0)\right)^*{}^-\psi_0(x_0)
\ea
and thus
\ba
\left\langle{}^+\phi_1\left|\right.{}^+\psi_1\right\rangle_{\tiny\ch_1}=\left\langle{}^-\phi_0\left|\right.{}^-\psi_0\right\rangle_{\tiny\ch_0}.
\ea
By using (\ref{cond1}), one can prove the reverse direction.

The transition amplitudes for the evolution $0\rightarrow1$ can now be written as
\ba
\langle{}^+\phi_1\big|\,U_{0\rightarrow1}\,{}^-\psi_0\rangle_{\ch_1}=\int dx_1\,dx_0\,({}^+\phi_1)^*\,K_{0\rightarrow1}\,{}^-\psi_0.\nn
\ea

Next, let us compose the moves $0\rightarrow1$ and $1\rightarrow2$ to an effective move $0\rightarrow2$ such that we have bulk variables at $n=1$. Classically, there is a single phase space $\cp_1=T^*\mathbb{R}^N$ at $n=1$ for both moves $0\rightarrow1$ and $1\rightarrow2$. This leads to a momentum matching at $n=1$ which is equivalent to imposing the equations of motion \cite{marsdenwest,Dittrich:2011ke,Dittrich:2013jaa,Hoehn:2014aoa,Jaroszkiewicz:1996gr}. An analogous state of affairs holds in the quantum theory: there is only one Hilbert space $\ch_1=L^2(\mathbb{R}^N,dx_1)$ at $n=1$ for both moves $0\rightarrow1$ and $1\rightarrow2$ such that for consistency of the evolution we must perform a {\it state matching} at $n=1$, i.e.\ the {\it pre--states} must coincide with the {\it post--states}, 
\ba
{}^-\psi_1={}^+\psi_1.
\ea
Given that the momentum operators $\hat{p}^1_i$ acting on pre-- and post--states are the same, one finds a {\it quantum momentum matching} in the form $\hat{p}^1_i{}^-\psi_1=\hat{p}^1_i{}^+\psi_1$---which amounts to a matching of the momentum eigenvalues, provided ${}^+\psi_1$ is an eigenstate of $\hat{p}^1_i$. This enables us us to write down the quantum version of the classical pre-- and post--momenta \cite{Dittrich:2011ke,Dittrich:2013jaa,Hoehn:2014aoa}, ${}^-p^1=-\f{\p S_2}{\p x_1}$ and ${}^+p^1=\f{\p S_1}{\p x_1}$, respectively. For instance, the quantum version of the post--momenta reads
\ba
\hat{p}^1{}^+\psi_1=-i\hbar\,\p_{x_1}{}^+\psi_1=\int dx_0\left(\f{\p S_1}{\p x_1}-i\hbar\f{\p \ln(M_{0\rightarrow1})}{\p x_1}\right)K_{0\rightarrow1}{}^-\psi_0.\nn
\ea

In analogy to the continuum (property (i) in section \ref{sec_cont}), the propagator of the composition of the moves is to be the convolution
\ba
K_{0\rightarrow2}(x_2,x_0)=\int\,d{x}_1\,K_{1\rightarrow2}({x}_2,{x}_1)\,K_{0\rightarrow1}({x}_1,{x}_0).\label{comp}
\ea
This allows us to consistently write
\ba
{}^+\psi_2(x_2)&=&\int\,dx_1\,K_{1\rightarrow2}(x_2,x_1){}^-\psi_1(x_1)\nn\\
&=&\int\,dx_1\,K_{1\rightarrow2}(x_2,x_1)\,{}^+\psi_1(x_1)\nn\\
&=&\int\,dx_1\,K_{1\rightarrow2}(x_2,x_1)\int\,dx_0\,K_{0\rightarrow1}(x_1,x_0){}^-\psi_0(x_0)\nn\\
&=&\int\,dx_0\,K_{0\rightarrow2}(x_2,x_0)\,{}^-\psi_0(x_0)
\ea
and likewise for any other time step difference. Obviously, the propagator $K_{0\rightarrow2}$ must satisfy the analogous conditions to (\ref{inv-prop}--\ref{cond2}) for the `effective' move $0\rightarrow2$ to be unitary.

(\ref{inv-prop}--\ref{cond2}, \ref{comp}) are the four basic requirements on the propagators in the discrete. Notice that continuum property (iv), $\lim_{t_1\rightarrow t_2}K(q_1,t_1;q_0,t_0)=\delta(q_1-q_0)$, which is the initial value condition for the continuum propagator satisfying (\ref{s-eq}) \cite{Klauderbook}, is meaningless in the systems under consideration because of the absence of: (a) a time variable which could be made arbitrarily small, and (b) an evolution equation for the propagator such as (\ref{s-eq}).\footnote{The only situation in which one would regain an analogous property is a `time step relabeling' move $0\rightarrow1$ which keeps all variables fixed but replaces the time step label $`0`$ with the new label $`1`$. The propagator corresponding to such a relabeling move would indeed be $\delta^{(N)}=(x_1-x_0)$.}

The composition of the sequence of moves $0\rightarrow1\rightarrow2\rightarrow\cdots\rightarrow n$ to the `effective' move $0\rightarrow n$ yields the `path integral' (PI), or rather state sum, 
\ba\label{pi}
K_{0\rightarrow n}(x_n,x_0)&=&\int_{\cq^{n-1}}\prod^{n-1}_{j=0}K_{j\rightarrow j+1}(x_{j+1},x_j)\prod^{n-1}_{l=1}dx_l\nn\\
&\underset{(\ref{prop})}{=}&\int_{\cq^{n-1}}e^{i/\hbar\sum^n_{k=1}S_k(x_k,x_{k-1})}\prod^{n-1}_{j=0}M_{j\rightarrow j+1}(x_{j+1},x_j)\prod^{n-1}_{l=1}dx_l.
\ea

Given that we are employing the propagator defined via the action to construct the time evolution of quantum states, the equivalence of the canonical and path integral quantization of these variational discrete systems is essentially a tautology. For the same reason, this equivalence holds for the constrained evolution moves to be discussed below.

The construction presented above is formulated in what one could call the `Schr\"odinger picture' for time evolution on the set of Hilbert spaces $\{\ch_j\}_{j=0}^n$ which are unitarily related by the maps $U_{j\rightarrow j+1}$. In this picture the discrete time evolution of quantum states is determined by the propagators, while on each Hilbert space $\ch_n$ `observables' are described by (densely defined) self-adjoint operators $\hat{O}_n(\hat{x}_n,\hat{p}^n)$. This `Schr\"odinger picture' appears intuitive from a geometrical point of view, in particular, with an application of such a formalism to a scalar field on a varying spatial triangulation in mind where time evolution is geometrically described by (spatial hypersurface) discretization changing moves \cite{Dittrich:2013jaa,Hoehn:2014aoa,Dittrich:2013xwa}. In such a scenario, it is intuitive to regard the quantum state as an evolving entity and the `observables' as being associated to a fixed time step $n$ and acting on $\ch_n$. 

However, it may be useful to have an equivalent `Heisenberg picture' at hand in which operators, acting on one Hilbert space $\ch_n$, are evolved to operators acting on another $\ch_{m}$. This can easily be done within the present framework. For instance,
\ba
\hat{O}_0\,\cdot:=\int\,dx_n\,(K_{0\rightarrow n}(x'_0,x_n))^*\hat{O}_n(\hat{x}_n,\hat{p}^n)\int\,dx_0\,K_{0\rightarrow n}(x_0,x_n)\,\cdot\nn
\ea
is an `observable' acting on pre--states ${}^-\psi_0$ in $\ch_0$. In this way we can equivalently describe the dynamics on the fixed Hilbert space $\ch_0$ by evolving the `observables' forward and backward in discrete time. We shall not go into further details here. For related work on a `Heisenberg picture' in a discrete context with, however, evolution on a fixed Hilbert space $\ch$, $\forall\,n$, see \cite{Jaroszkiewicz:1996gr,Gambini:2002wn,DiBartolo:2002fu,Bahr:2011xs,Campiglia:2006vy}. In section \ref{sec_dirac} and in \cite{Hoehn:2014wwa} we shall elaborate on the evolution of Dirac observables in the context of temporally varying discretizations.

\section{Constrained global evolution moves in the quantum theory}\label{sec_irreg}

Next, let us elaborate on constrained global moves with the restriction that $\dim\cq_n=\dim\cq_{n+1}$. The generalization to temporally varying discretizations with $\dim\cq_n\neq\dim\cq_{n+1}$ will be studied in the next section \ref{sec_evol}. The composition of constrained moves and the path integral will be considered in section \ref{sec_fullcomp}. Again, we assume $\cq_n\simeq\mathbb{R}^N$.

Let $0\rightarrow1$ be a constrained global evolution move which, recalling section \ref{sec_rev}, is constrained by the pre--constraints ${}^-C_I^0$ at $n=0$ and the post--constraints ${}^+C_I^1$ at $n=1$, $I=1,\ldots,k$. 
These pre-- and post--constraints ought to be satisfied in the quantum theory. To this end, we shall follow the Dirac algorithm \cite{Dirac,Henneaux:1992ig}, i.e.\ begin with a kinematic quantization followed by an imposition of the constraints on the quantum states and a completion of the resulting solution space to a physical Hilbert space. As the kinematical Hilbert spaces we shall choose $\ch^{\rm kin}_0:=L^2(\mathbb{R}^N,dx_0)$ and $\ch_1^{\rm kin}:=L^2(\mathbb{R}^N,dx_1)$ with standard Lebesgue measures $dx_0,dx_1$. 
The quantum pre-- and post--constraints are to be self-adjoint with respect to the $L^2$ kinematical inner product (KIP) on $\ch^{\rm kin}_{0,1}$.

At this stage we shall make some assumptions both for simplifying matters and in order not to keep the discussion entirely formal. We shall assume the following: 
\begin{itemize}
\item[(1)] the spectrum of the quantum pre-- and post--constraints is absolutely continuous; 
\item[(2)] the quantization of the constraints is consistent and anomaly free, i.e.\ the quantum pre-- and post--constraints each form a proper first class constraint algebra as they do in the classical formalism \cite{Dittrich:2013jaa}; 
\item[(3)] The pre-- and post--orbits $\cg^-_0,\cg^+_1$ (see section \ref{sec_rev}) are non--compact; 
\item[(4)] there are no global obstructions for fixing the flow of any of the constraints on the corresponding $\cg^+_1,\cg^-_0$---i.e.\ no Gribov problem arises.
\end{itemize}
This has significant repercussions in the quantum theory: in general, one integrates over the non--compact $\cg^+_1,\cg^-_0$ such that we have to expect a number of divergencies to appear in the construction. These infinities must be regularized, e.g.\ by factoring out the orbit volume by means of the Faddeev--Popov trick. Fortunately, the presently devised formalism will naturally keep track of these divergencies. In section \ref{sec_lin} and in a concrete example in section \ref{sec_ex} below we shall see this beyond the formal level.

\subsection{Abbreviations}

In the sequel, we shall often make use of abbreviations. For reference and clarity, we list them here:
\begin{table}[htdp]
\begin{center}
\begin{tabular}{|c|c|}
\hline
PI & path integral\\\hline
KIP & kinematical inner product\\\hline
PIP & physical inner product\\\hline
PIP+ & post--physical inner product\\\hline
PIP-- & pre--physical inner product\\
\hline
\end{tabular}
\end{center}
\label{default}
\end{table}%

\subsection{Group averaging}

Given that we are dealing with mechanical systems on configuration spaces $\cq_n\simeq\mathbb{R}^N$, we shall henceforth employ group averaging techniques \cite{Marolf:1995cn,Marolf:2000iq,Rovelli:2004tv,Thiemann:2007zz} in order to construct what we shall call the {\it pre--} and {\it post--physical Hilbert spaces},  ${}^-\ch^{\rm phys}_0,{}^+\ch^{\rm phys}_1$, respectively.\footnote{This notion of a {\it pre--physical Hilbert space} should not be confused with the mathematical notion of a pre--Hilbert space. The latter is a complex vector space with hermitian inner product prior to Cauchy completion to a proper Hilbert space.} More precisely, the {\it pre--physical Hilbert space} at step $n$ is to consist of the set of all {\it pre--physical states} ${}^-\psi^{\rm phys}_n$ where the latter are the states annihilated by the quantum pre--constraints at $n$, ${}^-\hat{C}^n_I\,{}^-\psi^{\rm phys}_n=0$. Likewise, the {\it post--physical Hilbert space} at $n$ is comprised of the set of {\it post--physical states} ${}^+\psi^{\rm phys}_n$ which solve the quantum post--constraints at $n$ in the form ${}^+\hat{C}^n_J\,{}^+\psi^{\rm phys}_n=0$. 

In particular, under the above assumptions, we can define the (improper) {\it pre--} and {\it post--projectors} 
\ba\label{proj000}
{}^-\mathbb{P}_0:\ch^{\rm kin}_0\rightarrow{}^-\ch^{\rm phys}_0,\q\q\q{}^-\mathbb{P}_0:=\prod_{I=1}^k\delta({}^-\hat{C}^0_I),\nn\\
{}^+\mathbb{P}_1:\ch^{\rm kin}_1\rightarrow{}^+\ch^{\rm phys}_1,\q\q\q{}^+\mathbb{P}_1:=\prod_{I=1}^k\delta({}^+\hat{C}^1_I),
\ea
respectively, where $\delta(\hat{C})=1/(2\pi\hbar)\int_\mathbb{R} ds\,e^{is\hat{C}/\hbar}$. For instance, at $n=1$ we can then construct the {\it post--physical states} by an (improper) projection onto solutions of the $k$ post--constraints
\ba
{}^+\psi^{\rm phys}_1(x_1):={}^+\mathbb{P}_1\,\psi^{\rm kin}_1(x_1):=\f{1}{(2\pi\hbar)^{k}}\int_{\mathbb{R}^{k}}\prod^{k}_{J=1}\left(ds^J_1e^{is^J_1{}^+\hat{C}^1_J/\hbar}\right)\psi^{\rm kin}_1(x_1),\label{proj}
\ea
where $\psi^{\rm kin}_1\in\ch^{\rm kin}_1$ is a state\footnote{We abstain from indexing kinematical states with a `+' or `--' since $\ch^{\rm kin}_1$ is the kinematical Hilbert space at $n=1$ for both $0\rightarrow1$ and $1\rightarrow2$.} in the kinematical Hilbert space at $n=1$ and the $s^J_1$ are classical parameters parametrizing the flows of ${}^+\hat{C}^1_J$. 
This expression formally solves all the quantum post--constraints because $\cg^+_1$ is averaged out. By relabeling the integration parameter in (\ref{proj}), one can easily convince oneself that ${}^+\mathbb{P}_1=({}^+\mathbb{P}_1)^\dag$ and similarly ${}^-\mathbb{P}_0=({}^-\mathbb{P}_0)^\dag$. (\ref{proj}) is at this stage a formal expression because in general ambiguities in the ordering of the $\delta({}^+\hat{C}^1_I)$ will arise. We shall not worry about this last issue here (for a general discussion of group averaging, see \cite{Marolf:1995cn,Marolf:2000iq,Thiemann:2007zz}).

\begin{R}
\emph{Before we proceed, it is important to note that there exists an inequivalent alternative construction. Instead of defining the} {post--projector} \emph{via (\ref{proj000}, \ref{proj}), one could have constructed it as}
\ba
{}^+\mathbb{P}'_1:=\f{1}{(2\pi\hbar)^{k}}\int_{\mathbb{R}^{k}}\prod^{k}_{J=1}ds^J_1e^{i/\hbar\sum_{J=1}^ks^J_1{}^+\hat{C}^1_J}\label{altproj}
\ea
\emph{(and analogously for the} pre--projector\emph{). This definition is inequivalent to (\ref{proj000}, \ref{proj}) because the constraints will, in general, not commute. (There may even be cases in which at least one of the two methods will not lead to a valid explicit construction of a projector onto solutions of} all \emph{quantum constraints simultaneously.) The advantage of (\ref{altproj}) is that ${}^+\mathbb{P}'_1$ thus defined is invariant under a linear transformation of the constraints (and a simultaneous inverse linear transformation of the integration parameters)---in contrast to ${}^+\mathbb{P}_1$ in (\ref{proj000}, \ref{proj}). This is closer in philosophy to the classical theory where the particular choice of the constraints is unimportant as long as they are related by linear transformations.}

\emph{However, in the sequel we shall rarely make explicit use of the particular formal construction of the projectors. Below it will only matter that the projectors impose the associated constraints and not in which of the two ways they are implemented. Hence, the subsequent formal calculations would be unaffected if one used ${}^+\mathbb{P}'_1$ defined in (\ref{altproj}) instead of ${}^+\mathbb{P}_1$ (and similarly in the case of the} pre--projectors\emph{). The exception is section \ref{sec_fullcomp} where we need to distinguish different types of pre-- and post--constraints when composing pairs of evolution moves and new constraints arise from the `future' or `past' evolution move. At this stage the construction of (\ref{proj000}, \ref{proj}) becomes notationally advantageous for decomposing ${}^+\mathbb{P}_1,{}^-\mathbb{P}_1$ into sub-projectors. This is the reason why we shall henceforth work with the definition of the} pre-- \emph{and} post--projectors ${}^+\mathbb{P}_1,{}^-\mathbb{P}_1$ \emph{as given in (\ref{proj000}, \ref{proj}). Nevertheless, one could similarly work with the construction as in (\ref{altproj}). In section \ref{sec_fullcomp} we shall briefly comment on how one would have to proceed in this case. }

\emph{We thus emphasize that the entire formalism is (up to the differences in section \ref{sec_fullcomp}) independent of which specific formal construction for the projectors is employed. In particular, the conclusions drawn in this article are valid for both choices of the construction. Furthermore, the explicit construction of section \ref{sec_lin} and the toy model of section \ref{sec_ex} involve Abelian constraints in which case the two choices are equivalent. Without loss of generality of the formal calculations and the qualitative conclusions we shall thus henceforth work with the construction as given in (\ref{proj000}, \ref{proj}).}
\end{R}

${}^+\mathbb{P}_1$ is an improper projector because $\delta({}^+\hat{C}^1_I){}^+\psi^{\rm phys}_1=1/(2\pi\hbar)^{k} \int ds^I_1{}^+\psi^{\rm phys}_1=``\infty" {}^+\psi^{\rm phys}_1$ as a consequence of the non-compact orbits. Hence, $({}^+\mathbb{P}_1)^2$, acting on a kinematical state, leads to a divergence. This is the origin of those divergences in the path integral which are related to gauge symmetry and will become important  further below.

The post--physical states are not normalizable with respect to the KIP because of the integration over the non--compact $\cg^+_1$. The constraint quantization for $0\rightarrow1$, i.e.\ the construction of the pre-- and post--physical Hilbert spaces ${}^-\ch^{\rm phys}_0,{}^+\ch^{\rm phys}_1$, is of course not complete unless we endow the solution spaces to the constraints with a physical inner product normalizing the physical states. (One would also have to complete in norm as well as to divide out spurious solutions and zero physical norm states which, however, we shall ignore in this work.) Group averaging permits us to define the {\it post--physical inner product} (PIP+) between two physical post--states at $n=1$ as follows \cite{Marolf:1995cn,Marolf:2000iq,Rovelli:2004tv,Thiemann:2007zz}
\ba
\left\langle{}^+\psi^{\rm phys}_1\Big|{}^+\phi^{\rm phys}_1\right\rangle_{\rm phys+}&=&\left\langle{}^+\mathbb{P}_1\,\psi^{\rm kin}_1\Big|{}^+\mathbb{P}_1\,\phi^{\rm kin}_1\right\rangle_{\rm phys+}\nn\\
&:=&\left\langle\psi_1^{\rm kin}\Big|{}^+\mathbb{P}_1\,\phi^{\rm kin}_1\right\rangle_{\rm kin},\label{pip}
\ea
where $\langle.|.\rangle_{\rm kin}$ is the KIP of $\ch_1^{\rm kin}:=L^2(\mathbb{R}^N,dx_1)$. As one can check, this PIP+ is indeed defined on ${}^+\ch_1^{\rm phys}$, namely on the equivalence classes of kinematical states, where $\psi^{\rm kin}_1,\tilde{\psi}^{\rm kin}_1\in\ch_1^{\rm kin}$ are equivalent if they yield the same physical post--state, i.e.\ ${}^+\mathbb{P}_1\,\psi^{\rm kin}_1={}^+\mathbb{P}_1\,\tilde{\psi}^{\rm kin}_1$.

It is a typical feature of the group averaging method that it is rather difficult to make any general statements beyond the formal level. However, we may expect that, for a large class of systems, we can formally rewrite the PIP+  in the position representation as
\ba
\int dx_1(\psi^{\rm kin}_1(x_1))^*{}^+\phi^{\rm phys}_1(x_1)=\int d\xi^+_1(x_1)({}^+\psi^{\rm phys}_1(x_1))^*{}^+\phi^{\rm phys}_1(x_1),\label{regpip}
\ea
where $d\xi_1^+$ is some, in general, non-trivially regularized measure which breaks the {\it a priori} free parameter flow of the quantum post--constraints. We shall call $d\xi^+_1$ the {\it post--measure} of the PIP+ and often consider systems for which the PIP+ takes this regularized form. In section \ref{sec_lin}, we shall impose some (rather strong) conditions which will allow us to explicitly determine $d\xi_1^+$ via the Faddeev-Popov trick. These conditions are, in particular, fulfilled by systems governed by quadratic discrete actions \cite{Hoehn:2014aoa}.

The {\it pre--physical states} at $n=0$
\ba
{}^-\psi^{\rm phys}_0(x_0):={}^-\mathbb{P}_0\,\psi^{\rm kin}_0(x_0)\label{prestates}
\ea
and the {\it pre--physical inner product} (PIP--) at $n=0$
\ba
\left\langle{}^-\psi^{\rm phys}_0\Big|{}^-\phi^{\rm phys}_0\right\rangle_{\rm phys-}:=\left\langle\psi_0^{\rm kin}\Big|{}^-\mathbb{P}_0\,\phi^{\rm kin}_0\right\rangle_{\rm kin},\nn
\ea
are constructed in complete analogy. If one considered a further move $1\rightarrow2$, the {\it pre--measure} $d\xi^-_1$ at $n=1$ will generally {\it not} coincide with the post--measure $d\xi_1^+$ such that PIP+ $\neq$ PIP-- at $n=1$. That is, in general, ${}^+\ch^{\rm phys}_1\neq{}^-\ch^{\rm phys}_1$. We shall worry about this issue in section \ref{sec_comp}.

\subsection{Propagators and unitarity}\label{sec_unitary}

In section \ref{sec_reg} we employed the propagator associated to a regular global move $0\rightarrow1$ to construct a bijection between $\ch_0$ and $\ch_1$ (provided, of course, certain conditions on the propagator are satisfied). Clearly, for a constrained move there cannot be a dynamical bijection between $\ch^{\rm kin}_0$ and $\ch^{\rm kin}_1$ on account of the {\it a priori} and {\it a posteriori} free orbits. Instead, we shall employ the propagator in order to define an improper projector from $\ch^{\rm kin}_0$ to ${}^+\ch^{\rm phys}_1$ and, likewise from $\ch^{\rm kin}_1$ to ${}^-\ch^{\rm phys}_0$---much in analogy to (\ref{proj}).

The idea, coming from quantum gravity, is to employ the path integral as a `projector' onto solutions of the quantum constraints \cite{Halliwell:1990qr,Rovelli:1998dx,Noui:2004iy,Thiemann:2007zz,Dittrich:2013xwa,Thiemann:2013lka}. The action $S_1$ of the move $0\rightarrow1$ contains the information about the pre--constraints and their {\it a posteriori free} parameters $\mu_0$ at $n=0$ as well as the post--constraints and their {\it a priori free} parameters $\lambda_1$ at $n=1$. Therefore, heuristically an integration over $e^{iS_1/\hbar}$ contains a `group averaging' over the constraint flows and should thereby implement the quantum constraints. 

Let us make this more precise. For a constrained move $0\rightarrow1$ we make the ansatz
\ba
{}^+\psi^{\rm phys}_1(x_1)=\int dx_0\,K_{0\rightarrow1}(x_1,x_0)\,\psi^{\rm kin}_0(x_0),\label{proj01}
\ea
where $K_{0\rightarrow1}$, containing the factor $e^{iS_1/\hbar}$, is given by (\ref{prop}), as before. For the purpose of unitarity of $0\rightarrow1$, we again require (\ref{inv-prop}). Indeed, the measure $M_{0\rightarrow1}$ allows for enough freedom such that (\ref{proj01}) makes sense. There are now natural consistency conditions on this ansatz. In particular, imposing the post--constraints at $n=1$ in the quantum theory must yield
\ba
{}^+\hat{C}_I^1{}^+\psi^{\rm phys}_1(x_1)=\int dx_0{}^+\hat{C}_I^1K_{0\rightarrow1}(x_1,x_0)\,\psi^{\rm kin}_0(x_0)\overset{!}{=}0.\label{cond3}
\ea
The fact that ${}^+\hat{C}^1_I$ only acts on $K_{0\rightarrow1}$ and (\ref{cond3}) should hold for any (permissible) `initial' kinematical state $\psi^{\rm kin}_0$ entails that the propagator itself should satisfy the quantum post--constraints
\ba
{}^+\hat{C}_I^1K_{0\rightarrow1}(x_1,x_0)\overset{!}{=}0.\label{cond4}
\ea
By analogous arguments, the inverse propagator must be annihilated by the quantum pre--constraints
\ba
{}^-\hat{C}_I^0K_{1\rightarrow0}(x_0,x_1)={}^-\hat{C}_I^0\left(K_{0\rightarrow1}(x_1,x_0)\right)^*\overset{!}{=}0.\label{cond5}
\ea

This suggests to introduce the notion of a {\it kinematical propagator}, $\kappa_{0\rightarrow1}(x_0,x_1)=(\kappa_{1\rightarrow0}(x_0,x_1))^*$, as a function on $\cq_0\times\cq_1$ which does not satisfy any constraints but which is square integrable in the KIPs at both time steps $n=0$ and $n=1$---much in analogy to kinematical quantum states. Using the kinematical propagator, we can thus choose to write the {\it physical propagator} in the form ( ${}^*$ denotes complex---not hermitian---conjugation)
\ba
K_{0\rightarrow1}(x_0,x_1)={}^+\mathbb{P}_1\,({}^-\mathbb{P}_0)^*\,\kappa_{0\rightarrow1}(x_0,x_1).\label{conprop}
\ea
Just as for kinematical states there will exist a pre--orbit at $n=0$ and a post--orbit at $n=1$ for kinematical propagators such that there is no unique choice for the latter. However, the choice out of two kinematical propagators $\kappa_{0\rightarrow1},\kappa'_{0\rightarrow1}$ does not matter, even if
\ba
{}^+\mathbb{P}_1\,\kappa_{0\rightarrow1}(x_0,x_1)\neq{}^+\mathbb{P}_1\,\kappa'_{0\rightarrow1}(x_0,x_1),\label{propnomatch}
\ea 
as long as they are in the same orbit and therefore project to the same physical propagator
\ba
{}^+\mathbb{P}_1\,({}^-\mathbb{P}_0)^*\,\kappa_{0\rightarrow1}(x_0,x_1)={}^+\mathbb{P}_1\,({}^-\mathbb{P}_0)^*\,\kappa'_{0\rightarrow1}(x_0,x_1)=K_{0\rightarrow1}(x_0,x_1).
\ea
On the other hand, there will exist $\kappa_{0\rightarrow1}$ and $\kappa''_{0\rightarrow1}$ in different orbits such that 
\ba
K_{0\rightarrow1}={}^+\mathbb{P}_1\,({}^-\mathbb{P}_0)^*\,\kappa_{0\rightarrow1}(x_0,x_1)\neq{}^+\mathbb{P}_1\,({}^-\mathbb{P}_0)^*\,\kappa''_{0\rightarrow1}(x_0,x_1)=K'_{0\rightarrow1},
\ea
yields two inequivalent physical propagators, both of which satisfy all quantum pre-- and post--constraints. The choice of the orbit for the kinematical propagator (i.e.\ the choice between $\kappa_{0\rightarrow1}$ and $\kappa''_{0\rightarrow1}$) is dynamically restricted by the action $S_1$ of the move $0\rightarrow1$ since the physical propagator should be given in the form (\ref{prop}). Although it should be noted that even this restriction may, in general, not single out a unique physical propagator.

Given a choice of kinematical propagator permits us to write (\ref{proj01}) as an (improper) projection of the PIP-- at $n=0$ with ${}^+\mathbb{P}_1$,
\ba
{}^+\psi^{\rm phys}_1(x_1)={}^+\mathbb{P}_1\left\langle{}^-\mathbb{P}_0\,\kappa_{1\rightarrow0}\Big|\psi^{\rm kin}_0\right\rangle_{\rm kin}.\label{pstatepip}
\ea

Using that equation (\ref{proj01}) is essentially a PIP-- and that the constraints (in the projectors) are self-adjoint with respect to the KIP, entails that we may write
\ba
{}^+\psi^{\rm phys}_1(x_1)&=&\int dx_0\,K_{0\rightarrow1}(x_1,x_0)\,\psi^{\rm kin}_0(x_0)=\int dx_0\left({}^+\mathbb{P}_1\,({}^-\mathbb{P}_0)^*\,\kappa_{0\rightarrow1}(x_0,x_1)\right)\psi^{\rm kin}_0(x_0)\nn\\
&=&\int dx_0\,({}^+\mathbb{P}_1\,\left({}^-\mathbb{P}_0\,\kappa_{1\rightarrow0}(x_0,x_1)\right)^*)\,\psi^{\rm kin}_0(x_0)\nn\\
&=&\int dx_0\left({}^+\mathbb{P}_1\,\kappa_{0\rightarrow1}(x_0,x_1)\right){}^-\mathbb{P}_0\,\psi^{\rm kin}_0(x_0)\nn\\
&\underset{(\ref{prestates})}{=}&\int dx_0\left({}^+\mathbb{P}_1\,\kappa_{0\rightarrow1}(x_0,x_1)\right){}^-\psi^{\rm phys}_0(x_0)\nn\\
&=&\int dx_0\, K_{0\rightarrow1}^{f_+}(x_0,x_1)\,{}^-\psi^{\rm phys}_0(x_0),\label{conprop2}
\ea
where we have defined the {\it pre--fixed propagator}
\ba
K_{0\rightarrow1}^{f_+}(x_0,x_1):={}^+\mathbb{P}_1\,\kappa_{0\rightarrow1}(x_0,x_1).\label{pfprop}
\ea
We call this object {\it pre--fixed} because: (a) it solves the post--constraints at $n=1$, however, does not solve the pre--constraints at $n=0$, and (b) equation (\ref{conprop2}) yields a (formally) non-divergent map from ${}^-\ch^{\rm phys}_0$ to ${}^+\ch^{\rm phys}_1$---in contrast to
\ba\label{diverge}
\int dx_0\,K_{0\rightarrow1}{}^-\psi^{\rm phys}_0=\int dx_0{}^+\mathbb{P}_1\kappa_{0\rightarrow1}({}^-\mathbb{P}_0)^2\,\psi^{\rm kin}_0
\ea
which diverges because of the term $({}^-\mathbb{P}_0)^2$ or, equivalently, because of the integration over the non-compact $\cg^-_0$. $K_{0\rightarrow1}^{f_+}$ therefore regularizes or fixes the {\it a posteriori} free flow at $n=0$. By analogy, the {\it post--fixed propagator} is defined as $K_{0\rightarrow1}^{f_-}=({}^-\mathbb{P}_0)^*\,\kappa_{0\rightarrow1}(x_0,x_1)$. Notice the following relations between the pre-- and post--fixed propagator and reverse propagator
\ba
K^{f_+}_{1\rightarrow0}=\left(K^{f_-}_{0\rightarrow1}\right)^*,\q\q\q\q K^{f_-}_{1\rightarrow0}=\left(K^{f_+}_{0\rightarrow1}\right)^*.
\ea

Since (\ref{pstatepip}, \ref{conprop2}) is essentially a PIP$-$, we can expect for a large class of systems to be able to rewrite it in the same vein as (\ref{regpip}) as a regularized KIP
\ba\label{regprop}
{}^+\psi^{\rm phys}_1(x_1)&=&\int d\xi^-_0(x_0)\, K_{0\rightarrow1}(x_0,x_1){}^-\psi^{\rm phys}_0(x_0),
\ea
with regularized {\it pre--measure} $d\xi_0^-$. Again, below in section \ref{sec_lin}, we shall make this explicit under certain conditions.

In analogy to the regular global moves in section \ref{sec_reg}, we ought to require the fixed propagators to define a bijection between ${}^-\ch^{\rm phys}_0$ and ${}^+\ch^{\rm phys}_1$. Namely, in analogy to (\ref{cond1}, \ref{cond2}), we demand invertibility of the \underline{fixed} propagators,\footnote{Recall that the fixed propagators either satisfy pre-- or post--constraints but not both. Hence, the `fixed' expression $K^{f_+}_{0\rightarrow1}(K^{f_-}_{0\rightarrow1})^*$ in (\ref{coninvert}) will have a dependence on both {\it a priori} and {\it a posteriori free} parameters. In this way, one can produce the delta functions of also the free parameters on the right hand sides. For example, in the toy model of section \ref{sec_ex}, we shall see explicitly how the delta functions of free parameters arise via gauge fixing conditions in the fixed propagators.} 
\ba\label{coninvert}
\int\,dx_0\,K^{f_+}_{0\rightarrow1}(x_1,x_0)\left(K^{f_-}_{0\rightarrow1}(x'_1,x_0)\right)^*&=&\delta^{(N)}(x'_1-x_1),\nn\\
\int dx_1\left(K^{f_-}_{0\rightarrow1}(x_1,x_0)\right)^*K^{f_+}_{0\rightarrow1}(x_1,x'_0)&=&\delta^{(N)}(x'_0-x_0).
\ea 
(The analogous condition for $K_{0\rightarrow1}$ would necessarily diverge like (\ref{diverge}) due to the appearance of $({}^-\mathbb{P}_0)^2$ and $({}^+\mathbb{P}_1)^2$.) Notice that the invertibility conditions (\ref{coninvert}) can also be written in terms of (projected) PIPs (recall that $\kappa_{0\rightarrow1}$ is square integrable in $x_0$ and $x_1$):
\ba
{}^+\mathbb{P}_1\,\left\langle\kappa_{1\rightarrow0}(x'_1,x_0)\Big|{}^-\mathbb{P}_0\,\kappa_{1\rightarrow0}(x_0,x_1)\right\rangle_{\rm kin}&=&\delta^{(N)}(x'_1-x_1),\nn\\
{}^-\mathbb{P}_0\,\left\langle\kappa_{0\rightarrow1}(x_1,x_0)\Big|{}^+\mathbb{P}_1\,\kappa_{0\rightarrow1}(x'_0,x_1)\right\rangle_{\rm kin}&=&\delta^{(N)}(x'_0-x_0).
\ea

The situation in the quantum theory therefore bears some analogy to the classical situation: in the classical formalism, the Hamiltonian time evolution map from $T^*\cq_0$ to $T^*\cq_1$ (i) is defined only on and between the pre-- and post--constraint surfaces and (ii) is invertible only on the pre-- and post--constraint surfaces modulo the {\it a priori} and {\it a posteriori} free orbits, respectively. That is, it is invertible only on the `physical' space of propagating degrees of freedom for the move $0\rightarrow1$ \cite{Dittrich:2011ke,Dittrich:2013jaa}. In the quantum theory, the propagator assumes the role of the Hamiltonian time evolution map. As just seen, it is only invertible on and between ${}^-\ch^{\rm phys}_0$ and ${}^+\ch^{\rm phys}_1$.

In conclusion, in the quantum theory we find a whole set of maps and projectors for a constrained evolution move $0\rightarrow1$. For easy reference and completeness, we shall list them here:
\begin{itemize}
\item ${}^+\mathbb{P}_1:=\prod_I\delta({}^+\hat{C}^1_I)$ is an improper projector from $\ch^{\rm kin}_1$ to ${}^+\ch^{\rm phys}_1$,
\item ${}^-\mathbb{P}_0:=\prod_I\delta({}^-\hat{C}^0_I)$ is an improper projector from $\ch^{\rm kin}_0$ to ${}^-\ch^{\rm phys}_0$,
\item $P_{0\rightarrow1}:=\int dx_0\,K_{0\rightarrow1}$ is an improper projector from $\ch^{\rm kin}_0$ to ${}^+\ch^{\rm phys}_1$,
\item $P_{1\rightarrow0}:=\int dx_1\,K_{1\rightarrow0}$ is an improper projector from $\ch^{\rm kin}_1$ to ${}^-\ch^{\rm phys}_0$,
\item $U_{0\rightarrow1}:=\int dx_0\,K^{f_+}_{0\rightarrow1}$ is an invertible map from ${}^-\ch^{\rm phys}_0$ to ${}^+\ch^{\rm phys}_1$, with
\item $U_{1\rightarrow0}:=\int dx_1\,K^{f_+}_{1\rightarrow0}$ being the inverse map from ${}^+\ch^{\rm phys}_1$ to ${}^-\ch^{\rm phys}_0$.
\end{itemize}
These maps and projectors can be diagrammatically summarized as follows:
\begin{diagram}
\ch^{\rm kin}_0&&&&\ch^{\rm kin}_1\\
&\rdLine~{P_{0\rightarrow1}}&&\ldLine~{P_{1\rightarrow0}}&\\
\dTo^{{}^-\mathbb{P}_0}&&&&\dTo_{{}^+\mathbb{P}_1}\\
&\ldTo&&\rdTo&\\
{}^-\ch^{\rm phys}_0&&\rTo_{U_{0\rightarrow1}}&&{}^+\ch^{\rm phys}_1.
\end{diagram}

Lastly, equation (\ref{proj01}) immediately implies unitarity of the constrained move $0\rightarrow1$, expressed in terms of the PIP+ at $n=1$ and the PIP-- at $n=0$:
\ba
\langle{}^+\psi^{\rm phys}_1\Big|{}^+\phi^{\rm phys}_1\rangle_{\rm phys+}&=&\int dx_1(\psi^{\rm kin}_1(x_1))^*{}^+\phi^{\rm phys}_1(x_1)\nn\\
&=&\int dx_1(\psi^{\rm kin}_1(x_1))^*\int dx_0\,K_{0\rightarrow1}(x_0,x_1)\,\phi^{\rm kin}_0(x_0)\nn\\
&=&\int dx_0\left(\int dx_1K_{1\rightarrow0}(x_0,x_1)\,\psi^{\rm kin}_1(x_1)\right)^*\phi^{\rm kin}_0(x_0)\nn\\
&=&\int dx_0({}^-\psi^{\rm phys}_0(x_0))^*\phi^{\rm kin}_0(x_0)\nn\\
&=&\langle{}^-\psi^{\rm phys}_0\Big|{}^-\phi^{\rm phys}_0\rangle_{\rm phys-}.\label{unit}
\ea
The same result can similarly be shown in the other direction or by means of the pre-- and post--fixed propagators. 

The transition amplitudes for a constrained move $0\rightarrow1$ thus read
\ba
\langle{}^+\psi^{\rm phys}_1\big|\,U_{0\rightarrow1}\,{}^-\phi^{\rm phys}_0\rangle_{\rm phys+}=\int dx_1\,dx_0\,(\psi^{\rm kin}_1)^*\,K^{f_+}_{0\rightarrow1}\,{}^-\phi^{\rm phys}_0.\label{trans}
\ea

We shall find the formal constructions of this section explicitly realized for a class of actions in section \ref{sec_lin} and in the toy model in section \ref{sec_ex}.

\subsection{Explicit construction for systems with constraints linear in the momenta}\label{sec_lin}


In this section, we shall render some of the formal constructions of the previous section explicit---for systems subject to constraints linear in the momenta. Clearly, linearity in the momenta is a rather strong restriction. However, this restriction permits us to derive concrete results beyond the formal level, which generally is notoriously difficult with group averaging techniques. Furthermore, the following results will be important for
\begin{itemize}
\item {\it Systems with temporally varying discretization:} Specifically, the following results will be relevant for cylindrical consistency and evolving Hilbert spaces in section \ref{sec_evol} below. 
\item {\it Local time evolution moves:} These are discussed in the companion paper \cite{Hoehn:2014wwa}.
\item {\it Quadratic discrete actions:} Constraints resulting from quadratic discrete actions are necessarily linear in the momenta. This is studied in detail in \cite{Hoehn:2014aoa}. 
\end{itemize}

Classically, constraints linear in the momenta must be of the form (for notational clarity, we shall omit time step indices and the distinction between pre-- and post--objects for part of this section)
\ba
{C}_I=f_{Ii}({x})\,{p}_i-V_I({x}),\nn
\ea
where $f_{Ii}$, $I=1,\ldots,k$, $i=1,\ldots,\dim\cq$ are coefficient functions of a real matrix. With a suitable variable transformation, we can find new canonical pairs $(\lambda^I(x^i),\pi_I:=f_{Ii}({x})\,{p}_i)$, $I=1,\ldots,k$, and $(x^\alpha(x^i),\pi_\alpha(x^i,p_i))$, $\alpha=1,\ldots,\dim\cq-k$.\footnote{To this end, suitably complete the $(k\times\dim\cq)$ matrix $f_{Ii}$ to an invertible $(\dim\cq\times\dim\cq)$ matrix $f_{\Gamma i}$, where $\Gamma=1,\ldots,\dim\cq$ is split into two index sets $I=1,\ldots,k$ and $\alpha=1,\ldots,\dim\cq-k$. Then integrate the equations $\f{\p\lambda^I(x^i)}{\p x^j}=(f^{-1})_{Ij}(x^i)$ and $\f{\p x^\alpha(x^i)}{\p x^j}=(f^{-1})_{\alpha j}(x^{i})$. The conjugate momentum to $x^\alpha$ is $\pi_\alpha=f_{\alpha i}(x^i)p_i$.} Setting $V_I(\lambda^I,x^\alpha)=\f{\p S(\lambda^J,x^\alpha)}{\p \lambda^I}$, one can put the constraints into the form
\ba
{C}_I={p}_I-\f{\p S(\lambda^J,x^\alpha)}{\p \lambda^I}.\label{lemconform}
\ea
Without loss of generality, we therefore assume the constraints to be of the simpler form (\ref{lemconform}). Constraints of this form are abelian and admit global conditions $G_K(\lambda^I,x^\alpha)=0$ on the configuration variables alone which fix the flow of the constraints. We shall make use of such parameter fixing conditions in order to regularize otherwise divergent quantities.

\begin{lem}\label{lem1}
Consider a set of $k$ constraints $\hat{C}_I=\hat{p}_I-\f{\p S(x^J,x^\alpha)}{\p x^I}$ and $k$ global gauge fixing conditions $G_K(\lambda^I,x^\alpha)=0$ on the configuration variables. The following identity holds
\ba
(2\pi)^k\prod_{I=1}^k\delta(\hat{C}_I)\Big|\det\left([\hat{G}_M,\hat{C}_N]\right)\Big|\prod_{K=1}^k\delta(\hat{G}_K)\prod_{J=1}^k\delta(\hat{C}_J)\,\psi^{\rm kin}(x)=\psi^{\rm phys}(x).
\ea
\end{lem}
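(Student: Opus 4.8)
The plan is to exploit the fact that, in the Darboux coordinates $(\lambda^I,x^\alpha)$ with conjugate momenta $(\hat p_I,\hat\pi_\alpha)$, the constraints $\hat C_I=\hat p_I-\partial S/\partial\lambda^I$ have a particularly simple action. First I would diagonalize the problem by absorbing the generating function $S$: introduce the unitary multiplication operator $W:=e^{iS(\lambda,x^\alpha)/\hbar}$ on $\ch^{\rm kin}=L^2(\mathbb{R}^{\dim\cq},d\lambda\,dx^\alpha)$ and note that $W^{-1}\hat C_I W=\hat p_I=-i\hbar\,\partial/\partial\lambda^I$. Consequently $W^{-1}\delta(\hat C_I)W=\delta(\hat p_I)$, and $\delta(\hat p_I)$ is simply the operator that, in the $\lambda$-representation, integrates out (averages over) the variable $\lambda^I$; more precisely $\prod_I\delta(\hat p_I)$ projects onto functions independent of all the $\lambda^I$, acting as $(\prod_I\delta(\hat p_I)\varphi)(\lambda,x^\alpha)=\frac{1}{(2\pi\hbar)^k}\int d^k s\,\varphi(\lambda+s,x^\alpha)$ up to the usual normalization conventions for $\delta(\hat C)=\frac1{2\pi\hbar}\int ds\,e^{is\hat C/\hbar}$. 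So the whole identity is conjugation-equivalent to a statement purely about $\hat p_I$, $\hat G_K$ and the commutator $[\hat G_M,\hat C_N]$, where under conjugation $W^{-1}[\hat G_M,\hat C_N]W=[\hat G_M,\hat p_N]=i\hbar\,\partial G_M/\partial\lambda^N$ (since $\hat G_M$ is a multiplication operator). Thus the Jacobian factor $\det([\hat G_M,\hat C_N])$ becomes, after conjugation, the multiplication operator $\det\big(i\hbar\,\partial G_M/\partial\lambda^N\big)$, which is exactly the Faddeev--Popov determinant for the gauge-fixing conditions $G_K=0$ relative to the flow generated by $\hat p_N$ (translations in $\lambda^N$).

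Second, I would then verify the core Faddeev--Popov--type identity in these simplified variables: for fixed $x^\alpha$, the map $\lambda\mapsto G(\lambda,x^\alpha)$ has (by the assumption that $G_K=0$ is a good global gauge fixing, i.e.\ assumptions (3)--(4) of the paper ensuring no Gribov problem and a unique intersection of each orbit with the gauge slice) a unique zero $\lambda=\bar\lambda(x^\alpha)$ with $\det(\partial G_M/\partial\lambda^N)\neq 0$ there. The standard change-of-variables computation then gives
\begin{equation}
(2\pi\hbar)^k\,\Big|\det\big(\tfrac{\partial G_M}{\partial\lambda^N}\big)\Big|\,\prod_K\delta\big(G_K(\lambda,x^\alpha)\big)\;\Big(\prod_J\delta(\hat p_J)\,\varphi\Big)(\lambda,x^\alpha)=\Big(\prod_J\delta(\hat p_J)\,\varphi\Big)(\lambda,x^\alpha),\nn
\end{equation}
because $\prod_J\delta(\hat p_J)\varphi$ is already $\lambda$-independent, so multiplying by $\delta(G_K)$ localizes at $\lambda=\bar\lambda(x^\alpha)$, the determinant factor is precisely what is needed to undo that localization ($\int d^k\lambda\,\delta^{(k)}(G(\lambda,x^\alpha))|\det\partial G/\partial\lambda|=1$), and the remaining $(2\pi\hbar)^k\prod_I\delta(\hat p_I)$ on the far left is the identity on $\lambda$-independent functions (with the $\delta(\hat C)$ normalization convention $\delta(\hat p)$ acting on constants returns $\frac1{2\pi\hbar}\cdot 2\pi\hbar=1$ — I would be careful to track whether the paper's $(2\pi)^k$ should be $(2\pi\hbar)^k$ and state the convention explicitly). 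Conjugating back by $W$ and recalling $W^{-1}\delta(\hat C_I)W=\delta(\hat p_I)$ and $W^{-1}\hat G_K W=\hat G_K$ (multiplication operators are unaffected) converts this into the claimed identity with $\psi^{\rm phys}=W\big(\prod_J\delta(\hat p_J)\big)W^{-1}\psi^{\rm kin}=\prod_J\delta(\hat C_J)\psi^{\rm kin}$, which is exactly the group-averaging projection.

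Third, I would note the role of the $\det([\hat G_M,\hat C_N])$ being placed between the two projector strings: since after conjugation it becomes a multiplication operator $\det(i\hbar\,\partial G/\partial\lambda)$, and it is then immediately hit on the right by $\prod_K\delta(G_K)$ which localizes to $\lambda=\bar\lambda(x^\alpha)$, there is no operator-ordering subtlety between the determinant and the delta functions of $\hat G_K$ (both diagonal in the $\lambda$-representation once we have conjugated). The only genuine care is the leftmost $\prod_I\delta(\hat C_I)$, which does not commute with the gauge-fixing insertions; but acting on the output of $\delta(\hat G_K)\prod_J\delta(\hat C_J)$, which after conjugation is a $\lambda$-independent function times $\delta^{(k)}(\lambda-\bar\lambda(x^\alpha))$ — no wait, one must check that $\delta(\hat p_I)$ applied to $f(x^\alpha)\,\delta^{(k)}(\lambda-\bar\lambda(x^\alpha))$ reproduces (up to the determinant and $(2\pi\hbar)^k$) the $\lambda$-independent function $f(x^\alpha)$, which is the computation just described.

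\textbf{Main obstacle.} The hard part will be the careful handling of the ordering and of the assumption that $G_K=0$ is a \emph{global} good gauge fixing: the change-of-variables argument is clean only if each $\hat p_I$-orbit (translation orbit in $\lambda^I$) meets the surface $\{G_K=0\}$ exactly once and transversally, which is precisely assumptions (3)--(4); I would invoke these explicitly rather than re-derive them. A secondary subtlety is purely bookkeeping — getting the powers of $2\pi$ versus $2\pi\hbar$ and the normalization of $\delta(\hat C)$ consistent with (\ref{proj000}), so that the right-hand side is literally $\psi^{\rm phys}=\prod_J\delta(\hat C_J)\psi^{\rm kin}={}^{}\mathbb{P}\,\psi^{\rm kin}$ with the paper's conventions — but this is routine once the conjugation-by-$W$ reduction is in place.
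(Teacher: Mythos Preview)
Your proposal is correct and follows essentially the same strategy as the paper's own proof: both arguments reduce the problem via conjugation by $e^{iS/\hbar}$ (your $W$) to the case of pure momentum constraints $\hat p_I$, then invoke the standard Faddeev--Popov change-of-variables identity $\int d^k\lambda\,|\det(\partial G/\partial\lambda)|\,\delta^{(k)}(G)=1$ together with the fact that $\prod_J\delta(\hat p_J)\varphi$ is $\lambda$-independent. Your worry about $(2\pi)^k$ versus $(2\pi\hbar)^k$ resolves once you note that $|\det([\hat G_M,\hat C_N])|=\hbar^k\,|\det(\partial G_M/\partial\lambda^N)|$ already carries the missing $\hbar^k$, so the paper's $(2\pi)^k$ is consistent with the convention $\delta(\hat C)=(2\pi\hbar)^{-1}\!\int ds\,e^{is\hat C/\hbar}$.
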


\begin{proof}
The proof is provided in appendix \ref{app}.
\end{proof}

For this special case, no factor ordering ambiguities occur. The identity may, however, also hold for a larger class of system for which $[\hat{G}_K,\hat{C}_I]$ is a function of the configuration variables only.

Hence, $\psi^{\rm kin}$ and $\tilde{\psi}^{\rm kin}:=\big|\det([\hat{G}_K,\hat{C}_I])\big|\prod_K\delta(\hat{G}_K)\,\psi^{\rm phys}$ are elements of the same orbit generated by the $\hat{C}_I$ and are mapped to the same $\psi^{\rm phys}\in\ch^{\rm phys}$ under $\prod_I\delta(\hat{C}_I)$.

Thanks to lemma \ref{lem1}, we can now explicitly write the PIP (\ref{pip}) purely in terms of physical states, i.e.\ in the form (\ref{regpip}).
\begin{lem}\label{lem2}
Let the conditions of Lemma \ref{lem1} be true. In the position representation, the PIP (\ref{pip}) then takes the form
\ba
\left\langle\psi^{\rm phys}\Big|\phi^{\rm phys}\right\rangle_{\rm phys}&=&\int_{\cq}d\xi(\lambda^I,x^\alpha)\,\left(\psi^{\rm phys}(\lambda^I,x^\alpha)\right)^*\phi^{\rm phys}(\lambda^I,x^\alpha)\nn\\
&=&(2\pi\hbar)^k\int_{\cq/\cg}\prod_\alpha dx_\alpha \,\left(\psi^{\rm phys}(\lambda^I,x^\alpha)\right)^*\phi^{\rm phys}(\lambda^I,x^\alpha),\label{lem2eq}
\ea
with regularized Faddeev-Popov-measure
\ba
d\xi(\lambda^I,x^\alpha):=(2\pi)^k\prod_{I,\alpha}d\lambda^{I}dx^\alpha\Big|\det\left([\hat{G}_K,\hat{C}_I]\right)\Big|\prod_{K=1}^k\delta(\hat{G}_K(\lambda^I,x^\alpha)).
\ea
The PIP is independent of the particular choice of gauge and the gauge conditions $G_K(\lambda^I,x^\alpha)=0$ employed to impose it.
\end{lem}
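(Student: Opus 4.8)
The plan is to derive Lemma \ref{lem2} directly from Lemma \ref{lem1} by a short algebraic manipulation, exploiting that the physical inner product is, by definition \eqref{pip}, a kinematical inner product with one state replaced by its group average. First I would write, for physical states ${}^{\phys}\psi = \prod_I\delta(\hat C_I)\,\psi^{\rm kin}$ and ${}^{\phys}\phi = \prod_I\delta(\hat C_I)\,\phi^{\rm kin}$,
\begin{equation}
\left\langle\psi^{\rm phys}\big|\phi^{\rm phys}\right\rangle_{\rm phys} := \left\langle\psi^{\rm kin}\big|\prod_I\delta(\hat C_I)\,\phi^{\rm kin}\right\rangle_{\rm kin} = \int_\cq dx\,\big(\psi^{\rm kin}(x)\big)^*\,\phi^{\rm phys}(x).\nonumber
\end{equation}
The goal is to replace $\psi^{\rm kin}$ by $\psi^{\rm phys}$ at the cost of inserting the Faddeev--Popov factor. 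To do this I would use Lemma \ref{lem1} to rewrite the single factor $\prod_I\delta(\hat C_I)$ acting on $\phi^{\rm kin}$: since the lemma states $(2\pi)^k\prod_I\delta(\hat C_I)\,|\det[\hat G_M,\hat C_N]|\prod_K\delta(\hat G_K)\prod_J\delta(\hat C_J) = \prod_I\delta(\hat C_I)$ as operators on kinematical states (an immediate consequence of the identity in the form stated, applied to $\psi^{\rm kin}=\phi^{\rm kin}$), the physical inner product becomes
\begin{equation}
(2\pi)^k\left\langle\psi^{\rm kin}\Big|\,\prod_I\delta(\hat C_I)\,\big|\det[\hat G_M,\hat C_N]\big|\,\prod_K\delta(\hat G_K)\,\prod_J\delta(\hat C_J)\,\phi^{\rm kin}\Big.\right\rangle_{\rm kin}.\nonumber
\end{equation}

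Next I would move the leftmost $\prod_I\delta(\hat C_I)$ onto $\psi^{\rm kin}$ using self-adjointness of the constraints (hence of $\delta(\hat C_I)$) with respect to the KIP; this turns $\psi^{\rm kin}$ into $\psi^{\rm phys}$. The remaining operator $|\det[\hat G_M,\hat C_N]|\prod_K\delta(\hat G_K)$ is, by the hypothesis inherited from Lemma \ref{lem1}, a multiplication operator by a function of the configuration variables alone (this is the crucial point: $[\hat G_M,\hat C_N]$ depends only on $x$ because the $\hat C_I$ are linear in the momenta and the $\hat G_K$ depend only on $x$). Therefore it commutes past the last $\prod_J\delta(\hat C_J)$, which then acts on $\phi^{\rm kin}$ to produce $\phi^{\rm phys}$. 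Collecting terms, the physical inner product becomes the kinematical integral $(2\pi)^k\int_\cq dx\,(\psi^{\rm phys}(x))^*\,|\det[\hat G_M,\hat C_N]|\prod_K\delta(\hat G_K(x))\,\phi^{\rm phys}(x)$, which is exactly the claimed formula with $d\xi$ as defined (absorbing the $\hbar$ factors from the $\delta(\hat G_K)$ normalization, which is why the second line of \eqref{lem2eq} carries $(2\pi\hbar)^k$). The reduction to an integral over $\cq/\cg$ follows by using the $k$ delta functions $\delta(\hat G_K)$ to do the $\lambda^I$ integrations, leaving $\prod_\alpha dx_\alpha$ over the gauge slice.

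Finally, gauge independence: I would argue that if $G'_K$ is a second admissible set of gauge conditions, then inserting the Lemma \ref{lem1} identity once with $G_K$ and once with $G'_K$ and comparing shows the two resulting measures give the same pairing on physical states; more directly, one writes the difference of the two measures applied to $\psi^{\rm phys},\phi^{\rm phys}$ and uses that both $|\det[\hat G,\hat C]|\prod\delta(\hat G_K)$ and its primed analogue act as the identity modulo the constraint flow (Lemma \ref{lem1} again), so that their insertion between $\prod\delta(\hat C_I)$-projected states is unambiguous. The main obstacle I anticipate is not the algebra but the operator-ordering and domain bookkeeping: one must be careful that $|\det[\hat G_M,\hat C_N]|$ really is a configuration-space multiplication operator (so that the manipulations of moving it past the $\delta(\hat C_J)$'s and past the inner-product bracket are legitimate) and that the group-averaging integrals converge in the relevant regularized sense; both of these are exactly the hypotheses imported from Lemma \ref{lem1}, so the honest content of the proof is checking that Lemma \ref{lem1} applies verbatim and that no new ordering ambiguity is introduced when one of the $\prod\delta(\hat C_I)$ is transferred across the KIP. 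The detailed verification is the content of appendix \ref{app} referenced for Lemma \ref{lem1}; here the argument is essentially a two-line rearrangement once that input is granted.
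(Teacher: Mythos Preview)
Your proposal is essentially the paper's proof and is correct in structure: write the PIP as $\langle\psi^{\rm kin}|\prod_I\delta(\hat C_I)\phi^{\rm kin}\rangle_{\rm kin}$, use Lemma~\ref{lem1} to insert the Faddeev--Popov string, transfer the leftmost $\prod_I\delta(\hat C_I)$ onto $\psi^{\rm kin}$ by self-adjointness, and then read off the measure because the remaining factor is a configuration-space multiplication operator.

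One step in your description is misleading and should be removed. You write that the multiplication operator $|\det[\hat G_M,\hat C_N]|\prod_K\delta(\hat G_K)$ ``commutes past the last $\prod_J\delta(\hat C_J)$''. This is false: a nontrivial function of $x$ does not commute with $\delta(\hat C_J)$, since acting with such a function on a physical state generally produces a non-physical state. Fortunately no commutation is needed. After moving the leftmost projector to the bra, the ordering is already
\[
(2\pi)^k\int dx\,\big(\psi^{\rm phys}(x)\big)^*\,\big|\det[\hat G_M,\hat C_N]\big|\,\prod_K\delta(\hat G_K)\,\underbrace{\prod_J\delta(\hat C_J)\,\phi^{\rm kin}}_{=\,\phi^{\rm phys}},
\]
and since the determinant-times-delta factor is pure multiplication in $x$, it simply becomes part of the integration measure. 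The paper's proof makes exactly this move (with the observation that $[\hat G_K,\hat C_I]$ depends only on configuration variables, so the factor can be pulled out of the bracket), without any commutation claim. Your gauge-independence argument is acceptable but less sharp than the paper's, which notes explicitly that $(\psi^{\rm phys})^*\phi^{\rm phys}$ is $\lambda^I$-independent (from the explicit form of physical states in the proof of Lemma~\ref{lem1}) and that the Jacobian determinant renders the delta-function insertion independent of the choice of $G_K$.
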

\begin{proof}
The proof is given in appendix \ref{app}.
\end{proof}

One may be surprised about the normalization factor $(2\pi\hbar)^k$ in (\ref{lem2eq}). Ultimately, it arises because the PIP in its original form (\ref{pip}) only involves one action of $\prod_K\delta(\hat{C}_K)$, which contains a normalization factor $(2\pi\hbar)^{-k}$ from (\ref{proj}). On the other hand, the PIP in its regularized form (\ref{lem2eq}) contains two actions of $\prod_K\delta(\hat{C}_K)$. Clearly, this normalization factor can be absorbed into the states, however, we keep it here for clarity.

In this case---and thus specifically for quadratic discrete actions---the PIP obtained via group-averaging coincides with the PIP constructed as a Faddeev-Popov-regularized KIP applied to physical states (the latter was also discussed in \cite{Henneaux:1992ig}). The Faddeev-Popov-determinant reads $\Delta_{FP}=\big|\det([\hat{G}_K,\hat{C}_I]/\hbar)\big|$.

In the same fashion, we can now explicitly write the invertible map from ${}^-\ch^{\rm phys}_0$ to ${}^+\ch^{\rm phys}_1$ (\ref{conprop2}, \ref{regprop}) in terms of a Faddeev-Popov regularized KIP:\footnote{We return to employing indices for the different time steps and a distinction between post-- and pre--objects. Recall that the {\it a posteriori} free parameters at $n=0$ are denoted by $\mu_0^I$, while the {\it a priori} free parameters at $n=1$ are denoted by $\lambda^I_1$.}

\begin{lem}\label{lem3}
Let the conditions of Lemma \ref{lem1} be true for a global move $0\rightarrow1$. In the position representation, the invertible map $U_{0\rightarrow1}:{}^-\ch^{\rm phys}_0\rightarrow{}^+\ch^{\rm phys}_1$ is given by
\ba
{}^+\psi^{\rm phys}_1(\lambda_1^J,x^\beta_1)
&=&\int_{\cq_0} d\xi^-_0(\mu^I_0,x^\alpha_0)\, K_{0\rightarrow1}(\mu_0^I,x_0^\alpha;\lambda_1^J,x^\beta_1){}^-\psi^{\rm phys}_0(\mu_0^I,x_0^\alpha)\nn\\
&=&(2\pi\hbar)^{k_-}\int_{\cq_0/\cg^-_0}\prod_\alpha dx_\alpha \,K_{0\rightarrow1}(\mu_0^I,x_0^\alpha;\lambda_1^J,x^\beta_1){}^-\psi^{\rm phys}_0(\mu_0^I,x_0^\alpha),\nn
\ea
where the Faddeev-Popov regularized pre--measure reads
\ba
d\xi^-_0(\mu_0^I,x_0^\alpha):=(2\pi)^{k_-}\prod_{I,\alpha}d\mu_0^{I}\,dx_0^\alpha\Big|\det\left([{}^-\hat{G}^0_K,{}^-\hat{C}^0_I]\right)\Big|\prod_{K=1}^{k_-}\delta({}^-\hat{G}^0_K(\mu_0^I,x_0^\alpha)).\nn
\ea
$U_{0\rightarrow1}$ is independent of the \emph{a posteriori} free $\mu^I_0$ and the particular choice of the conditions ${}^-G_K(\mu_0^I,x^\alpha_0)=0$ fixing the \emph{a posteriori} free orbit $\cg^-_0$.
\end{lem}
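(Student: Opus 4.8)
The plan is to combine the abelian parameter-fixing identity of Lemma~\ref{lem1} with the structure of the physical propagator \eqref{conprop} and the regularization argument already used for the PIP$-$ in Lemma~\ref{lem2}, applied now to the pre--objects at $n=0$. First I would recall that by \eqref{conprop2}--\eqref{pfprop} the map $U_{0\rightarrow1}$ acts on ${}^-\psi^{\rm phys}_0$ as $\int dx_0\,K^{f_+}_{0\rightarrow1}(x_0,x_1)\,{}^-\psi^{\rm phys}_0(x_0)$, where $K^{f_+}_{0\rightarrow1}={}^+\mathbb{P}_1\,\kappa_{0\rightarrow1}$. The formal expression \eqref{diverge} that one would naively write, $\int dx_0\,K_{0\rightarrow1}\,{}^-\psi^{\rm phys}_0=\int dx_0\,{}^+\mathbb{P}_1\,\kappa_{0\rightarrow1}\,({}^-\mathbb{P}_0)^2\,\psi^{\rm kin}_0$, diverges precisely because of the extra factor $({}^-\mathbb{P}_0)$ coming from applying ${}^-\mathbb{P}_0$ twice (once inside $K_{0\rightarrow1}$ via the $({}^-\mathbb{P}_0)^*$ factor, once to produce ${}^-\psi^{\rm phys}_0$ from $\psi^{\rm kin}_0$). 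Under the hypotheses of Lemma~\ref{lem1}, applied to the pre--constraints ${}^-\hat{C}^0_I={}^-\hat{p}^0_I-\f{\p S}{\p\mu^I_0}$ and the gauge-fixing conditions ${}^-G^0_K(\mu^I_0,x^\alpha_0)=0$, the identity $(2\pi)^{k_-}\prod_I\delta({}^-\hat{C}^0_I)\,|\det([{}^-\hat{G}^0_M,{}^-\hat{C}^0_N])|\,\prod_K\delta({}^-\hat{G}^0_K)\,\prod_J\delta({}^-\hat{C}^0_J)\,\psi^{\rm kin}_0=\psi^{\rm phys}_0$ lets me trade one of the two copies of $({}^-\mathbb{P}_0)$ for the Faddeev--Popov insertion $|\det([{}^-\hat{G}^0_K,{}^-\hat{C}^0_I])|\,\prod_K\delta({}^-\hat{G}^0_K)$ (up to the normalization $(2\pi)^{k_-}$), thereby collapsing $({}^-\mathbb{P}_0)^2$ to a single gauge-fixed $({}^-\mathbb{P}_0)$ dressed with the FP factor.

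Concretely, I would write
\ba
{}^+\psi^{\rm phys}_1(x_1)&=&\int dx_0\,\left({}^+\mathbb{P}_1\,\kappa_{0\rightarrow1}(x_0,x_1)\right)\,{}^-\psi^{\rm phys}_0(x_0)\nn\\
&=&\int dx_0\,\left({}^+\mathbb{P}_1\,\kappa_{0\rightarrow1}(x_0,x_1)\right)\,{}^-\mathbb{P}_0\,\psi^{\rm kin}_0(x_0),\nn
\ea
and then insert the identity of Lemma~\ref{lem1} for the pre--objects into the slot where a second ${}^-\mathbb{P}_0$ would otherwise appear when reconstituting the full projector $P_{0\rightarrow1}=\int dx_0\,K_{0\rightarrow1}={}^+\mathbb{P}_1\,({}^-\mathbb{P}_0)^*\int dx_0\,\kappa_{0\rightarrow1}$ acting on $\psi^{\rm kin}_0$. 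After changing to the adapted canonical coordinates $(\mu^I_0,x^\alpha_0)$ in which the pre--constraints are abelian translations in the $\mu^I_0$, the delta functions $\prod_K\delta({}^-\hat{G}^0_K)$ localize the $\mu^I_0$-integration onto the gauge slice $\cg^-_0$, leaving $\int_{\cq_0/\cg^-_0}\prod_\alpha dx^\alpha_0$; the normalization bookkeeping (one $(2\pi\hbar)^{-k_-}$ from the surviving ${}^+\mathbb{P}_1$-unrelated copy of the pre--projector in \eqref{proj} versus the $(2\pi)^{k_-}$ from Lemma~\ref{lem1}) produces the overall $(2\pi\hbar)^{k_-}$, exactly as in Lemma~\ref{lem2}. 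This gives the claimed two equivalent forms of $U_{0\rightarrow1}$ with $d\xi^-_0$ as stated.

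For gauge-independence, I would argue exactly as in the proof of Lemma~\ref{lem2}: if ${}^-G'^0_K=0$ is a second admissible gauge-fixing for $\cg^-_0$, then the difference of the two regularized measures, when integrated against ${}^-\psi^{\rm phys}_0$ (which is constant along the $\mu^I_0$-orbits by construction), is governed by the standard Faddeev--Popov argument — one inserts $1=\int \prod_I d\mu^I_0\,|\det([{}^-\hat{G}'^0_K,{}^-\hat{C}^0_I])|\,\prod_K\delta({}^-\hat{G}'^0_K)$ shifted along the orbit and uses abelianness of the ${}^-\hat{C}^0_I$ plus the no-Gribov assumption (4) of section~\ref{sec_irreg} so that the orbit meets each gauge slice exactly once. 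Since ${}^+\mathbb{P}_1\,\kappa_{0\rightarrow1}$ is already orbit-independent in the $\mu^I_0$ (it solves the post--constraints and the $\mu^I_0$-dependence of $\kappa_{0\rightarrow1}$ is pure gauge for the pre--flow, cf.\ the discussion around \eqref{propnomatch}), the integrand is constant along $\cg^-_0$ and the FP manipulation shows the result is unchanged, i.e.\ $U_{0\rightarrow1}$ is independent of both the $\mu^I_0$ and of the choice ${}^-G^0_K=0$.

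The main obstacle is the careful handling of the two copies of $({}^-\mathbb{P}_0)$ and the attendant normalization factors: one must be precise about which ${}^-\mathbb{P}_0$ originates from the $({}^-\mathbb{P}_0)^*$ built into $K_{0\rightarrow1}$ via \eqref{conprop} and which from the projection $\psi^{\rm kin}_0\mapsto{}^-\psi^{\rm phys}_0$ via \eqref{prestates}, and to see that Lemma~\ref{lem1} applies verbatim to the pre--objects — which it does, since the hypotheses there are purely about the abelian form \eqref{lemconform} of the constraints and the existence of global gauge conditions, both guaranteed for ${}^-\hat{C}^0_I$ by assumptions (1)--(4) of section~\ref{sec_irreg} together with linearity in the momenta. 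Everything else is a change of variables and a bookkeeping of $2\pi\hbar$ factors identical to Lemma~\ref{lem2}.
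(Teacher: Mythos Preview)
Your overall strategy matches the paper's: start from $U_{0\rightarrow1}=\int dx_0\,K^{f_+}_{0\rightarrow1}\,{}^-\psi^{\rm phys}_0$ as in \eqref{conprop2}, use Lemma~\ref{lem1} to insert the Faddeev--Popov factor together with an extra copy of ${}^-\mathbb{P}_0$, move that projector onto $K^{f_+}_{0\rightarrow1}$ by self-adjointness in the KIP, and recognize $({}^-\mathbb{P}_0)^*K^{f_+}_{0\rightarrow1}=K_{0\rightarrow1}$ via \eqref{conprop}. The paper's proof is precisely this chain of equalities, so your main derivation is correct.

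There is, however, a genuine error in your gauge-independence paragraph. You assert that ${}^+\mathbb{P}_1\,\kappa_{0\rightarrow1}=K^{f_+}_{0\rightarrow1}$ is ``already orbit-independent in the $\mu^I_0$'' and cite \eqref{propnomatch} as support. This is false on both counts: the pre--fixed propagator by construction does \emph{not} satisfy the pre--constraints (see the remark immediately after \eqref{pfprop}), and \eqref{propnomatch} in fact exhibits the \emph{opposite} --- two kinematical propagators in the same pre--orbit yield \emph{different} $K^{f_+}$. The integrand that must be shown constant along $\cg^-_0$ is $K_{0\rightarrow1}\cdot{}^-\psi^{\rm phys}_0$, with the \emph{full} physical propagator, and the orbit-independence comes from \eqref{cond5}: since $(K_{0\rightarrow1})^*$ is annihilated by the ${}^-\hat{C}^0_I$, the structure \eqref{lemphys} gives $K_{0\rightarrow1}=e^{-iS^-_0/\hbar}\times(\text{$\mu^I_0$-independent})$ while ${}^-\psi^{\rm phys}_0=e^{iS^-_0/\hbar}\times(\text{$\mu^I_0$-independent})$, so the phases cancel in the product. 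This is exactly how the paper concludes (invoking \eqref{cond5} and the end of the proof of Lemma~\ref{lem2}). Replace your claim about $K^{f_+}_{0\rightarrow1}$ with this argument about $K_{0\rightarrow1}$ and the proof is complete.
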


\begin{proof}
The proof is presented in appendix \ref{app}.
\end{proof}
Clearly, the analogous result holds for the inverse map.

We note that both the pre--fixed propagator $K_{0\rightarrow1}^{f_+}$ and $\big|\!\det([{}^-\hat{G}^0_M,{}^-\hat{C}^0_N])\big|\prod_{K=1}^k\delta({}^-\hat{G}^0_K)\, K_{0\rightarrow1}$ are mapped to $K_{0\rightarrow1}$ under $\prod_{I=1}^k(\delta({}^-\hat{C}^0_I))^\dag$.

%

\section{Temporally varying discretization, evolving Hilbert spaces and cylindrical consistency}\label{sec_evol}

Hitherto, we have only worried about systems with a constant configuration space dimension and thus kinematical Hilbert spaces $\ch_n^{\rm kin}=L^2(\cq_n,dx_n)$ such that $\cq_n\simeq\cq_{n'}$, $\forall\,n,n'$. As a next step, we generalize the previous construction to a quantum formalism for systems with temporally varying discretization and, consequently, temporally varying numbers of degrees of freedom. In particular, we now have to deal with the fact that $\dim\cq_n\neq\dim\cq_{n+1}$ is expressly allowed. Restricting again to $\cq_n\simeq\mathbb{R}^{N_n}$, the kinematical Hilbert spaces $\ch^{\rm kin}_n=L^2(\mathbb{R}^{N_n},dx_n)$ {`evolve'} from time step to time step in the sense that the classical configuration manifold over which the square integrable functions are defined changes in discrete time. We emphasize that this notion of `evolving' Hilbert spaces does {\it not} refer to a temporally varying Hilbert space dimension but to the temporally varying number of degrees of freedom of the underlying discretization which are necessary to describe a quantum state. (Of course, in a strictly mathematical sense, the kinematical Hilbert spaces do not evolve because $L^2(\mathbb{R}^{N_n},dx_n)$ and $L^2(\mathbb{R}^{N_m},dx_m)$ are isometrically isomorphic even if $N_n\neq N_m$ \cite{conwaybook}.) But what does this imply for the {\it physical} Hilbert spaces?

\subsection{Evolving phase spaces}

The corresponding classical formalism for evolving phase spaces has been developed in detail in \cite{Dittrich:2011ke,Dittrich:2013jaa,Hoehn:2014aoa}. The trick is to suitably extend the configuration spaces at both steps until they are of equal dimension. More specifically, for `old' variables $x^{\fo}_n$ that occur at step $n$, however, which have no counterpart at step $n+1$, extend $\cq_{n+1}$ by a suitable configuration manifold of appropriate dimension $\cq^{ext}_n$---coordinatized by `the missing' {\it a priori free} $x^{\fo}_{n+1}:=\lambda^\fo_{n+1}$---to an extended configuration manifold $\overline{\cq}_{n+1}=\cq_{n+1}\times\cq^{ext}_{n+1}$. Do the analogue for `new' variables $x^{\fn}_{n+1}$ which have no counterpart at step $n$. At this stage, $\dim\overline{\cq}_n=\dim\overline{\cq}_{n+1}$ and the formalism for constrained global moves for systems with constant configuration space dimension applies. For every newly introduced $\lambda^{\fo}_{n+1},\mu^{\fn}_n:=x^\fn_n$, one obtains a trivial post--constraint at $n+1$ and pre--constraint at $n$, respectively,
\ba
{}^+C^{n+1}_{\fo}=p^{n+1}_{\fo}=\f{\p S_n(x_n,x_{n+1})}{\p \lambda^{\fo}_{n+1}}=0,\q\q\q\q{}^-C^n_{\fn}=p^n_{\fn}=-\f{\p S_n(x_n,x_{n+1})}{\p \mu^{\fn}_n}=0,\label{trivpcons}
\ea
because $S_n$ depends on neither $\lambda^{\fo}_{n+1}$ nor $\mu^{\fn}_n$. This is also the reason why the new constraints (\ref{trivpcons}) will necessarily be first class and why the $\lambda^{\fo}_{n+1}$ and $\mu^{\fn}_n$ are proper gauge parameters which are both {\it a priori} and {\it a posteriori} free. Upon configuration space extension, the total number of post--constraints at step $n+1$ and pre--constraints at step $n$ (i.e., not only those in (\ref{trivpcons}) which are a consequence of the extension itself) coincide. Furthermore, the space of pre--observables at step $n$ is isomorphic to the space of post--observables at step $n+1$. Pre/post--observables at step $n$ are functions on the pre/post--constraint surface which are invariant under the action of the pre/post--constraints at $n$. The space of invariant functions therefore does {\it not} evolve in a given move $n\rightarrow n+1$, but is actually associated to it. However, different moves in the evolution are associated to different spaces of invariant functions and thus the number of propagating degrees of freedom is in general move dependent. For details on the classical situation, see \cite{Dittrich:2013jaa}.

\subsection{Evolving Hilbert spaces}

Our goal is to translate the same trick into the quantum theory, i.e.\ to suitably `extend'\footnote{We write `extend' in quotation marks because this extension refers to the configuration manifolds over which the square integrable functions are defined rather than a genuine Hilbert space extension.}  Hilbert spaces and then employ the quantum formalism for systems with constant configuration space dimension of the previous section on the resulting Hilbert spaces. This is not hard to do. Firstly, we adopt the configuration space extension from the classical formalism and (in the position representation) `extend' the corresponding kinematical Hilbert spaces as follows
\ba
\ch^{\rm kin}_n&=&\,\,\,\,L^2(\mathbb{R}^{N_n},dx_n)\q\,\,\q\underset{\text{extend}}{\longrightarrow}\q\overline{\ch}^{\rm kin}_n\,\,=L^2\left(\mathbb{R}^{N_n}\times\mathbb{R}^{N^{ext}_n},dx_n\prod_\fn d\mu^{\fn}_n\right)\nn\\
\ch^{\rm kin}_{n+1}&=&L^2(\mathbb{R}^{N_{n+1}},dx_{n+1})\,\q\underset{\text{extend}}{\longrightarrow}\q\overline{\ch}^{\rm kin}_{n+1}=L^2\left(\mathbb{R}^{N_{n+1}}\times\mathbb{R}^{N^{ext}_{n+1}},dx_{n+1}\prod_\fo d\lambda^{\fo}_{n+1}\right),\nn
\ea
where $\dim\overline{\cq}_n=N_n+N^{ext}_n=N_{n+1}+N^{ext}_{n+1}=\dim\overline{\cq}_{n+1}$. Elements of $\overline{\ch}^{\rm kin}$ are denoted by $\overline{\psi}^{\rm kin}$. The constraint quantization then proceeds on these `extended' kinematical Hilbert spaces. 

Next, we have to worry about two aspects : (A) apart from already existing non-trivial constraints, implement the new pre-- and post--constraints (\ref{trivpcons}) in the quantum theory and (B) ensure that the PIP does {\it not} depend on the auxiliary Hilbert space extension. 

~\\
(A) The first aspect is, of course, rather trivial: implementing (\ref{trivpcons}) in the standard position representation as derivative operators, immediately implies that physical post--states at step $n+1$ are independent of $\lambda^{{\fo}}_{n+1}$, while physical pre--states at step $n$ are independent of $\mu^{\fn}_n$---as one would expect. That is, physical states are constant along the newly added configuration space directions. The quantum pre-- and post--constraints corresponding to (\ref{trivpcons}) are trivially abelian, such that their improper projectors $\delta(\hat{p})$ commute with all other projectors (including those associated to non-trivial constraints) and no ordering ambiguities arise. For later use, we shall spell out the group averaging method applied to the post--constraints in (\ref{trivpcons}),
\ba
\prod_{\fo}\delta(\hat{p}^{n+1}_{\fo})\,\overline{\psi}^{\rm kin}_{n+1}(x_{n+1},\lambda^{\fo}_{n+1})&=&\left(\f{1}{2\pi\hbar}\right)^{N^{ext}_{n+1}}\int\prod_{\fo}\,ds^{\fo}e^{is^{\fo}\hat{p}^{n+1}_{\fo}/\hbar}\,\,\overline{\psi}^{\rm kin}_{n+1}(x_{n+1},\lambda^{\fo}_{n+1})\nn\\
&=&\left(\f{1}{2\pi\hbar}\right)^{N^{ext}_{n+1}}\int\prod_{\fo}ds^{\fo}\,\overline{\psi}^{\rm kin}_{n+1}(x_{n+1},\lambda^{\fo}_{n+1}+s^{\fo})\nn\\
&=&\left(\f{1}{2\pi\hbar}\right)^{N^{ext}_{n+1}}\psi^{\rm kin}_{n+1}(x_{n+1}).\label{trivga}
\ea
(We assume $\overline{\psi}^{\rm kin}$ to be integrable in the auxiliary variables $\lambda^{\fo}_{n+1}$.) 

This suggests that physical states are {\it cylindrical functions}. The notion of {\it cylindrical functions} originates in the theory of integration on infinite dimensional spaces but can, in particular, also be applied to finite dimensional spaces. Given a vector space $V$ and a finite dimensional subspace $W\subset V$, a {\it cylindrical function} with respect to $W$ is a function $\psi$ on $W$ which is constant along all directions orthogonal to $W$.\footnote{Hence, if the support of $\psi$ on $W$ is compact, its support on $V$ is a `cylinder'.} Clearly, if $\psi$ is cylindrical with respect to $W$, it will also be cylindrical with respect to $W'\supset W$. In our discrete evolution, the configuration space at an arbitrary time step $n$ can, of course, be arbitrarily extended compared to the original, unextended $\cq_{n}$ without affecting physical states. This implies that physical states, as functions on the extended configuration spaces $\overline{\cq}_{n}$, are {\it cylindrical functions} with respect to the unextended configuration space $\cq_{n}\simeq\mathbb{R}^{N_{n}}$ (and any finite dimensional space containing $\cq_{n}$). Cylindrical functions are a pivotal tool in loop quantum gravity \cite{Ashtekar:2004eh,Thiemann:2007zz,Dittrich:2012jq}, but can also be applied to scalar and gauge quantum field theories \cite{Ashtekar:1994bp}. 

~\\
(B) Clearly, the PIP must not depend on whether the extended or unextended kinematical Hilbert space is used to construct it. That is, the PIP ought to be independent of the auxiliary variables. This suggests a form of {\it cylindrical consistency} for physical states. Cylindrical consistency for cylindrical functions on a vector space $V$ is a consistency condition on the measure on $V$. Given two functions $\psi,\psi'$ which are cylindrical with respect to $W$ and $W'$, respectively, {\it cylindrical consistency} requires that (1) integration of $\psi$ should not depend on the choice of $W\subset V$ as long as $\psi$ is cylindrical with respect to $W$, and (2) the inner product of $\psi,\psi'$ as an integration should not depend on the choice of $W''\supset W,W'$, provided both $\psi,\psi'$ are cylindrical with respect to $W''$. A measure on $V$ which satisfies this condition is called {\it cylindrical}. It can often be uniquely determined and usually turns out to be probabilistic \cite{Ashtekar:1994bp,Ashtekar:2004eh,Thiemann:2007zz}.

Indeed, cylindrical consistency is explicitly realized for the PIP and one can easily determine the cylindrical measure. To this end, notice that lemma \ref{lem2} applies to the new post--constraints in (\ref{trivpcons}) and that $[\hat{G}^{n+1}_{\fo}(\lambda^{\fo}_{n+1}),\hat{p}^{n+1}_{\fo'}]=i\hbar\,\f{\p G^{n+1}_{\fo}}{\p \lambda^{\fo'}_{n+1}}$. It follows
\ba
\langle\overline{\psi}^{\rm kin}_{n+1}\big|{}^+{\phi}^{\rm phys}_{n+1}\rangle_{\overline{\ch}^{\rm kin}_{n+1}}\!\!\!\!\!\!&=&\int_{\overline{\cq}_{n+1}}dx_{n+1}\prod_\fo d\lambda^{\fo}_{n+1}\left(\overline{\psi}^{\rm kin}_{n+1}(x_{n+1},\lambda^{\fo}_{n+1})\right)^*{}^+{\phi}^{\rm phys}_{n+1}(x_{n+1})\nn\\
&\underset{\text{\tiny Lemma \ref{lem2}}}{=}&(2\pi\hbar)^{N^{ext}_{n+1}}\int_{\overline{\cq}_{n+1}}dx_{n+1}\prod_\fo d\lambda^{\fo}_{n+1}\left(\overline{\psi}^{\rm kin}_{n+1}(x_{n+1},\lambda^{\fo}_{n+1})\right)^*\prod_{\fo'}\delta(\hat{p}^{n+1}_{\fo'})\nn\\
&&\q\times\Big|\det\left([\hat{G}^{n+1}_{\fo},\hat{p}^{n+1}_{\fo'}]/\hbar\right)\Big|\prod_{\fo''}\delta(\hat{G}^{n+1}_{\fo''})\,{}^+{\phi}^{\rm phys}_{n+1}(x_{n+1})\nn\\
&=&\!\!\!\!\!\!(2\pi\hbar)^{N^{ext}_{n+1}}\int_{\overline{\cq}_{n+1}}dx_{n+1}\prod_\fo d\lambda^{\fo}_{n+1}\left(\prod_{\fo'}\delta(\hat{p}^{n+1}_{\fo'})\,\overline{\psi}^{\rm kin}_{n+1}(x_{n+1},\lambda^{\fo}_{n+1})\right)^*\nn\\
&&\q\times\Big|\det\left(\f{\p G^{n+1}_{\fo}}{\p \lambda^{\fo'}_{n+1}}\right)\Big|\prod_{\fo''}\delta(\hat{G}^{n+1}_{\fo''})\,{}^+\phi^{\rm phys}_{n+1}(x_{n+1})\nn\\
&\underset{(\ref{trivga})}{=}&\int_{\overline{\cq}_{n+1}}dx_{n+1}\,d\xi_{n+1}^+(\lambda^{\fo}_{n+1})\left({\psi}^{\rm kin}_{n+1}(x_{n+1})\right)^*\,{}^+\phi^{\rm phys}_{n+1}(x_{n+1})\nn\\
&=&\langle{\psi}^{\rm kin}_{n+1}\big|{}^+\phi^{\rm phys}_{n+1}\rangle_{{\ch}^{\rm kin}_{n+1}},\label{cylpip}
\ea
with cylindrical measure for the auxiliary variables
\ba
d\xi^+_{n+1}(\lambda^{\fo}_{n+1})=\prod_\fo d\lambda^{\fo}_{n+1}\Big|\det\left(\f{\p G^{n+1}_{\fo}}{\p \lambda^{\fo'}_{n+1}}\right)\Big|\,\delta(\hat{G}^{n+1}_{\fo}).\label{cyl}
\ea
This measure is {\it probabilistic} $\int_{\cq^{ext}_{n+1}}d\xi^+_{n+1}=1$.\footnote{In contrast to (\ref{lem2eq}), the measure (\ref{cyl}) does not contain any factors of $(2\pi)$. This is because in (\ref{cylpip}) we have employed the post--physical states ${}^+\phi^{\rm phys}_{n+1}$ with respect to the original unextended configuration space $\cq_{n+1}$. Had we instead directly constructed post--physical states ${}^+\overline{\phi}^{\rm phys}_{n+1}$ from $\overline{\cq}_{n+1}$, we would have ${}^+\phi^{\rm phys}_{n+1}=(2\pi\hbar)^{N^{ext}_{n+1}}{}^+\overline{\phi}^{\rm phys}_{n+1}$, in analogy to (\ref{trivga}). In this case the measure would again be of the form presented in lemma \ref{lem2}.} The analogous result holds for pre--physical states. 

~\\
That is to say, {\it physical states at a given step $n+1$ are cylindrical functions on $\overline{\cq}_{n+1}$} with respect to $\cq_{n+1}$ and the PIP is cylindrically consistent. So long as the measure of the auxiliary variables is probabilistic in the form (\ref{cyl}), the PIP (and any integration of physical states) will not depend on the choice of extended configuration manifold.

In this manner, we are able to consistently transform global evolution moves with a time varying discretization and varying number of variables on evolving (kinematical) Hilbert spaces, to a scenario of constrained global evolution moves with constant configuration space dimension. 
At this stage, all the results of the previous section as regards PIPs, propagators and unitarity apply. Thereby, we arrive at a consistent quantum formalism for systems with time varying numbers of degrees of freedom.  

We emphasize that, in analogy to the classical formalism mentioned above, the pre--physical Hilbert space ${}^-\ch^{\rm phys}_n$ at time step $n$ in a {\it fixed} quantum evolution move $n\rightarrow n+1$ is {\it dynamically} isomorphic to the post--physical Hilbert space ${}^+\ch^{\rm phys}_{n+1}$ at $n+1$, despite the `evolving' kinematical Hilbert spaces. Namely, the results of the previous section tell us that, upon suitable configuration space extension, the map
\ba
{}^+\psi^{\rm phys}_{n+1}=U_{n\rightarrow n+1}{}^-\psi^{\rm phys}_{n}=\int_{\overline{\cq}_n}dx_n\prod_\fn d\mu^{\fn}_n\,K^{f_+}_{n\rightarrow n+1}{}^-\psi^{\rm phys}_n\nn
\ea
defines a unitary bijection between ${}^+\ch^{\rm phys}_{n+1}$ and ${}^-\ch^{\rm phys}_n$. However, as we shall see in section \ref{sec_effcon}, the invertible map $U_{n\rightarrow n+1}$ depends strongly on the move $n\rightarrow n+1$---in analogy to the classical case \cite{Dittrich:2013jaa}. In this sense one may speak of physical Hilbert spaces that `evolve' from evolution move to evolution move.



\begin{Example}
\emph{Let us examine a trivial example for cylindrical consistency which, nonetheless, illustrates the general features of Hilbert space `extensions'. Namely, consider a regular (non-constrained) global evolution move $0\rightarrow1$ as described in section \ref{sec_reg} such that $\cq_0\simeq\cq_1\simeq\mathbb{R}$. Nothing (except simplicity) stops us from artificially extending the configuration spaces at both steps to $\overline{\cq}_{0,1}\simeq\mathbb{R}\times\mathbb{R}$, where the additional dimension at $n=0$ is coordinatized by the {\it a posteriori} free $\mu_0$ and at $n=1$ by the {\it a priori} free $\lambda_1$. This leads to a pre--constraint $p^0_\mu=0$ and a post--constraint $p^1_\lambda=0$ whose implementation in the quantum theory implies nothing but the independence of ${}^-\psi^{\rm phys}_0$ on $\mu_0$ and of ${}^+\psi^{\rm phys}_1$ on $\lambda_1$. That is, the physical states are simply the original states of the regular (unextended) move $0\rightarrow1$.}

\emph{As gauge condition at $n=1$ we may simply choose $\lambda_1=c_1$ such that $\Delta_{FP}^1=1$. Using Lemma \ref{lem2}, the PIP+ at $n=1$ is given by the Faddeev-Popov regularized KIP
\ba
\langle{}^+\psi^{\rm phys}_1\big|{}^+\phi^{\rm phys}_1\rangle_{\rm phys+}&=&\int_{\mathbb{R}\times\mathbb{R}}dx_1\,d\lambda_1\,\delta(\lambda_1-c_1)\left({}^+\psi^{\rm phys}_1(x_1)\right)^*{}^+\phi^{\rm phys}_1(x_1)\nn\\
&=&\int_{\mathbb{R}}dx_1\left({}^+\psi^{\rm phys}_1(x_1)\right)^*{}^+\phi^{\rm phys}_1(x_1),\nn
\ea
which is nothing but the original inner product at $n=1$ of the regular move $0\rightarrow1$. The analogous result holds for the PIP-- at $n=0$. The physical states are thus cylindrical functions on $\overline{\cq}$ (which can be arbitrarily extended) with respect to $\cq_{0,1}\simeq\mathbb{R}$ and we have cylindrical consistency with probabilistic measure $d\xi^+_1=d\lambda_1\,\delta(\lambda_1-c_1)$.}

\emph{The propagator of the regular move $K_{0\rightarrow1}=M_{0\rightarrow1}\,e^{iS_1/\hbar}$ obviously must be Faddeev-Popov regularized as in Lemma \ref{lem3} when employed on the extended system. As pre--and post--fixed propagators we can choose 
\ba
K^{f_+}_{0\rightarrow1}=\delta(\mu_0-c_0)\,K_{0\rightarrow1},\q\q\q\q K^{f_-}_{0\rightarrow1}=\delta(\lambda_1-c_1)\,K_{0\rightarrow1},
\ea
which just cancel integration over the auxiliary variables. Evidently, these fixed propagators yield a unitary evolution in terms of the PIP and define an invertible map between ${}^-\ch^{\rm phys}_0$ and ${}^+\ch^{\rm phys}_1$---which are simply the original Hilbert spaces of the regular move $0\rightarrow1$.}
\end{Example}

In section \ref{sec_ex}, we shall study in depth a less trivial example featuring cylindrical consistency.

\section{Composition of constrained moves and divergences}\label{sec_fullcomp}

Consider a composition of moves $0\rightarrow1$ and $1\rightarrow2$. Classically, the post--constraints at $n=1$ are automatically satisfied by the move $0\rightarrow1$, however, constitute non-trivial constraints on the move $1\rightarrow2$. Conversely, the pre--constraints at $n=1$ establish non-trivial conditions on $0\rightarrow1$ \cite{Dittrich:2011ke,Dittrich:2013jaa}. When solving the equations of motion, i.e.\ matching the symplectic structures and imposing {\it both} pre-- and post--constraints at $n=1$, the following possibilities arise \cite{Dittrich:2013jaa}:
\begin{itemize}
\item[(a)] A subset of the pre--constraints (in possibly rewritten form) coincides with a subset of the post--constraints. We shall enumerate these constraints by $a=1,\ldots,A$, i.e.\ $C^1_a:={}^+C^1_a={}^-C^1_a$. In \cite{Dittrich:2013jaa} it was shown that these constraints generate genuine gauge symmetries. These will ultimately be responsible for divergences in the path integral.
\item[(b1)] A subset of the pre--constraints is independent of the post--constraints, however, is first class with respect to the full set of pre-- and post--constraints. We denote these pre--constraints by ${}^-C^1_{b_-}$, $b_-=1,\ldots,B_-$. These constraints do {\it not} generate symmetries and only arise if there exist degrees of freedom of the discretization at $n=1$ which are dynamically relevant (i.e.\ propagating) for the move $0\rightarrow1$ but not for $1\rightarrow2$ \cite{Dittrich:2013jaa}. This can only occur for a temporally varying discretization. The arguments put forward in \cite{Dittrich:2013xwa} suggest to interpret the ${}^-C^1_{b_-}$ as non-trivial coarse graining conditions on the move $0\rightarrow1$ which render degrees of freedom finer than a coarse graining scale set by the move $1\rightarrow2$ dynamically irrelevant. As we shall see shortly, these constraints will lead to non-unitary projections of physical Hilbert spaces and Dirac observables.
\item[(b2)] A subset of the post--constraints is independent of the pre--constraints, however, is first class with respect to the full set of pre-- and post--constraints. We denote these post--constraints by ${}^+C^1_{b_+}$, $b_+=1,\ldots,B_+$. Note that in general $B_-\neq B_+$. These constraints are the time-reverse of the ${}^-C^1_{b_-}$ and as such do {\it not} generate symmetries \cite{Dittrich:2013jaa}, but, instead, establish non-trivial coarse graining conditions on the move $1\rightarrow2$ in the sense of (b1). Equivalently, the ${}^+C^1_{b_+}$ ensure that the coarser (dynamical) data of the move $0\rightarrow1$ can be consistently represented and propagated with the finer (dynamical) data of the move $1\rightarrow2$. This will lead to non-unitary projections of Hilbert spaces.
\item[(c)] A subset of the pre--constraints (post--constraints) is independent of the post--constraints (pre--constraints), fixes the {\it a priori} free flows of the latter and is thereby second class.
\item[(d)] The full set of pre-- and post--constraints cannot be simultaneously implemented and the dynamics is inconsistent. 
\end{itemize}
In the sequel, we shall assume case (d) not to occur. Furthermore, it is standard practice to solve the second class constraints (i.e., here case (c)) at the classical level. Subsequently, one can attempt to quantize the Dirac bracket and find a quantum representation of a remaining set of independent variables, although there exist other methods \cite{Henneaux:1992ig}. We shall thus assume the second class constraints at $n=1$ to be classically solved. We emphasize, however, that in the discrete evolution second class constraints at step $n=1$ only arise when matching the symplectic structures at $n=1$ of the moves $0\rightarrow1$ and $1\rightarrow2$. That is, from considering the move $0\rightarrow1$ alone one will not know which of the post--constraints at $n=1$ will become second class upon composition with the move $1\rightarrow2$. In conclusion, at $n=1$ we split the remaining (first class) set of pre--constraints ${}^-C^1_I$, $I=1,\dots,k_-$ and post--constraints ${}^+C^1_J$, $J=1,\dots,k_+$ (in general, $k_-\neq k_+$) into the following three subsets $\{C^1_a\}_{1}^A$, $\{{}^-C^1_{b_-}\}_{1}^{B_-}$, $\{{}^+C^1_{b_+}\}_1^{B_+}$ and solely focus on these.

\subsection{The composition of two constrained global moves}\label{sec_comp}

Theorem 3.2 in \cite{Dittrich:2013jaa} implies that classically it is equivalent 
\begin{itemize}
\item[(i)] to firstly perform a canonical analysis for the individual moves $0\rightarrow1$ and $1\rightarrow2$ and subsequently perform a matching of the symplectic structures at $n=1$, or
\item[(ii)] to firstly solve the equations of motion at the level of the action at $n=1$ and then perform a canonical analysis for the effective move $0\rightarrow2$.
\end{itemize}
We shall now carry out the quantum analogues of the two procedures (i) and (ii) when composing the moves $0\rightarrow1$ and $1\rightarrow2$ and show that, again, they yield equivalent results (see figure \ref{fig_glue} for a schematic illustration).\\~\\
(i) The quantum analogue to (i) is to firstly match the pre-- and post--physical Hilbert spaces at $n=1$ and then convolute the resulting propagators of the moves $0\rightarrow1$ and $1\rightarrow2$. This procedure is more involved than the quantum analogue to (ii) below, however, for conceptual clarity of the framework we shall detail it here. The quick reader may skip to (ii) below.

\begin{figure}[hbt!]
\begin{center}
\psfrag{0}{$0$}
\psfrag{1}{$1$}
\psfrag{2}{$2$}
\psfrag{g}{ glue}
\psfrag{i}{ integrate}
\psfrag{h0}{\small ${}^-\ch^{\rm phys}_0$}
\psfrag{h02}{\small ${}^-\tilde{\ch}^{\rm phys}_0$}
\psfrag{h11}{\small ${}^+\ch^{\rm phys}_1$}
\psfrag{h12}{\small ${}^-\ch^{\rm phys}_1$}
\psfrag{h2}{\small ${}^+\ch^{\rm phys}_2$}
\psfrag{h22}{\small ${}^+\tilde{\ch}^{\rm phys}_2$}
\begin{subfigure}[b]{.22\textwidth}
\centering
\includegraphics[scale=.45]{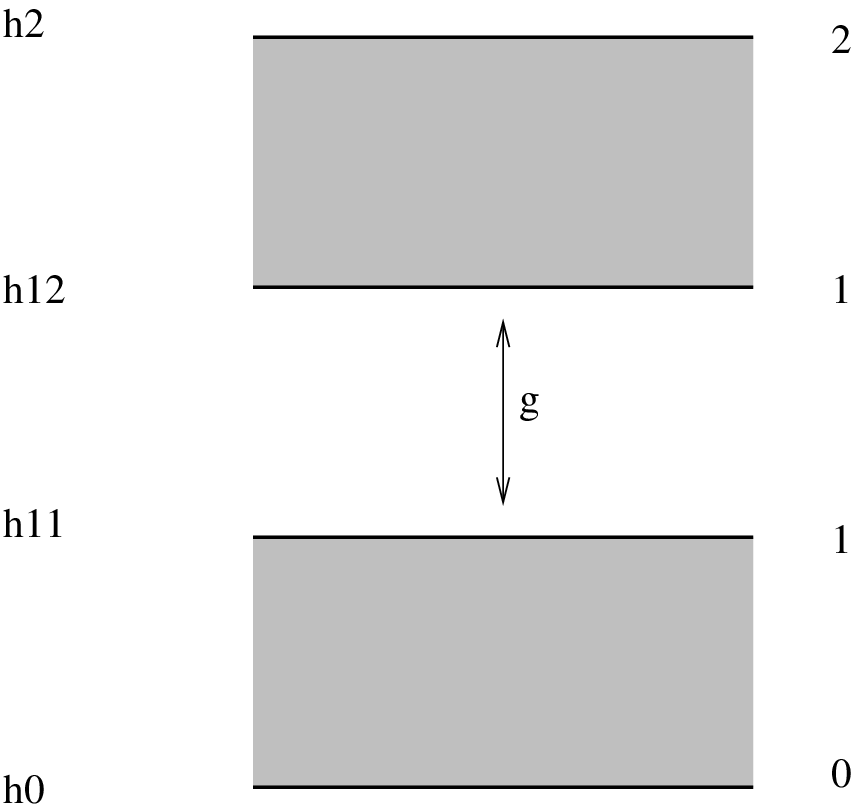}
\centering
\caption{\small }
\end{subfigure}
\hspace{2cm}
\begin{subfigure}[b]{.22\textwidth}
\centering
\includegraphics[scale=.45]{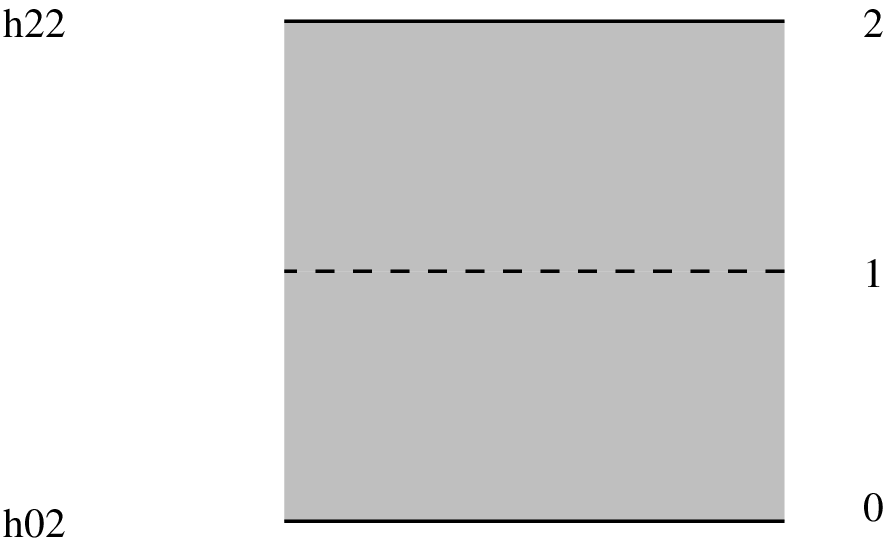}
\centering
\caption{\small }
\end{subfigure}
\hspace{2cm}
\begin{subfigure}[b]{.22\textwidth}
\centering
\includegraphics[scale=.45]{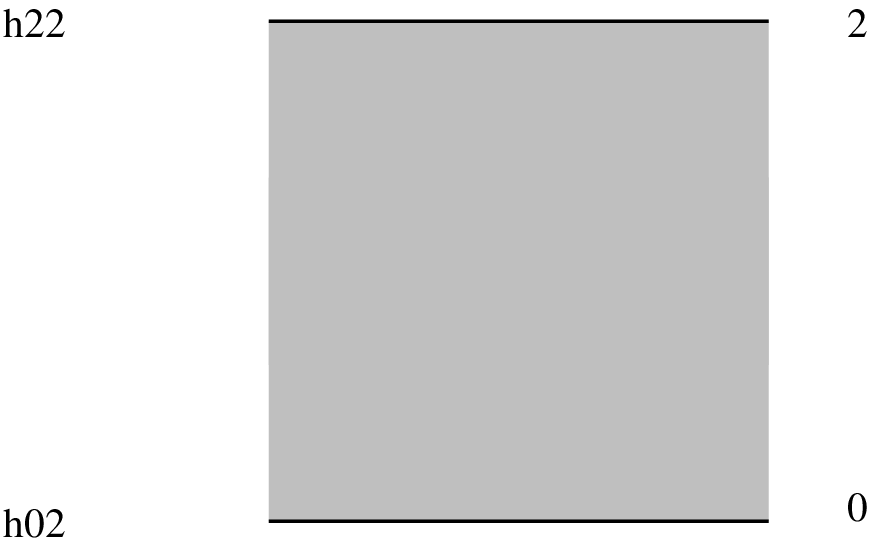}
\centering
\caption{\small }
\end{subfigure}
\caption{\small Composition of the moves $0\rightarrow1$ and $1\rightarrow2$ to an effective move $0\rightarrow2$. The composition can be done by (i) matching the pre-- and post--physical Hilbert spaces at $n=1$ and subsequent regularization, or, equivalently, by (ii) directly convoluting the two propagators without any state matching. If cases (b1) and (b2) occur, the pre-- and post--physical Hilbert spaces ${}^-\tilde{\ch}^{\rm phys}_0$ and ${}^+\tilde{\ch}^{\rm phys}_2$ of the move $0\rightarrow2$ are non-unitary projections of the original ${}^-\ch^{\rm phys}_0$ of the move $0\rightarrow1$ and ${}^+\ch^{\rm phys}_2$ of the move $1\rightarrow2$, respectively (see section \ref{sec_effcon})).}\label{fig_glue}
\end{center}
\end{figure}

Indeed, as a consequence of cases (b1) and (b2), in the quantum theory it generally occurs that ${}^+\ch^{\rm phys}_1\neq{}^-\ch^{\rm phys}_1$. We shall now (formally) construct a single physical Hilbert space $\ch^{\rm phys}_1$ for step $n=1$. For simplicity, let us assume that the remaining quantum pre-- and post--constraints at $n=1$ are also first class such that we can consistently impose both sets simultaneously in the quantum theory. In analogy to the regular moves of section \ref{sec_reg}, we require a matching of the {pre--} and {post--physical states} at $n=1$,
\ba
\tilde{\psi}^{\rm phys}_1:={}^-\psi^{\rm phys}_1\overset{!}{=}{}^+\psi^{\rm phys}_1.\label{psmatch}
\ea
Given that now
\ba
\tilde{\psi}^{\rm phys}_1(x_1)=\int dx_0\,\tilde{K}_{0\rightarrow1}(x_0,x_1){}^-\psi^{\rm kin}_0(x_0)=\int dx_1\left(\tilde{K}_{1\rightarrow2}(x_1,x_2)\right)^*{}^+\psi^{\rm kin}_2(x_2),\nn
\ea
we must replace (\ref{cond4}, \ref{cond5}) by the following conditions
\ba\label{globalcon}
\hat{C}^1_a\tilde{K}_{0\rightarrow1}(x_0,x_1)&=&\q\hat{C}^1_a\tilde{K}^*_{1\rightarrow2}(x_1,x_2)=0\nn\\
{}^-\hat{C}^1_{b_-}\tilde{K}_{0\rightarrow1}(x_0,x_1)&=&{}^-\hat{C}^1_{b_-}\tilde{K}^*_{1\rightarrow2}(x_1,x_2)=0\nn\\
{}^+\hat{C}^1_{b_+}\tilde{K}_{0\rightarrow1}(x_0,x_1)&=&{}^+\hat{C}^1_{b_+}\tilde{K}^*_{1\rightarrow2}(x_1,x_2)=0.
\ea
We denote the new propagators with a tilde to signify that now both pre-- and post--constraints hold. These new propagators are indeed different from the original propagators (\ref{conprop}) and describe different dynamics since formally (i.e.\ ignoring factor ordering ambiguities)\footnote{We emphasize again that in the expression (\ref{conprop}) (and the analogous expression for $K_{1\rightarrow2}$) we assume only those pre-- and post--constraints at $n=1$ to be quantized which are first class upon composition of the two moves.}
\ba
\tilde{K}_{0\rightarrow1}&=&{}^-\mathbb{P}^{B}_1\,\,\,\mathbb{P}^A_1\,\,\,{}^+\mathbb{P}^{B}_1\,\,\,({}^-\mathbb{P}_0)^*\,\,\,\kappa_{0\rightarrow1}\,\,\,\,\,\,\,\underset{(\ref{conprop})}{=}{}^-\mathbb{P}^{B}_1\,K_{0\rightarrow1},\nn\\
\tilde{K}_{1\rightarrow2}&=&{}^+\mathbb{P}_2\,({}^+\mathbb{P}^{B}_1)^*\,(\mathbb{P}^A_1)^*\, ({}^-\mathbb{P}^{B}_1)^*\,\kappa_{1\rightarrow2}\underset{(\ref{conprop})}{=}({}^+\mathbb{P}^{B}_1)^*\,K_{1\rightarrow2},\label{globalproj}
\ea
where we have used the following definitions
\ba
\mathbb{P}^A_1:=\prod_{a=1}^A\delta(\hat{C}^1_a),\q\q {}^-\mathbb{P}^{B}_1:=\prod_{b_-=1}^{B_-}\delta({}^-\hat{C}^1_{b_-}),\q\q{}^+\mathbb{P}^{B}_1:=\prod_{b_+=1}^{B_+}\delta({}^+\hat{C}^1_{b_+}).\label{def111}
\ea

\begin{R}
\emph{Had one chosen to use the construction of the pre-- and post--projectors as given in (\ref{altproj}), instead of (\ref{proj000}, \ref{proj}), one would proceed in a different manner. To consistently implement the construction as indicated in (\ref{altproj}) for all time steps and, in particular, when composing moves, one would have to replace the products of projectors at a fixed time step everywhere in the above discussion and in the sequel by a single projector containing the sum of all the involved constraints in its exponent. For instance, the product ${}^-\mathbb{P}^{B}_1\,\,\,\mathbb{P}^A_1\,\,\,{}^+\mathbb{P}^{B}_1$ in the first line of (\ref{globalproj}) would have to be replaced by the single projector}
\ba
\mathbb{\tilde P}'_1&:=&\f{1}{(2\pi\hbar)^{A+B_-+B_+}}\int_{\mathbb{R}^{A+B_-+B_+}}\prod_{a=1}^Ads^a_1\prod_{b_-=1}^{B_-}ds^{b_-}_1\prod_{b_+=1}^{B_+}ds^{b_+}_1\nn\\
&&\q\q\q\q\q\q\q\q\q\times \,\,\,\,e^{i/\hbar\left(\sum_{a=1}^As^a_1\hat{C}^1_a+\sum_{b_-=1}^{B_-}s^{b_-}_1{}^-\hat{C}^1_{b_-}+\sum_{b_+=1}^{B_+}s^{b_+}_1{}^+\hat{C}^1_{b_+}\right)}.\nn
\ea
\emph{That is, when new constraints arise in the composition of the moves, the specific construction of the projector changes. (Similarly, when using the other method, newly arising projectors in the products would generally require a reordering of the already applied projectors.) The following discussion can then be similarly carried out by replacing in this manner the appearing products of projectors by a single projector with a sum of the involved constraints in the exponential. The conclusions drawn from this method would be analogous and qualitatively the same to the ones we shall now discuss.}
\end{R}

The conditions (\ref{globalcon}) are strong non-local conditions on the local measures. More specifically, these are non-trivial conditions of the move $1\rightarrow2$ on the measure $\tilde{M}_{0\rightarrow1}$ and, vice versa, of the move $0\rightarrow1$ on the measure $\tilde{M}_{1\rightarrow2}$, such that generically $\tilde{M}_{0\rightarrow1}\neq M_{0\rightarrow1}$ and $\tilde{M}_{1\rightarrow2}\neq M_{1\rightarrow2}$. Nevertheless, these can generally be fulfilled by projection as in (\ref{globalproj}). If one similarly considers the composition of various future moves to an effective move $1\rightarrow n$ and subsequently composes the latter with the move $0\rightarrow1$ one will generally encounter non-trivial consistency conditions from the time steps $2,\ldots,n$ on $M_{0\rightarrow1}$ (see also the discussion further below). In the second procedure (ii) below such non-local conditions are avoided.

This entails that in the quantum theory the {\it a priori} free parameters also become {\it a posteriori} free and vice versa. As a consequence, the regularized {\it pre--} and {\it post--measures} of the PIP at $n=1$ and the (gauge) fixing conditions coincide (upon imposing {\it both} pre-- and post--constraints)
\ba
d\xi_1:=d\xi_1^-\equiv d\xi_1^+\nn
\ea
such that
\ba
\langle\tilde\psi^{\rm phys}_1\big|\tilde\phi^{\rm phys}_1\rangle_{\rm phys-}=\langle\tilde\psi^{\rm phys}_1\big|\tilde\phi^{\rm phys}_1\rangle_{\rm phys+}\nn
\ea
and no ambiguity in the definition of the PIP at $n=0$ arises. As a result, a single $\ch^{\rm phys}_1$ emerges as the (improper) projection of ${}^+\ch^{\rm phys}_1$ with ${}^-\mathbb{P}^{B}_1$ and of ${}^-\ch^{\rm phys}_1$ with ${}^+\mathbb{P}^{B}_1$.

Given that we have now imposed the pre-- and post--constraints on both the pre-- and post--states, the corresponding improper projectors at $n=1$ appear twice in the convolution 
\ba
\int dx_1\,\tilde{K}_{0\rightarrow1}\,\tilde{K}_{1\rightarrow2}=\int dx_1\,{}^+\mathbb{P}_2\,\kappa_{1\rightarrow2}\,({}^-\mathbb{P}^{B}_1)^2({}^+\mathbb{P}^{B}_1)^2\,(\mathbb{P}^A_1)^2\, \kappa_{0\rightarrow1}{}^-\mathbb{P}_0\label{div}
\ea
which thus diverges because of the integration over $\cg^+_1\times\cg^-_1/\cg^+_1\cap\cg^-_1$. That is, we have to remove again the ${}^-\mathbb{P}^B_1$ acting on $K_{0\rightarrow1}$ and the ${}^+\mathbb{P}^B_1$ acting on $K_{1\rightarrow2}$, both of which we had just introduced in (\ref{globalproj}). However, in addition the improper projector $\mathbb{P}^A_1$ on the symmetry generating constraints (case (a)) had been applied to both moves $0\rightarrow1$ and $1\rightarrow2$ from the start since it corresponds to those constraints which are both pre-- and post--constraints. That is, the projectors $\mathbb{P}^A_1$ account for genuine divergences resulting from integration over gauge symmetry orbits. For a proper regularization we must therefore remove one of them in (\ref{div}). In analogy to (\ref{pfprop}), we thus define the effective pre--fixed propagators for the composition $0\rightarrow1\rightarrow2$,
\ba
\tilde{K}^{f_+}_{0\rightarrow{1}}&=&{}^-\mathbb{P}^{B}_1\,\,\,\mathbb{P}^A_1\,\,\,{}^+\mathbb{P}^{B}_1\,\kappa_{0\rightarrow1}\q={}^-\mathbb{P}^{B}_1\,K^{f+}_{0\rightarrow{1}}\nn\\
\tilde{K}^{f_+}_{1\rightarrow2}&=&{}^+\mathbb{P}_2\,\kappa_{1\rightarrow2}\,\,\q\q\q\q\q=\q K^{f_+}_{1\rightarrow2}.\label{pfeffprop}
\ea
The effective post--fixed propagators are defined similarly. As in (\ref{comp}), this permits us to define the pre--fixed propagator of the composition of the moves as the (regularized) convolution
\ba
\tilde{K}^{f_+}_{0\rightarrow2}(x_2,x_0)&=&\int\,dx_1\,\tilde{K}^{f_+}_{1\rightarrow2}({x}_2,{x}_1)\,\tilde{K}^{f_+}_{0\rightarrow1}({x}_1,{x}_0)\nn\\
&=&\int\, dx_1\,{}^+\mathbb{P}_2\,\kappa_{1\rightarrow2}{}^-\mathbb{P}^{B}_1\,\,\,\mathbb{P}^A_1\,\,\,{}^+\mathbb{P}^{B}_1\,\kappa_{0\rightarrow1}.\label{regconv}
\ea
Comparing with (\ref{pfeffprop}), this can be written as the projection of a PIP at $n=1$ with ${}^+\mathbb{P}_2$,
\ba
\tilde{K}^{f_+}_{0\rightarrow2}(x_2,x_0)={}^+\mathbb{P}_2\,\left\langle\kappa_{2\rightarrow1}\Big|{}^-\mathbb{P}^{B}_1\,\,\,\mathbb{P}^A_1\,\,\,{}^+\mathbb{P}^{B}_1\,\kappa_{0\rightarrow1}\right\rangle_{\rm kin}.\label{convpip}
\ea
Thus, the convolution is (formally) finite. The pre--fixed reverse propagator satisfies an analogous convolution. 

{\it State matching} (\ref{psmatch}) yields for the effective move $0\rightarrow2$
\ba
{}^+\tilde\psi^{\rm phys}_2(x_2)&=&\int\,dx_1\,\tilde{K}^{f_+}_{1\rightarrow2}(x_2,x_1)\,\tilde{\psi}^{\rm phys}_1(x_1)\nn\\
&=&\int\,dx_1\,\tilde{K}^{f_+}_{1\rightarrow2}(x_2,x_1)\int\,dx_0\,\tilde{K}^{f_+}_{0\rightarrow1}(x_1,x_0){}^-\psi^{\rm phys}_0(x_0)\nn\\
&=&\int\,dx_0\,\tilde{K}^{f_+}_{0\rightarrow2}(x_2,x_0)\,{}^-\psi^{\rm phys}_0(x_0).\label{smatch}
\ea
We denote ${}^+\tilde\psi^{\rm phys}_2$ with a tilde because, as we shall see shortly in section \ref{sec_effcon}, the post--physical states at $n=2$ of the effective move $0\rightarrow2$ satisfy a new set of effective quantum post--constraints at $n=2$ which the post--physical states of the move $1\rightarrow2$ do not satisfy. Similarly, we shall see that $\tilde{K}_{0\rightarrow2}$ contains a further effective projector which projects the pre--physical states ${}^-\psi^{\rm phys}_0$ of the move $0\rightarrow1$ onto new effective pre--constraints at $n=0$. The ${}^-\psi^{\rm phys}_0$ are therefore only `partial' pre--physical states of the move $0\rightarrow2$. 

This concludes procedure (i) in the quantum theory.\\~\\
(ii) The analogue of procedure (ii) in the quantum theory is to directly convolute the propagators corresponding to $0\rightarrow1$ and $1\rightarrow2$ without any state matching. 

Indeed, in order to obtain a consistent dynamics for the effective move $0\rightarrow2$ we do not need to impose the state matching (\ref{psmatch}) because we can make use of the fact that the propagators act as (improper) projectors on solutions. Noting that ${}^+\psi^{\rm phys}_1=\int dx_0\,K_{0\rightarrow1}\,\psi^{\rm kin}_0=\mathbb{P}^A_1{}^+\mathbb{P}^{B}_1\int dx_0\,\kappa_{0\rightarrow1}{}^-\mathbb{P}_0\,\psi^{\rm kin}_0$ is now a kinematical state with respect to the constraints ${}^-\hat{C}^1_b$, we can project it with $\int dx_1K_{1\rightarrow2}={}^+\mathbb{P}_2\int dx_1\,\kappa_{1\rightarrow2}{}^-\mathbb{P}^{B}_1\,\mathbb{P}^A_1$ to yield the (formally) finite convolution
\ba
{}^+\tilde\psi^{\rm phys}_2(x_2)&=&\int dx_0{}^+\mathbb{P}_2\int dx_1\,\kappa_{1\rightarrow2}{}^-\mathbb{P}^{B}_1\,\mathbb{P}^A_1\,{}^+\mathbb{P}^{B}_1\,\kappa_{0\rightarrow1}{}^-\mathbb{P}_0\,\psi^{\rm kin}_0\nn\\
&=&\int dx_0\,\tilde K_{0\rightarrow2}^{f_+}{}^-\psi^{\rm phys}_0(x_0)\label{nomatch}
\ea
upon removal, for regularization purposes, of one of the doubly occurring projectors $\mathbb{P}^A_1$. As argued above, the $\mathbb{P}^A_1$ are thus ultimately responsible for genuine divergences in the path integral resulting from gauge symmetries. By genuine divergence we mean the fact that {\it the $\mathbb{P}^A_1$ always appear twice in a convolution of two moves} because they correspond to symmetry generators which are constraints that are both pre-- and post--constraints. In any convolution improper projectors of the type $\mathbb{P}^A_1$ will have to be regularized---in contrast to the projectors corresponding to cases (b1) and (b2) above which only arise in one of the two moves $0\rightarrow1$ or $1\rightarrow2$. 

The expression (\ref{nomatch}) is evidently equivalent to (\ref{regconv}) and (\ref{smatch}).\\~\\
The procedures (i) and (ii) are therefore also equivalent in the quantum theory. Given that (ii) is much simpler, we shall henceforth work with this method.

\subsection{Effective constraints and (non--)unitarity}\label{sec_effcon}

We shall now investigate the repercussions of the composition of two moves for the unitarity of the composed (`effective') dynamics. Let us recall the current state of affairs: $P_{0\rightarrow1}=\int dx_0{}^+\mathbb{P}_1\kappa_{0\rightarrow1}{}^-\mathbb{P}_0$ is a linear map from $\ch^{\rm kin}_0$ to ${}^+\ch^{\rm phys}_1$. The kernel of the linear map is $\ker(P_{0\rightarrow1})\simeq\hat{\cg}^-_0$, where $\hat{\cg}^-_0:=\{\psi^{\rm kin}_0,\phi^{\rm kin}_0\in\ch^{\rm kin}_0| {}^-\mathbb{P}_0\,\psi^{\rm kin}_0={}^-\mathbb{P}_0\,\phi^{\rm kin}_0\}$ is the pre--orbit in $\ch^{\rm kin}_0$. The first isomorphism theorem for vector spaces tells us that ${}^+\ch^{\rm phys}_1\simeq P_{0\rightarrow1}(\ch^{\rm kin}_0)\simeq\ch^{\rm kin}_0/\hat{\cg}^-_0\simeq{}^-\ch^{\rm phys}_0$. Indeed, as seen above, $U_{0\rightarrow1}=\int dx_0{}^+\mathbb{P}_1\,\kappa_{0\rightarrow1}$ (formally) yields a unitary isomorphism from ${}^-\ch^{\rm phys}_0$ to ${}^+\ch^{\rm phys}_1$. But what happens to the kernel and image of $P_{0\rightarrow1}$ in a composition?

In the classical formalism, constraints of the type (b1) and (b2) lead to a `propagation' of constraints in discrete time (see \cite{Dittrich:2013jaa,Hoehn:2014aoa} for details). Consider the pre--constraints of type (b1) at step $n=1$ of the move $1\rightarrow2$. In a composition these pre--constraints also have to be satisfied by the canonical data of the move $0\rightarrow1$: only those data at $n=0$ are permitted which also satisfy the pre--constraints at $n=1$ upon propagation with the time evolution map of the move $0\rightarrow1$. That is, the pre--constraints of case (b1) at $n=1$ have effectively `propagated' back to step $n=0$ and appear as new effective pre--constraints ${}^-\tilde{C}^0_\Lambda$. Similarly, in a composition the post--constraints of case (b2) at $n=1$ `propagate' forward to $n=2$ in the form of new effective post--constraints ${}^+\tilde{C}^2_\Lambda$. This is a consequence of the equations of motion at $n=1$ which act as secondary constraints.

We emphasize that cases (b1) and (b2) can only occur for a temporally varying discretization. The effective propagation of constraints in the classical formalism represents an effective change of discretization at steps $n=0,2$. Only those degrees of freedom of the discretization at steps $n=0$ and $n=2$ remain dynamically relevant which are compatible and Poisson commute with both the old and new effective pre-- and post--constraints at $n=0,2$, respectively \cite{Dittrich:2013jaa}. Following the arguments put forward in \cite{Dittrich:2013xwa}, the pre--constraints of case (b1) can be viewed as non-trivial conditions on the move $0\rightarrow1$, while the post--constraints of case (b2) can be interpreted as non-trivial coarse graining conditions on the move $1\rightarrow2$. Any information representing degrees of freedom below a certain coarse graining scale, although propagating in either of the moves $0\rightarrow1$ and $1\rightarrow2$, will not Poisson commute with the effective constraints and is thereby rendered dynamically irrelevant for the move $0\rightarrow2$.



The situation in the quantum theory is analogous. Consider the composition of $0\rightarrow1$ and $1\rightarrow2$ in the quantum theory as given in (\ref{smatch}, \ref{nomatch}). The ${}^-\hat{C}^1_{b_-}$ are non-trivial constraints on the image of $P_{0\rightarrow1}$. Indeed, from (\ref{regconv}, \ref{nomatch}) we can pick out the effective linear map from $\ch^{\rm kin}_0$ to ${}^-\mathbb{P}^B_1({}^+\ch^{\rm phys}_1)$
\ba
\tilde{P}_{0\rightarrow1}:=\int dx_0{}^-\mathbb{P}^{B}_1\,\mathbb{P}^A_1\,{}^+\mathbb{P}^{B}_1\,\kappa_{0\rightarrow1}{}^-\mathbb{P}_0.\nn
\ea
Thanks to the improper projector ${}^-\mathbb{P}^B_1$, its image $
\tilde{P}_{0\rightarrow1}(\ch^{\rm kin}_0)\simeq{}^-\mathbb{P}^B_1({}^+\ch^{\rm phys}_1)$ is no longer contained in the post--physical Hilbert space $P_{0\rightarrow1}(\ch^{\rm kin}_0)\simeq{}^+\ch^{\rm phys}_1$. By the first isomorphism theorem, $\tilde{P}_{0\rightarrow1}(\ch^{\rm kin}_0)\simeq\ch^{\rm kin}_0/\ker(\tilde{P}_{0\rightarrow1})$. The latter can thus also no longer be contained in ${}^-\ch^{\rm phys}_0\simeq\ch^{\rm kin}_0/\ker(P_{0\rightarrow1})$, for otherwise we could use $U_{0\rightarrow1}$ to map it isomorphically into ${}^+\ch^{\rm phys}_1$. 

There must therefore exist another `effective' improper projector ${}^-\mathbb{\tilde P}_0$ at $n=0$ such that ${}^-\mathbb{\tilde P}_0({}^-\ch^{\rm phys}_0)\simeq\ch^{\rm kin}_0/\ker(\tilde{P}_{0\rightarrow1})\simeq{}^-\mathbb{P}^B_1({}^+\ch^{\rm phys}_1)$ and we can replace $\kappa_{0\rightarrow1}$ by a new kinematical propagator $\tilde{\kappa}_{0\rightarrow1}$ such that
\ba
\tilde{P}_{0\rightarrow1}=\int dx_0{}^-\mathbb{P}^{B}_1\,\mathbb{P}^A_1\,{}^+\mathbb{P}^{B}_1\,\tilde\kappa_{0\rightarrow1}{}^-\mathbb{P}_0{}^-\mathbb{\tilde P}_0.\label{neffkinprop}
\ea
This improper projector should be expressible in terms of a set of (independent) `effective' quantum constraints that is in number equal to the pre--constraints ${}^-\hat{C}^1_{b_-}$ constituting ${}^-\mathbb{P}^B_1$,
\ba
{}^-\mathbb{\tilde P}_0=\prod_{\Lambda=1}^{B_-}\delta({}^-\widehat{\tilde C}{}^0_\Lambda).\nn
\ea
The $B_-$ `effective' quantum pre--constraints in the projector, ${}^-\widehat{\tilde C}{}^0_\Lambda$, can be viewed as the `backward propagated' ${}^-\hat{C}^1_{b_-}$ which suitably project the preimage of $\tilde{P}_{0\rightarrow1}$ such that it can be isomorphically mapped to ${}^-\mathbb{P}_1^B({}^+\ch^{\rm phys}_1)$. The ${}^-\widehat{\tilde C}{}^0_\Lambda$ should thus correspond to a quantization of the classical effective pre--constraints. It should be noted, however, that this quantization may not coincide with a na\"ive quantization of the classical `effective' constraints since non-trivial quantum corrections may arise from integrating out the intermediate time step $n=1$.

In conclusion, imposing $B_-$ new non-trivial quantum pre--constraints at $n=1$ on the move $0\rightarrow1$ produces $B_-$ new effective quantum pre--constraints at $n=0$. By the same reasoning, imposing the $B_+$ new quantum post--constraints ${}^+\hat{C}^1_{b_+}$ at $n=1$ on the move $1\rightarrow2$ produces $B_+$ new effective quantum post--constraints ${}^+\widehat{\tilde C}{}^2_{\Lambda'}$ at $n=2$ along with an effective improper projector ${}^+\mathbb{\tilde P}_2$. These can likewise be viewed as a (possibly non-trivial) quantization of the classical effective post--constraints ${}^+\tilde{C}^2_{\Lambda'}$ mentioned above.\footnote{One may be worried that the same reasoning applies to the pre--fixed propagator (\ref{pfprop}). That is, given that it is projected from the left with ${}^+\mathbb{P}_1$ should likewise imply that it already contains an `effective' projection on the right, despite ${}^-\mathbb{P}_0$ having been dropped in (\ref{pfprop}). However, firstly this `effective' projection is a kinematical one and depends on the choice of the kinematical propagator in a given orbit due to (\ref{propnomatch}) and can thus be viewed as a choice of `gauge'. By contrast, $\tilde{P}_{0\rightarrow1}$ is a dynamical projection of the physical propagator in $P_{0\rightarrow1}$ resulting from integrating out step $n=1$. Secondly, the pre--fixed propagator (\ref{pfprop}), in contrast to $\tilde{P}_{0\rightarrow1}$, is applied to ${}^-\ch^{\rm phys}_0$ and not $\ch^{\rm kin}_0$ such that the choice of the kinematical propagator within a given orbit is irrelevant.}

Using new kinematical propagators as in (\ref{neffkinprop}), the linear map (\ref{smatch}, \ref{nomatch}) can be recast 
\ba
{}^+\tilde{\psi}^{\rm phys}_2(x_2)&=&{}^+\mathbb{\tilde P}_2{}^+\mathbb{P}_2\int dx_0\,dx_1\,\tilde\kappa_{1\rightarrow2}{}^-\mathbb{P}^{B}_1\,\mathbb{P}^A_1\,{}^+\mathbb{P}^{B}_1\,\tilde\kappa_{0\rightarrow1}\,{}^-\mathbb{\tilde P}_0\,{}^-\psi^{\rm phys}_0\nn\\
&=&{}^+\mathbb{\tilde P}_2{}^+\mathbb{P}_2\int dx_0\,\kappa_{0\rightarrow2}(x_0,x_2){}^-\tilde{\psi}^{\rm phys}_0(x_0),\label{nomatch2}\\
&=&\int dx_0\,\tilde{K}^{\tilde f_+}_{0\rightarrow2}(x_0,x_2){}^-\tilde{\psi}^{\rm phys}_0(x_0),\nn
\ea
where the (non-unique) kinematical propagator of the move $0\rightarrow2$ reads
\ba
\kappa_{0\rightarrow2}=\int dx_1\,\tilde\kappa_{1\rightarrow2}{}^-\mathbb{P}^{B}_1\,\mathbb{P}^A_1\,{}^+\mathbb{P}^{B}_1\,\tilde\kappa_{0\rightarrow1}.\nn
\ea
The pre--physical states in (\ref{nomatch2}) of the move $0\rightarrow2$ are thus projections of the pre--physical states of the move $0\rightarrow1$, ${}^-\tilde{\psi}^{\rm phys}_0\in{}^-\tilde{\ch}^{\rm phys}_0:={}^-\mathbb{\tilde P}_0({}^-\ch^{\rm phys}_0)$, while the post--physical states in (\ref{nomatch2}) are projections of the post--physical states of the move $1\rightarrow2$, ${}^+\tilde{\psi}^{\rm phys}_2\in{}^+\tilde{\ch}^{\rm phys}_2:={}^+\mathbb{\tilde P}_2({}^+\ch^{\rm phys}_2)$. ${}^-\tilde{\ch}^{\rm phys}_0$ and ${}^+\tilde{\ch}^{\rm phys}_2$ are thus the effective pre-- and post--physical Hilbert spaces of the effective move $0\rightarrow2$ at $n=0,2$, respectively (see figure \ref{fig_glue} for a schematic illustration). At this stage, the reasoning of the previous sections applies again in which case $\tilde{U}_{0\rightarrow2}=\int dx_0{}^+\mathbb{\tilde P}_2{}^+\mathbb{P}_2\,\kappa_{0\rightarrow2}$ constitutes a {\it unitary} isomorphism from ${}^-\tilde{\ch}^{\rm phys}_0$ to ${}^+\tilde{\ch}^{\rm phys}_2$ for the move $0\rightarrow2$.

On the other hand, given that ${}^-\tilde{\ch}^{\rm phys}_0:={}^-\mathbb{\tilde P}_0({}^-\ch^{\rm phys}_0)$ is no longer contained in the original ${}^-\ch^{\rm phys}_0$ of the move $0\rightarrow 1$ and ${}^+\tilde{\ch}^{\rm phys}_2:={}^+\mathbb{\tilde P}_2({}^+\ch^{\rm phys}_2)$ is no longer contained in the original ${}^+\ch^{\rm phys}_2$ of the move $1\rightarrow2$, we see that integrating out step $n=1$ has lead to a {\it non-unitary} change of pre-- and post--physical Hilbert spaces at steps $n=0,2$, respectively (provided, of course, constraints of cases (b1) and (b2) occur). In analogy to the classical case sketched above, this situation can be interpreted as a dynamical change of discretization at steps $n=0,2$. The new effective quantum constraints at $n=0,2$ lead to a coarse graining of the discretization at steps $n=0,2$ in the sense of \cite{Dittrich:2013xwa}. The constraint `propagation' ensures that only the data up to a certain refinement or coarse graining scale is accepted. The information about finer degrees of freedom at $n=0,2$ which was relevant for the moves $0\rightarrow1$ and $1\rightarrow2$ is irreversibly projected out through ${}^+\mathbb{\tilde P}_2,{}^-\mathbb{\tilde P}_0$, whence the non-unitarity comes from. In section \ref{sec_dirac} we shall study this non-unitary projection in terms of Dirac observables which represent the physical degrees of freedom. 

The physical Hilbert spaces are therefore evolution move dependent such that in this sense one may speak of `evolving' physical Hilbert spaces. For instance, for a fixed $n$, the number of constraints associated to the move $n\rightarrow m$ depends on the choice of the next time step $m$. For a further step $m'>m$ the numbers of constraints involved in the move $n\rightarrow m'$ is higher or equal to the number of constraints of the move $n\rightarrow m$. Physically this happens because the number of degrees of freedom propagating from the initial $n$ to the final time $m$ can only remain constant or decrease with growing $m$ in a temporally varying discretization \cite{Dittrich:2013jaa,Hoehn:2014aoa}. The pre--physical Hilbert space ${}^-\ch^{\rm phys}_n$ at $n$ will thus be projected further and further by effective projectors associated to new effective pre--constraints at $n$ arising in the course of evolution. Accordingly, the unitary isomorphisms $U_{n\rightarrow m}$ and $U_{n\rightarrow m'}$ will generally be subject to different constraints and will thereby depend on the move. This move dependence of the physical Hilbert spaces appears explicitly in the toy model of section \ref{sec_ex}.
 
Furthermore, in \cite{Hoehn:2014wwa} it is shown in detail how coarse graining local evolution moves can change the physical inner product and thereby introduce non-unitarity into the dynamics. Finally, it should be noted that the composition of moves can also produce holonomic and boundary data constraints (which are neither pre-- nor post--constraints) at the boundary time steps \cite{Dittrich:2013jaa}. This is exhibited in \cite{Hoehn:2014aoa} in the context of quadratic discrete actions.

\begin{Example}\label{ex_ex}
\emph{We consider a rather extreme qualitative example for illustration. Let the move $0\rightarrow1$ be unconstrained while the move $1\rightarrow2$ is fully constrained. It follows that all $N$ pre-- and all $N$ post--constraints at $n=1,2$, respectively, are linear in the momenta \cite{Dittrich:2013jaa}. In the quantum theory this implies that the physical post--state ${}^+\psi^{\rm phys}_2$ and the physical pre--state ${}^-\psi^{\rm phys}_1$ are {\it unique} (up to normalization). Thus, given that the propagator of the move $1\rightarrow2$ must satisfy these constraints, it must factorize into these unique states,
\ba
K_{1\rightarrow2}(x_1,x_2)=  {}^+\psi^{\rm phys}_2(x_2)({}^-\psi^{\rm phys}_1(x_1))^*.\nn
\ea
On account of this product structure there exists no non-trivial correlation between steps $1$ and $2$ and thereby no dynamics.}

\emph{On the other hand, the kernel of the map $\int dx_0\,K_{0\rightarrow1}$ is trivial because the move $0\rightarrow1$ is unconstrained. Therefore, ${}^+\ch^{\rm phys}_1\equiv\ch^{\rm kin}_1$ and ${}^-\ch^{\rm phys}_0\equiv\ch^{\rm kin}_0$. }

\emph{The composition with $1\rightarrow2$ yields a factorization and thus trivial dynamics for $0\rightarrow2$,
\ba
K_{0\rightarrow2}(x_0,x_2)={}^+\psi^{\rm phys}_2(x_2)\int dx_1({}^-\psi^{\rm phys}_1(x_1))^*K_{0\rightarrow1}(x_0,x_1)={}^+\psi^{\rm phys}_2(x_2)({}^-\phi^{\rm phys}_0(x_0))^*.\nn
\ea
The kernel of $\int dx_0\,K_{0\rightarrow2}$ is all of $\ch^{\rm kin}_0$. Indeed, given a fixed choice of $K_{0\rightarrow1}$, ${}^-\phi^{\rm phys}_0(x_0)$ is now uniquely determined (up to normalization and an irrelevant integration constant). Consequently, ${}^-\phi^{\rm phys}_0$ must also satisfy $N$ effective pre--constraints ${}^-\widehat{\tilde C}{}^0_\Lambda$ that single this state out. The effective pre--physical Hilbert space ${}^-\tilde{\ch}^{\rm phys}_0$ is now one-dimensional and distinct from the original ${}^-\ch^{\rm phys}_0\equiv\ch^{\rm kin}_0$. This indicates qualitatively how the physical Hilbert space can depend on the evolution move.}
\end{Example}

\subsection{A discrete `general boundary formulation'}\label{sec_gbf}

In a space-time context, different global evolution moves correspond to different (triangulated) regions of space-time that can be glued together (see figure \ref{fig_gbf} for a schematic illustration of the case where the boundaries of the space-time regions have two connected components). Since such regions of space-time and the associated boundary hypersurfaces can be quite arbitrary it is not surprising that different evolution moves are associated to different physical Hilbert spaces, constraints and physical degrees of freedom. 

We emphasize at this stage that, although we make explicit use of `initial' and `final' time steps (or boundaries) in our construction of the dynamics, the formalism is general because these time steps can be chosen or assigned quite arbitrarily. We have furthermore never invoked the signature of the underlying space-time geometry and, in particular, we have not resorted to the distinction between timelike, spacelike or null directions in the case of a Lorentzian signature. Accordingly, the `time' direction employed in the formalism is a mere consequence of the choice of `initial' and `final' discrete time steps and the latter need not even correspond to spacelike hypersurfaces in a Lorentzian geometry. Indeed, the formalism equally applies to more general hypersurfaces and, likewise, to geometries with Euclidean signature. Moreover, a given individual time step can correspond to connected or disconnected components of a boundary or even to an empty boundary (as necessary for the discrete incarnation of the `no-boundary' proposal, see sections \ref{sec_dirac} and \ref{sec_ex}). Similarly, the `initial' and `final' hypersurfaces can intersect and together form a connected boundary of a space-time region. This will be more amply discussed in the companion article \cite{Hoehn:2014wwa} in combination with local moves (for the analogous classical situation see \cite{Dittrich:2011ke,Dittrich:2013jaa}). What matters is that one can associate an action to a space-time region and that such regions can be glued together along a common component of their boundaries. 

In fact, this formalism can be viewed as a discrete incarnation of the `general boundary formulation' of quantum theory \cite{Oeckl:2003vu,Oeckl:2005bv,Oeckl:2010ra,Oeckl:2011qd}. Also there the Hilbert spaces and amplitude maps depend crucially on the space-time region and its boundary under consideration.

\begin{figure}[hbt!]
\begin{center}
\psfrag{0}{$0$}
\psfrag{1}{$1$}
\psfrag{2}{$2$}
\psfrag{r1}{\large $R_1$}
\psfrag{r2}{\large $R_2$}
\psfrag{r}{\large $R$}
\psfrag{h0}{\tiny ${}^-\ch^{\rm phys}_0$}
\psfrag{h02}{\tiny ${}^-\tilde{\ch}^{\rm phys}_0$}
\psfrag{h11}{\tiny ${}^+\ch^{\rm phys}_1$}
\psfrag{h12}{\tiny ${}^-\ch^{\rm phys}_1$}
\psfrag{h2}{\tiny ${}^+\ch^{\rm phys}_2$}
\psfrag{h22}{\tiny ${}^+\tilde{\ch}^{\rm phys}_2$}
\psfrag{s0}{\small $\Sigma_0$}
\psfrag{s1}{\small $\Sigma_1$}
\psfrag{s2}{\small $\Sigma_2$}
\hspace*{-1.2cm}\begin{subfigure}[b]{.22\textwidth}
\centering
\includegraphics[scale=.33]{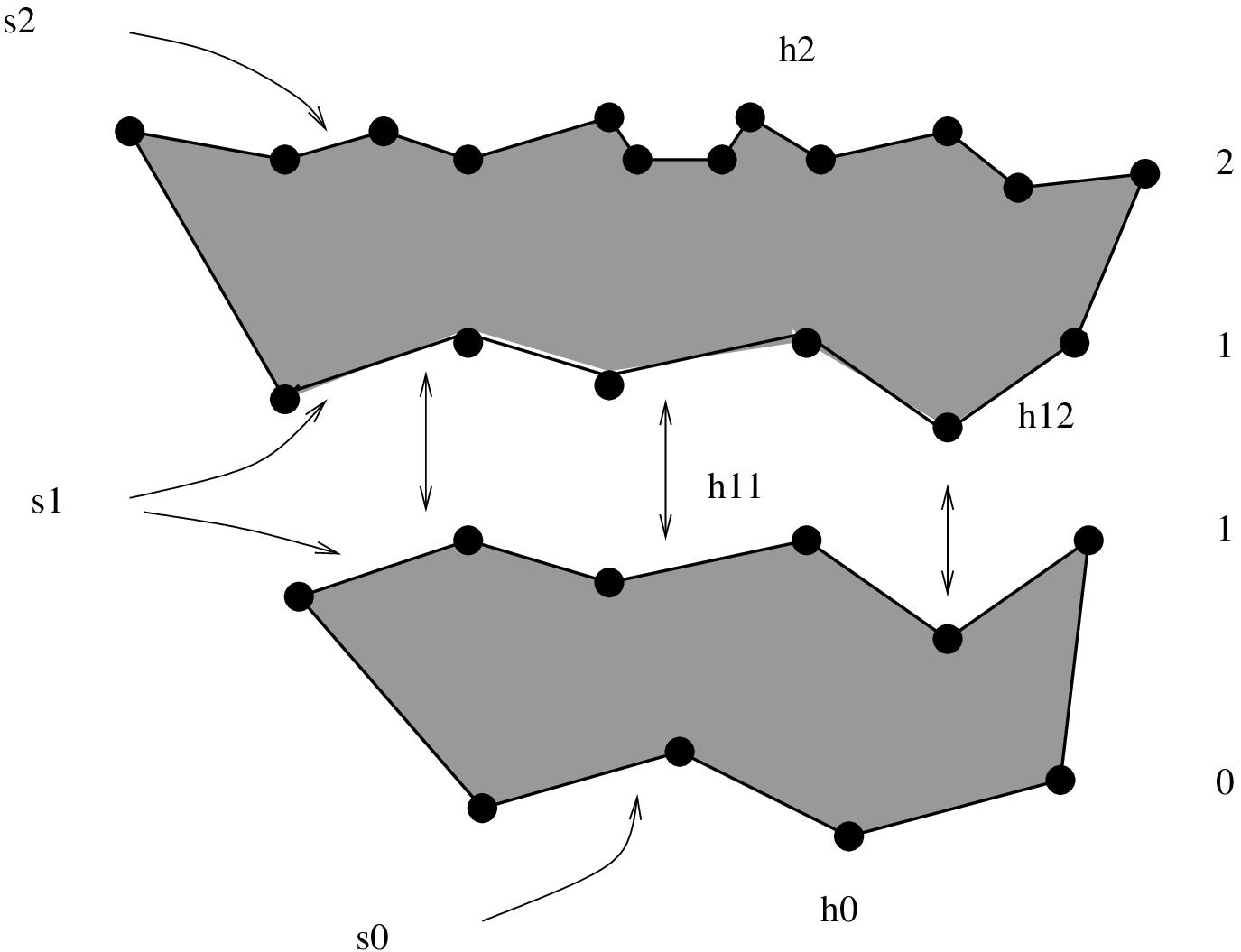}
\centering
\caption{\small }
\end{subfigure}
\hspace{1.8cm}
\begin{subfigure}[b]{.22\textwidth}
\centering
\includegraphics[scale=.33]{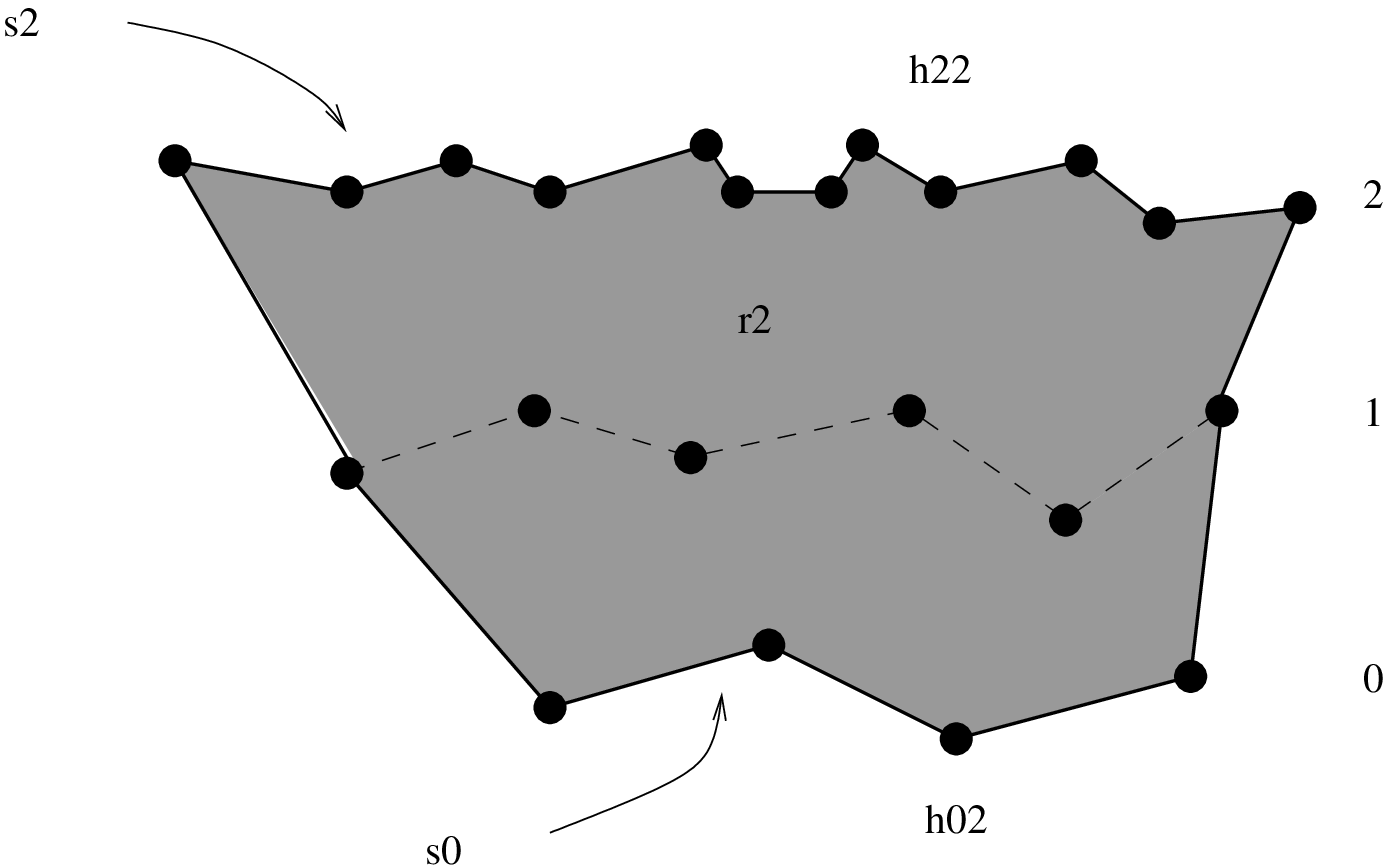}
\centering
\caption{\small }
\end{subfigure}
\hspace{1.8cm}
\begin{subfigure}[b]{.22\textwidth}
\centering
\includegraphics[scale=.33]{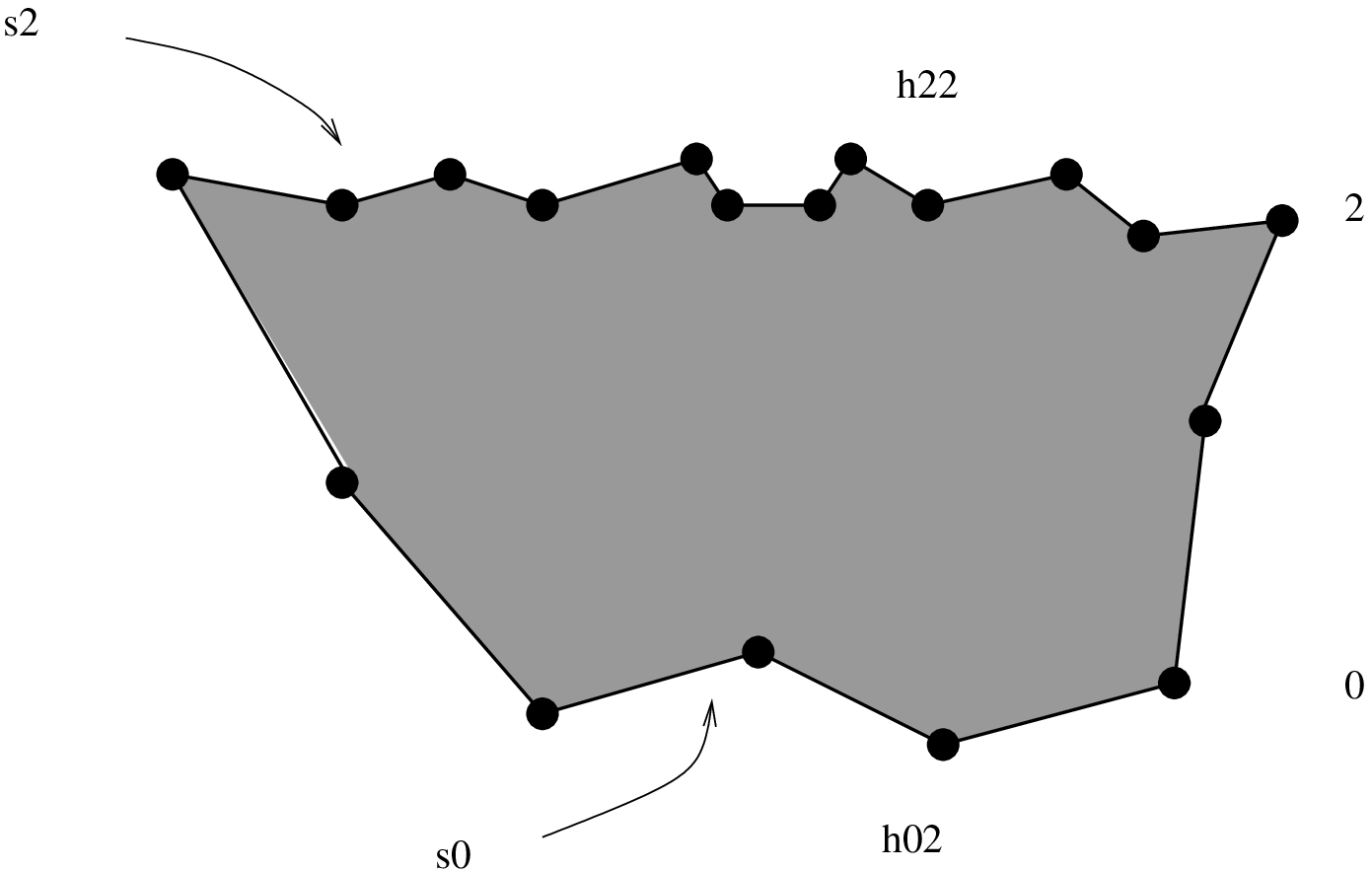}
\centering
\caption{\small }
\end{subfigure}
\caption{\small In a space-time context, different global evolution moves correspond to different regions in space-time. Here we associate the (variables in the) boundary surface $\Sigma_n$ to time step $n$. The move $0\rightarrow1$ is associated to the region $R_1$ with boundary hypersurface $\Sigma_0\cup\Sigma_1$ and boundary Hilbert space $\ch^{\rm phys}_{0\rightarrow1}={}^-\ch^{\rm phys}_0\otimes({}^+\ch^{\rm phys}_1)^*$. The move $1\rightarrow2$ is associated to the region $R_2$ with boundary hypersurface $\Sigma_1\cup\Sigma_2$ and boundary Hilbert space $\ch^{\rm phys}_{1\rightarrow2}={}^-\ch^{\rm phys}_1\otimes({}^+\ch^{\rm phys}_2)^*$. Gluing these regions along $\Sigma_1$ to produce the effective move $0\rightarrow2$ yields the new region $R$ with boundary $\Sigma_0\cup\Sigma_2$. The new region is associated to a new boundary Hilbert space $\ch^{\rm phys}_{0\rightarrow2}={}^-\tilde\ch^{\rm phys}_0\otimes({}^+\tilde\ch^{\rm phys}_2)^*$.}\label{fig_gbf}
\end{center}
\end{figure}

Let us briefly outline the relation between the two frameworks. Since the pre-- and post--physical Hilbert spaces are associated to a given move, we can define a new physical Hilbert space associated to the move $0\rightarrow1$ which is the tensor product of its pre-- and (dual) post--physical Hilbert spaces $\ch^{\rm phys}_{0\rightarrow 1}:={}^-\ch^{\rm phys}_0 \otimes\left({}^+\ch^{\rm phys}_{1}\right)^*$. In the spirit of the `general boundary formulation' the physical Hilbert space of the move $0\rightarrow1$ is then a `boundary' Hilbert space which, at least in a space-time context, is associated to the boundary of a space-time region. Physical boundary states of the move can be written as $\psi^{\rm phys}_{0\rightarrow1}:={}^-\psi^{\rm phys}_0\otimes\iota_1({}^+\phi^{\rm phys}_1)\in\ch^{\rm phys}_{0\rightarrow1}$, where $\iota_1:{}^+\ch^{\rm phys}_1\rightarrow({}^+\ch^{\rm phys}_1)^*$ is a linear isometric involution (see \cite{Oeckl:2003vu,Oeckl:2005bv,Oeckl:2010ra,Oeckl:2011qd} for details).

In the `general boundary formulation' \cite{Oeckl:2003vu,Oeckl:2005bv,Oeckl:2010ra,Oeckl:2011qd} the so-called amplitude map essentially encodes the information about the dynamics. In the present formalism this amplitude map $\rho_{0\rightarrow 1}:\ch^{\rm phys}_{0\rightarrow 1}\rightarrow\mathbb{C}$ can be defined via the unitary isomorphism $U_{0\rightarrow 1}$ and is given by (\ref{trans})
\ba
\rho_{0\rightarrow 1}(\psi^{\rm phys}_{0\rightarrow 1}):=\langle{}^+\psi^{\rm phys}_{1}\big|U_{0\rightarrow 1}({}^-\phi^{\rm phys}_0)\rangle_{\rm phys+}.\nn
\ea
This is the transition amplitude for the initial state ${}^-\phi^{\rm phys}_0$ and final state ${}^+\psi^{\rm phys}_1$. $\rho_{0\rightarrow1}$ is therefore determined by the propagator (and thus the action and measure) of the move $0\rightarrow1$ and thereby contains all the information about the dynamics. 

In order to obtain the amplitude map for the composition of the moves $0\rightarrow1$ and $1\rightarrow2$ one can translate axiom (T5) of \cite{Oeckl:2010ra,Oeckl:2011qd} into the discrete. Using the state matching of procedure (i) in section \ref{sec_comp} one can choose an orthonormal basis $\{\tilde\xi^{\rm phys}_{1i}\}_{i\in I}$ in $\ch^{\rm phys}_1$ (the matched pre-- and post--physical Hilbert space at $n=1$). Then axiom (T5) of \cite{Oeckl:2010ra,Oeckl:2011qd} becomes in our formalism
\ba
\tilde\rho_{0\rightarrow2}(\tilde\psi^{\rm phys}_{0\rightarrow2})&=&\sum_{i\in I}\langle \tilde U_{2\rightarrow1}\,{}^+\tilde\psi^{\rm phys}_2\big|\tilde\xi_{1i}^{\rm phys}\rangle_{\rm phys}\langle\tilde\xi^{\rm phys}_{1i}\big|\tilde U_{0\rightarrow1}\,{}^-\tilde\phi^{\rm phys}_0\rangle_{\rm phys}\nn\\
&=&
\langle \tilde U_{2\rightarrow1}\,{}^+\tilde\psi^{\rm phys}_2\big|\tilde U_{0\rightarrow1}\,{}^-\tilde\phi^{\rm phys}_0\rangle_{\rm phys}\nn\\
&=&\langle {}^+\tilde\psi^{\rm phys}_2\big|\tilde U_{1\rightarrow2}\circ\tilde U_{0\rightarrow1}\,{}^-\tilde\phi^{\rm phys}_0\rangle_{\rm phys+}\nn\\
&\underset{(\ref{nomatch2})}{=}&\int dx_2\,(\psi^{\rm kin}_2)^*\int dx_0\,\tilde{K}^{\tilde f_+}_{0\rightarrow2}\,{}^-\tilde{\phi}^{\rm phys}_0.\nn
\ea
Here $\tilde U_{0\rightarrow1}=\int dx_0\,{}^-\mathbb{P}^B_1\mathbb{P}^A_1{}^+\mathbb{P}^B_1\tilde\kappa_{0\rightarrow1}$ as in (\ref{neffkinprop}) (and similarly for $\tilde U_{1\rightarrow2}$). In the third line we have made use of unitarity and invertibility in the sense of (\ref{unit}, \ref{coninvert}). This gives the correct transition amplitude for the new boundary states $\tilde{\psi}^{\rm phys}_{0\rightarrow2}={}^-\tilde\phi^{\rm phys}_0\otimes\iota_2({}^+\tilde\psi^{\rm phys}_2)$ in the new boundary Hilbert space $\ch^{\rm phys}_{0\rightarrow2}:={}^-\tilde\ch^{\rm phys}_0\otimes({}^+\tilde\ch^{\rm phys}_2)^*$.

Using the tools of the `general boundary formulation', the present formalism can thus be equivalently formulated in terms of boundary Hilbert spaces and amplitude maps. In particular, the probability interpretation of the `general boundary formulation' applies to the present formalism. For evolving physical Hilbert spaces the definition of transition probabilities between initial an final states depends crucially on the given evolution move.

\subsection{Path integral for constrained global moves}\label{sec_pi}

The path integral (PI) for a composition of constrained global moves $0\rightarrow1\rightarrow2\rightarrow\cdots\rightarrow n$ can now be given in two forms. Firstly, the divergent and non-fixed PI simply reads
\ba
K_{0\rightarrow n}(x_n,x_0)&=&\int_{\cq^{n-1}}\prod^{n-1}_{j=0}K_{j\rightarrow j+1}(x_{j+1},x_j)\prod^{n-1}_{l=1}dx_l\nn\\
&{=}&\int_{\cq^{n-1}}e^{i/\hbar\sum^n_{m=1}S_m(x_m,x_{m-1})}\prod^{n-1}_{j=0}M_{j\rightarrow j+1}(x_{j+1},x_j)\prod^{n-1}_{l=1}dx_l.
\ea
On the other hand, the pre--fixed PI is given by 
\ba
\tilde{K}^{f_+}_{0\rightarrow n}(x_n,x_0)&=&\int_{\cq^{n-1}}\prod^{n-1}_{j=0}\tilde{K}^{f_+}_{j\rightarrow j+1}(x_{j+1},x_j)\prod^{n-1}_{l=1}dx_l\nn\\
&=&\int_{\cq^{n-1}}{}^+\mathbb{P}_n\,\kappa_{n-1\rightarrow n}\,\prod^{n-2}_{j=0}\left(\mathbb{P}^A_{j+1}\,{}^-\mathbb{P}^B_{j+1}\,{}^+\mathbb{P}^B_{j+1}\,\kappa_{j\rightarrow j+1}(x_{j+1},x_j)\right)\prod^{n-1}_{l=1}dx_l\nn\\
&{=}&\int_{\cq^{n-1}}e^{i/\hbar\sum^n_{m=1}S_m(x_m,x_{m-1})}\prod^{n-1}_{j=0}M_{j\rightarrow j+1}(x_{j+1},x_j)\prod^{n-1}_{l=1}d\xi_l(x_l).\label{regpi}
\ea
The expression in (\ref{convpip}) manifests that this construction of a PI for constrained global evolution moves is essentially a (projected) sequence of PIPs.

A few comments concerning the last expression for the regularized PI are in place:
\begin{itemize}
\item The definition is clearly formal. For instance, even apart from the formal regularized measures $d\xi_l$ and possible ordering ambiguities of the $\delta(\hat{C}^j)$ within the projectors, the conditions on the measure $M_{j\rightarrow j+1}$ may not in general uniquely determine it.
\item From the second line it can be seen that every (improper) projector on the (`bare') constraint at each step (except at $n=0$) is implemented precisely once to yield a finite state sum. We recall from section \ref{sec_effcon}, however, that in general new `effective' quantum constraints will arise when integrating out intermediate time steps. These effective constraints may require additional regularizations in the above PI and the corresponding projector must only be implemented once. For instance, if an `effective' post--constraint coincides with a `bare' pre--constraint at some step $n=j$, then its projector must be removed in order to obtain a finite result.
\item Integrating (\ref{regpi}) over step $0$ (and inserting ${}^-\mathbb{P}_0$) yields as projector $\ch^{\rm kin}_0\rightarrow{}^+\tilde{\ch}^{\rm phys}_n$ onto all post--constraints at $n$. We comment further on this in the context of simplicial gravity in section \ref{sec_qg}.
\item The construction of the PI in terms of kinematical propagators and (improper) projectors is convenient because it easily allows one to (at least formally) keep track of the divergences.
\end{itemize}


\section{Temporally varying discretization and Dirac observables}\label{sec_dirac}

The evolution move dependence of the pre-- and post--physical Hilbert spaces at a given time step $n$ has severe consequences for the physical degrees of freedom at step $n$ as embodied by the Dirac observables. These too become move dependent. 

We briefly summarize the situation in the classical formalism as this helps to understand the situation in the quantum theory (for a detailed discussion see \cite{Dittrich:2013jaa,Hoehn:2014aoa}). While in the continuum the Dirac observables as propagating degrees of freedom can be determined from a constraint analysis at one instant of time, in the discrete two discrete time steps are necessary to have a notion of propagation. The global Hamiltonian time evolution map of the move $0\rightarrow1$ is well-defined and invertible on the space of pre-- and post--orbits $\mathfrak{H}_0:\cc^-_0/\cg^-_0\rightarrow \cc^+_1/\cg^+_1$. {\it Pre--observables} at $n=0$ are functions $O^-_0$ on the pre--constraint surface $\cc^-_0$ which weakly Poisson commute with all pre--constraints $\{O^-_0,{}^-C^0\}\simeq0$ on $\cc^-_0$. The pre--constraints alone form a first class constraint set. Similarly, {\it post--observables} at $n=1$ are functions $O^+_1$ on the post--constraint surface $\cc^+_1$ which weakly Poisson commute with all post--constraints $\{O^+_1,{}^+C^1\}\simeq0$ on $\cc^+_1$. The post--constraints alone likewise are first class. $\mathfrak{H}_0$ maps the pre--observables bijectively into the post--observables. 

But for a temporally varying discretization the pre--constraint surface $\cc^-_0$ and thus the Poisson algebra of pre--observables at $n=0$ depend on the evolution move. For an effective evolution move $0\rightarrow2$ new effective pre--constraints at $n=0$ may arise which eliminate pre--observables of the move $0\rightarrow1$. The same, of course, holds true for the post--constraints and post--observables at any given step. That is, the pre-- and post--observables in the discrete are always associated to a fixed evolution move rather than a single time step. Since a time evolution move may change the discretization at a given time step it may change the number of propagating degrees of freedom.

The situation in the quantum theory is completely analogous. Consider an evolution move $0\rightarrow1$. In order for an operator to be well-defined on a physical Hilbert space, it must commute with all quantum constraints. We shall call
\begin{itemize}
\item an operator $\hat{O}^-_0$ which commutes with all quantum pre--constraints at $n=0$, $[\hat{O}^-_0,{}^-\hat{C}^0_I]=0$, $\forall\, I$, a \emph{quantum pre--observable} on the pre--physical Hilbert space ${}^-\ch^{\rm phys}_0$. 
\item An operator $\hat{O}^+_1$\  which commutes with all quantum post--constraints at $n=1$, $[\hat{O}^+_1,{}^+\hat{C}^1_I]=0$, $\forall\, I$, a \emph{quantum post--observable} on the post--physical Hilbert space ${}^+\ch^{\rm phys}_1$. 
\end{itemize}

The quantum pre--observables at $n=0$ and the quantum post--observables at $n=1$ are in one-to-one correspondence. Indeed, we can employ the unitary isomorphisms $U_{0\rightarrow1}:{}^-\ch^{\rm phys}_0\rightarrow{}^+\ch^{\rm phys}_1$ and $U_{1\rightarrow0}:{}^+\ch^{\rm phys}_1\rightarrow{}^-\ch^{\rm phys}_0$ (see section \ref{sec_unitary}) to map quantum pre--observables to quantum post--observables and vice versa:
\ba
\hat{O}^+_1=U_{0\rightarrow1}\,\hat{O}^-_0\,U_{1\rightarrow0},\q\q\q
\hat{O}^-_0=U_{1\rightarrow0}\,\hat{O}^+_1\,U_{0\rightarrow1}.\nn
\ea
$U_{0\rightarrow1}\,\hat{O}^-_0\,U_{1\rightarrow0}$ constitutes a well-defined map ${}^+\ch^{\rm phys}_1\rightarrow{}^+\ch^{\rm phys}_1$ since
\ba
[U_{0\rightarrow1}\,\hat{O}^-_0\,U_{1\rightarrow0},{}^+\hat{C}^1_I]\,{}^+\psi^{\rm phys}_1=U_{0\rightarrow1}\,\hat{O}^-_0\,U_{1\rightarrow0}\underset{=0}{\underbrace{{}^+\hat{C}^1_I\,{}^+\psi^{\rm phys}_1}}-\underset{=0}{\underbrace{{}^+\hat{C}^1_I\,U_{0\rightarrow1}}}\,\hat{O}^-_0\,U_{1\rightarrow0}\,{}^+\psi^{\rm phys}_1=0.\nn
\ea
Similarly, $U_{1\rightarrow0}\,\hat{O}^+_1\,U_{0\rightarrow1}$ establishes a well-defined map ${}^-\ch^{\rm phys}_0\rightarrow{}^-\ch^{\rm phys}_0$ because
\ba
[U_{1\rightarrow0}\,\hat{O}^+_1\,U_{0\rightarrow1},{}^-\hat{C}^0_I]\,{}^-\psi^{\rm phys}_0=0.\nn
\ea
These maps can also be used to construct a `Heisenberg picture' for systems with temporally varying discretization which evolves observables rather than physical states. 

However, just like in the classical case, ${}^-\ch^{\rm phys}_0$ and ${}^+\ch^{\rm phys}_1$ depend on the evolution move. For instance, as discussed in section \ref{sec_effcon}, new effective quantum pre--constraints at $n=0$ may appear in the move $0\rightarrow2$ such that the pre--physical Hilbert space at $n=0$ changes to ${}^-\mathbb{\tilde P}_0({}^-\ch^{\rm phys}_0)$ and becomes evolution move dependent. But correspondingly the quantum pre--observable algebra which commutes with all pre--constraints at $n=0$ must change. Only those quantum pre--observables survive which also commute with the new effective pre--constraints. The remaining set of quantum pre--observables can then be mapped under $U_{0\rightarrow2}:{}^-\mathbb{\tilde P}_0({}^-\ch^{\rm phys}_0)\rightarrow{}^+\mathbb{\tilde P}_2({}^+\ch^{\rm phys}_2)$ to a surviving set of quantum post--observables at step $2$. Of course, the same reasoning holds also for post--constraints and quantum post--observables on post--Hilbert spaces at an arbitrary step. Just as in the classical case, the pre-- and post--observables at some step $n$ are therefore associated to a given evolution move---as indicated above---rather than step $n$ only. Different evolution moves will be associated with different pre-- and post--observable algebras such that one has a genuinely varying number of physical degrees of freedom in the course of evolution.

Classically, for the composition of the moves $0\rightarrow1$ and $1\rightarrow2$ one can define the notion of a reduced phase space which corresponds to $\cc^-_1\cap\cc^+_1/\cg_1$, where $\cg_1$ is the first class orbit at $n=1$ \cite{Dittrich:2013jaa}. The reduced phase space corresponds to the set of observables which propagates from $0$ via $1$ to $2$. In particular, any observable $O_1=O^+_1=O^-_1$ which is both a post--observable of the move $0\rightarrow1$ and a pre--observable of the move $1\rightarrow2$ corresponds to such data that propagates {\it through} step $1$. But due to the second class constraints there are other possibilities \cite{Dittrich:2013jaa,Hoehn:2014aoa}. 

In the quantum theory, the situation as regards the composition of moves is clear: any quantum post--observable $\hat{O}^+_1$ that can be mapped to a quantum pre--observable $\hat{O}^-_0$ via the move $0\rightarrow1$ must commute with all post--constraints $[\hat{O}^+_1,{}^+\hat{C}^1_I]=0$. Likewise, any pre--observable $\hat{O}^-_1$ that can be mapped to a post--observable $\hat{O}^+_2$ via the move $1\rightarrow2$ must commute with all pre--constraints $[\hat{O}^-_1,{}^-\hat{C}^1_J]=0$. Accordingly, any observable $\hat{O}_1$ that can be mapped using $U_{0\rightarrow1}$ to a pre--observable $\hat{O}^-_0$ {\it and} using $U_{1\rightarrow2}$ to a post--observable $\hat{O}^+_2$,
\ba
\hat{O}_1=U_{0\rightarrow1}\,\hat{O}^-_0\,U_{1\rightarrow0},\q\q\q \hat{O}^+_2=U_{1\rightarrow2}\,\hat{O}_1\,U_{2\rightarrow1},\nn
\ea
must be both a quantum pre-- and post--observable and commute with all constraints at step $1$
\ba
[\hat{O}_1,{}^+\hat{C}^1_I]=0=[\hat{O}_1,{}^-\hat{C}^1_J].\nn
\ea
In analogy to the classical case, this corresponds to degrees of freedom that propagate from time $0$ {\it through} step $1$ to step $2$ since the composition yields
\ba
 \hat{O}^+_2=U_{1\rightarrow2}\,{}^-\mathbb{P}^B_1\,U_{0\rightarrow1}\,\hat{O}^-_0\,U_{1\rightarrow0}\,{}^+\mathbb{P}^B_1\,U_{2\rightarrow1}.\nn
\ea
The insertion of the projectors ${}^-\mathbb{P}^B_1,{}^+\mathbb{P}^B_1$ is necessary for the reasons discussed in section \ref{sec_comp}. The coinciding pre-- and post--observables $\hat{O}_1$ are thus well-defined operators on the single physical Hilbert space $\ch^{\rm phys}_1$ obtained after matching the pre-- and post--physical states at $n=1$ (see procedure (i) in section \ref{sec_comp}).

Viewing the discrete time evolution as coarse graining, refining or entangling operations \cite{Dittrich:2013xwa}, the present discussion shows how this can affect the algebra of quantum pre-- and post--observables at a given time step and in the course of evolution. In particular, a coarse graining time evolution move leads to additional constraints that irreversibly project out physical degrees of freedom. These issues are further discussed in \cite{Hoehn:2014wwa} in the context of local evolution moves.

\begin{figure}[hbt!]
\begin{center}
\psfrag{0}{$0$}
\psfrag{n}{$n$}
\psfrag{nx}{$n+x$}
\psfrag{no}{\large `Nothing'}
\hspace*{-3.5cm}\begin{subfigure}[b]{.22\textwidth}
\centering
\includegraphics[scale=.3]{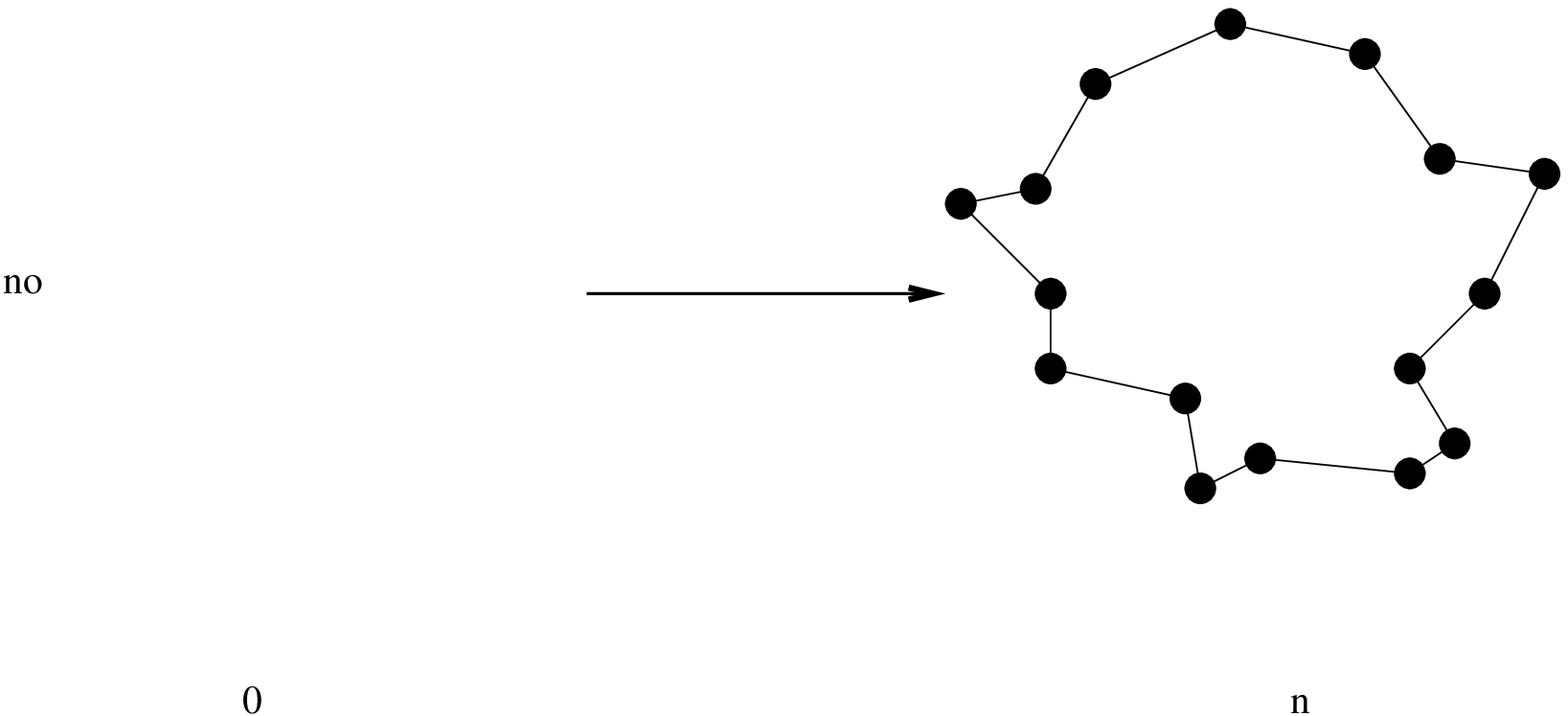}
\centering
\caption{\small }
\end{subfigure}
\hspace*{4.5cm}
\begin{subfigure}[b]{.22\textwidth}
\centering
\includegraphics[scale=.3]{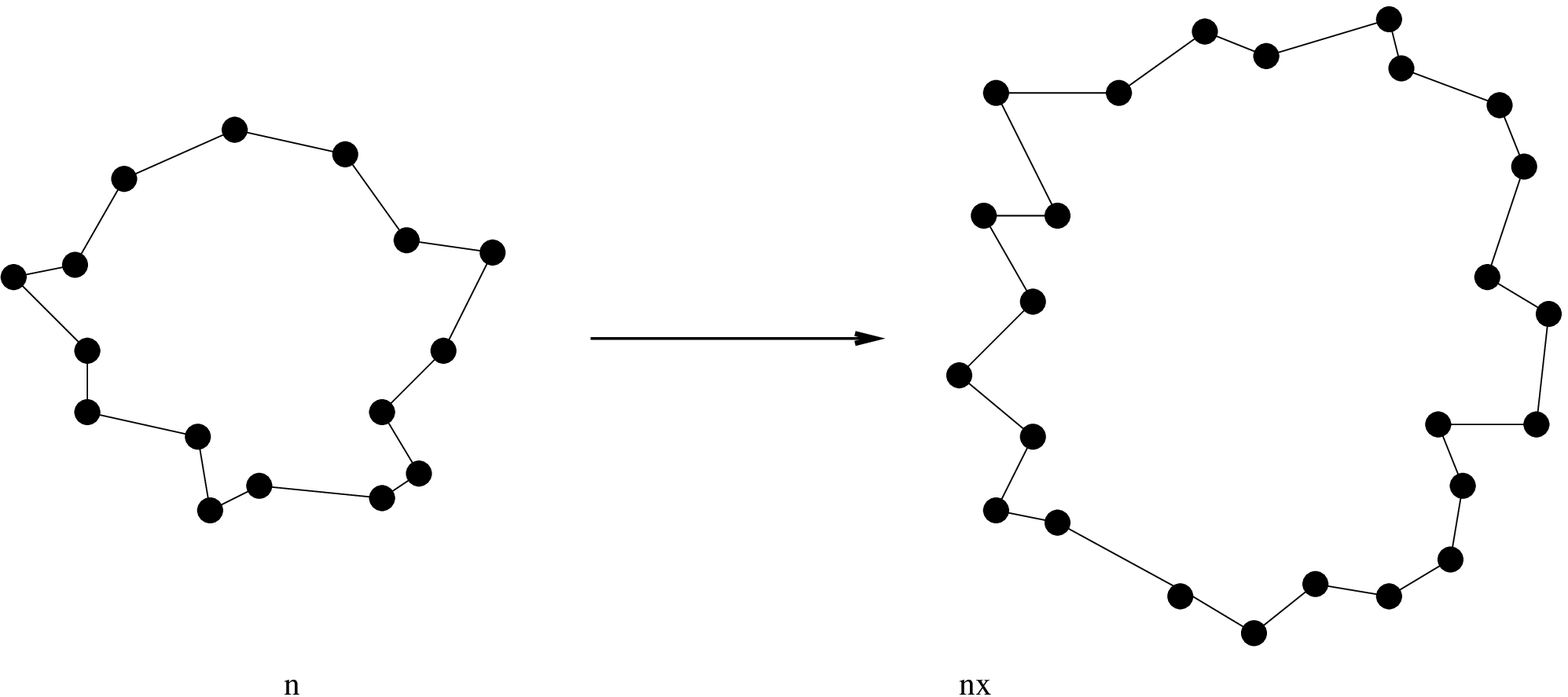}
\centering
\caption{\small }
\end{subfigure}
\caption{\small Schematic illustration of a discrete version of the `no boundary' proposal. (a) A move $0\rightarrow n$ from an empty triangulation to some spherical hypersurface is devoid of pre--observables at $0$ and post--observables at $n$. (b) A further move $n\rightarrow n+x$, however, may feature pre--observables at $n$.}\label{fig_nb}
\end{center}
\end{figure}

Finally, let us briefly comment on the discrete incarnation \cite{Dittrich:2013jaa} of the `no boundary' proposal \cite{Hartle:1983ai} for a quantum gravity vacuum state. This corresponds to starting with an empty space-time triangulation at $n=0$ and evolving to a triangulated spherical `spatial' hypersurfaces at later time steps $1,2,\ldots$ (see figure \ref{fig_nb} for an illustration). Any move $0\rightarrow n$ is totally constrained \cite{Dittrich:2011ke,Dittrich:2013jaa}. This implies that there can be no non-trivial quantum pre--observables $\hat{O}^-_0$ at step $0$ and no non-trivial quantum post--observables $\hat{O}^+_n$ at the future step $n$. This corresponds to the absence of propagating degrees of freedom for any evolution move $0\rightarrow n$ \cite{Dittrich:2013jaa}. Such an evolution can be viewed as refining the discretization of the evolving spatial hypersurface by only adding vacuum degrees of freedom \cite{Dittrich:2013xwa}. The corresponding unique Hartle-Hawking state is then a unique vacuum state associated to the move $0\rightarrow n$. However, the absence of non-trivial post--observables $\hat{O}^+_n$ at a step $n$ does not imply that there are no non-trivial pre--observables $\hat{O}^-_n$ at the same step which may propagate from $n$ under an evolution move $n\rightarrow n+x$. The physical states associated to such a move $n\rightarrow n+x$ can then also no longer be unique. We shall now illustrate this explicitly in a toy model.

\section{Toy model: `creation from nothing'}\label{sec_ex}

For an explicit illustration of the formalism we shall consider a toy model for a discrete version \cite{Dittrich:2013jaa} of the `no--boundary' proposal \cite{Hartle:1983ai}. Namely, we shall consider a free scalar field on the vertices of a 2D lattice which evolves from `nothing' to a two-- and then to a four--dimensional phase space at the subsequent time steps. The evolution moves of this toy model are depicted in figure \ref{toy1}. In this simple example the measure will be uniquely determined (up to unitary phase changes). This toy model also serves as a concrete example to the discussion in sections \ref{sec_effcon} and \ref{sec_dirac}, showing how the post--physical Hilbert space and quantum pre-- and post--observables at a given step depends on the evolution move under consideration. The physical post--states for the moves $0\rightarrow1$ and $0\rightarrow2$ will be unique because these moves are totally constrained. These unique physical states can be viewed as a vacuum of non-propagating degrees of freedom (see also \cite{Dittrich:2013xwa}).

\begin{figure}[hbt!]
\begin{center}
\psfrag{0}{$n=0$}
\psfrag{1}{$n=1$}
\psfrag{2}{$n=2$}
\psfrag{n}{\large `Nothing'}
\psfrag{f1}{$\varphi_1$}
\psfrag{f11}{$\varphi_2^1$}
\psfrag{f12}{$\varphi_2^2$}
\hspace*{-0.5cm}\begin{subfigure}[b]{.22\textwidth}
\centering
\includegraphics[scale=.45]{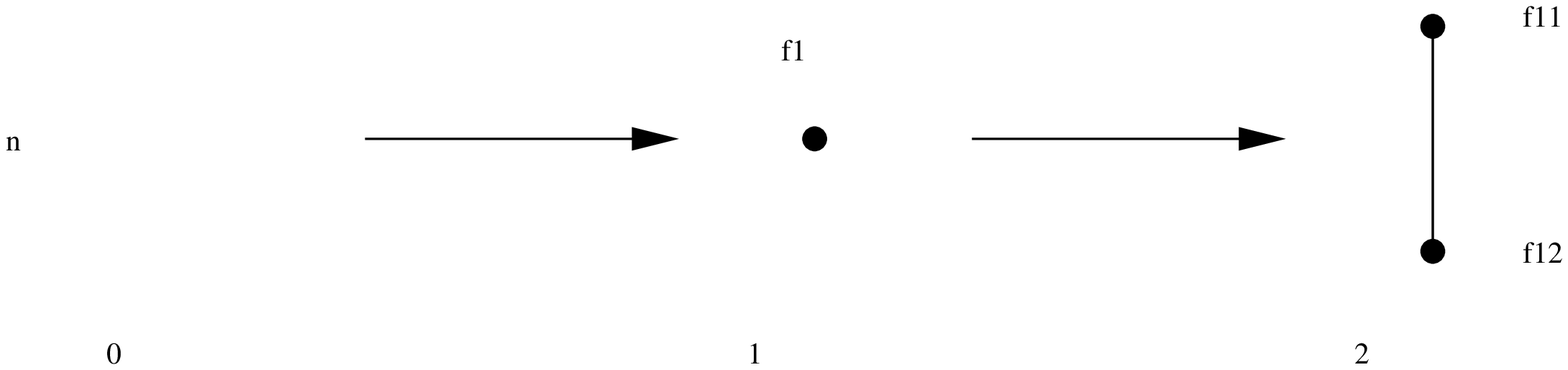}
\centering
\caption{\small }
\end{subfigure}
\hspace*{8.5cm}
\begin{subfigure}[b]{.22\textwidth}
\centering
\includegraphics[scale=.45]{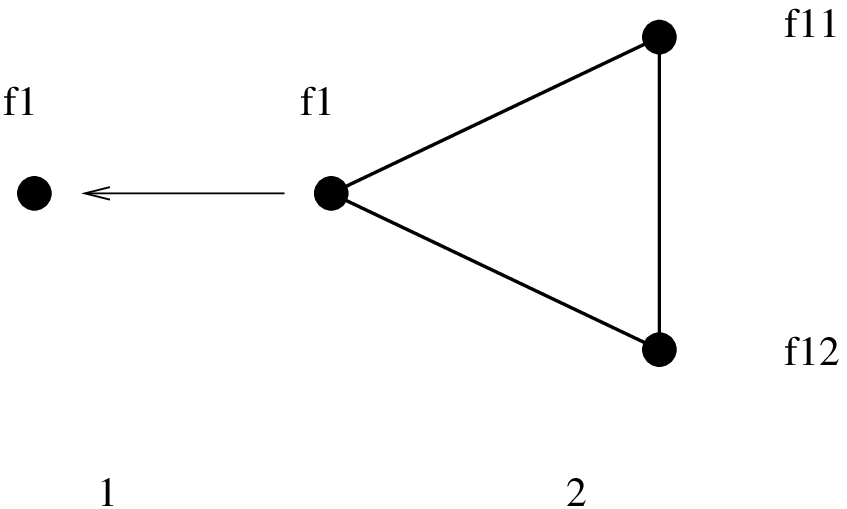}
\centering
\caption{\small }
\end{subfigure}
\caption{\small The toy model of a scalar field on the vertices of a 2D lattice for a `creation from nothing'. (a) The move $0\rightarrow1$ starts from an empty set and introduces one vertex with one field variable $\varphi_1$ at $n=1$. The move $1\rightarrow2$ maps to a field configuration on two vertices at $n=2$. (b) The move $1\rightarrow2$ is a singular move which corresponds to gluing a triangle onto a single vertex. }\label{toy1}
\end{center}
\end{figure}

\subsection{The move $0\rightarrow1$}

We associate the `vertex action'
\ba
S_1=(\vphi_1)^2, \nn
\ea
to the evolution move $0\rightarrow1$ from `nothing' to a two-dimensional phase space at $n=1$. One can extend the empty $\cq_0=\emptyset$ to $\overline{\cq}_0\simeq\mathbb{R}$, coordinatized by $\varphi_0$. Classically, the trivial dependence of $S_1$ on $\varphi_0$ leads to a pre-- and post--constraint
\ba 
{}^-C^0_1=p^0_1\q\q\q{}^+C^1_1=p^1_1-2\varphi_1.\nn
\ea
$\varphi_0$ and $\varphi_1$ are thus an {\it a posteriori} and {\it a priori} free parameter, respectively.

The pre--physical state at $n=0$ is constant,
$
{}^-\psi^{\rm phys}_0=const.
$ 
Using group averaging (\ref{proj}) and the Baker-Campbell-Hausdorff formula, one finds the post--physical state at $n=1$
\ba\label{ex-ppstate1}
{}^+\psi^{\rm phys}_1(\vphi_1)=\delta({}^+\hat{C}^1_1)\,\psi^{\rm kin}_1\!\!\!&=&\!\!\!\!\f{1}{2\pi\hbar}\int ds\,e^{is\left(\hat{p}^1_1-2\vphi^1\right)/\hbar}\,\psi^{\rm kin}_1(\vphi_1)
=\f{1}{2\pi\hbar}\,e^{i(\vphi_1)^2/\hbar}\int dz\,e^{-iz^2/\hbar}\,\psi^{\rm kin}_1(z)\nn\\
&=&const\times e^{i(\vphi_1)^2/\hbar}.
\ea
(We assume $\psi^{\rm kin}_1$ to be finite under a complex Gauss transform.) 
Note that both states are {\it unique} (up to normalization) such that ${}^-\ch^{\rm phys}_0$ and ${}^+\ch^{\rm phys}_1$ of this move are one-dimensional. Accordingly, there are no non-trivial quantum pre-- and post--observables associated to this move.

Thanks to lemma \ref{lem2} one can write the PIP-- at $n=0$ with Faddeev-Popov regularized pre--measure $d\xi^-_0=(2\pi\hbar)\,d\vphi_0\,\delta(\vphi_0-c_0)$ as
\ba
\langle{}^-\psi^{\rm phys}_0|{}^-\psi^{\rm phys}_0\rangle_{\rm phys-}=(2\pi\hbar)\int d\vphi_0\,\delta(\vphi_0-c_0)({}^-\psi^{\rm phys}_0)^*{}^-\psi^{\rm phys}_0\overset{!}{=}1,\nn
\ea
such that we choose ${}^-\psi^{\rm phys}_0=\f{1}{\sqrt{2\pi\hbar}}\,e^{i\theta/\hbar}$, with phase $\theta=const$. Similarly, (\ref{ex-ppstate1}) allows us to write the PIP+ at $n=1$ with Faddeev-Popov regularized post--measure $d\xi_1^+=(2\pi\hbar)\,\delta(\vphi_1-c_1)\,d\vphi_1$
\ba
\langle{}^+\psi^{\rm phys}_1|{}^+\psi^{\rm phys}_1\rangle_{\rm phys+}\!\!\!\!&=&\!\!\!\!\int d\vphi_1\,(\psi^{\rm kin}_1(\vphi_1))^*e^{i(\vphi_1)^2/\hbar}\f{1}{2\pi\hbar}\int dz e^{-iz^2/\hbar}\,\psi^{\rm kin}_1(z)\nn\\
&=&\!\!\!\!\!\f{1}{2\pi\hbar}\Big|\int d\vphi_1 e^{-i(\vphi_1)^2/\hbar}\,\psi^{\rm kin}_1(\vphi_1)\Big|^2\!\!\!=(2\pi\hbar)\int d\vphi_1\delta(\vphi_1-c_1)({}^+\psi^{\rm phys}_1)^*{}^+\psi^{\rm phys}_1.\nn
\ea
We therefore choose ${}^+\psi^{\rm phys}_1=\f{1}{\sqrt{2\pi\hbar}}\,e^{i(\vphi_1)^2/\hbar}$.

Given that the propagator has to satisfy both the pre-- and post--constraint, it reads
\ba
K_{0\rightarrow1}={}^+\psi^{\rm phys}_1\left({}^-\psi^{\rm phys}_0\right)^*=\f{1}{{2\pi\hbar}}\,e^{i(S_1-\theta)/\hbar}=\f{1}{{2\pi\hbar}}\,e^{i\left((\vphi_1)^2-\theta\right)/\hbar}.\nn
\ea
(Compare this with example \ref{ex_ex}.) The pre--fixed propagator $K_{0\rightarrow1}^{f_+}={\delta(\vphi_0-c_0)}\,e^{i\left((\vphi_1)^2-\theta\right)/\hbar}$ 
follows from lemma \ref{lem3}. Evidently, ${}^+\psi^{\rm phys}_1(\vphi_1)=\int d\vphi_0\,K^{f_+}_{0\rightarrow1}{}^-\psi^{\rm phys}_0$. Unitarity of this move $0\rightarrow1$ is a tautology, given that both unique states are normalized. 

\subsection{The move $1\rightarrow2$}

The action of the triangle of the move $1\rightarrow2$ which evolves the system from a two-- to a four--dimensional phase space at $n=2$ is (up to an overall factor $\f{1}{2}$ which we ignore)
\ba
S_2=(\vphi_1)^2+(\vphi_2^1)^2+(\vphi_2^2)^2-\vphi_1\vphi_2^1-\vphi_1\vphi_2^2-\vphi_2^1\vphi_2^2.\nn
\ea
Extending $\cq_1\simeq\mathbb{R}$ to $\overline{\cq}_1\simeq\mathbb{R}^2$ where the auxiliary dimension is coordinatized by $\vphi_1^2$, we have $\dim\overline{\cq}_1=\dim\cq_2=2$. $S_2$ does not depend on $\vphi_1^2$ such that a pre-- and post--constraint arise
\ba
{}^-C^1_2=p^1_2,\q\q\q\q{}^+C^2=p^2_1-p^2_2-3(\vphi_2^1-\vphi_2^2).\nn
\ea
It is convenient to perform a variable transformation at $n=2$
\ba
u_2=\vphi_2^1-\vphi_2^2,\q\q v_2=\vphi_2^1+\vphi_2^2,\q\q p^2_u=\f{1}{2}(p^2_1-p^2_2),\q\q p^2_v=\f{1}{2}(p^2_1+p^2_2).\nn
\ea
In these variables, 
\ba
S_2(\vphi_1,u_2,v_2)&=&(\vphi_1)^2+\f{3}{4}(u_2)^2+\f{1}{4}(v_2)^2-\vphi_1v_2,\label{ex-s2}\\
{}^+C^2_u&=&p^2_u-\f{3}{2}\,u_2.\label{ex-cu}
\ea
We thus have the {\it a posteriori} free (auxiliary) $\vphi_1^2$ and the {\it a priori} free $u_2$.

In the quantum theory, the pre--physical state is given by an arbitrary (square integrable) function of $\vphi_1$, ${}^-\psi^{\rm phys}_1={}^-{\psi}^{\rm phys}_1(\vphi_1)$. Group averaging yields the post--physical state:
\ba
{}^+\psi^{\rm phys}_2(u_2,v_2)=\delta({}^+\hat{C}^2_u)\,\psi^{\rm kin}_2&=&
\f{1}{2\pi\hbar}\,e^{\f{3i}{4\hbar}(u_2)^2}\int dz\, e^{-\f{3i}{4\hbar}z^2}\,\psi^{\rm kin}_2(z,v_2)=e^{\f{3i}{4\hbar}(u_2)^2}\,\overline{\psi}^{\rm phys}_2(v_2),\nn
\ea
where $\overline{\psi}^{\rm phys}_2(v_2)$ is an arbitrary (square integrable) function of $v_2$. 
The pre-- and post--physical states of the move $1\rightarrow2$ are therefore non-unique and ${}^-\ch^{\rm phys}_1\simeq L^2(\mathbb{R},d\vphi_1)$, while ${}^+\ch^{\rm phys}_2\simeq L^2(\mathbb{R},dv_2)$. Indeed, using lemma \ref{lem2}, the PIP-- at $n=1$ is given by
\ba
\langle{}^-\psi^{\rm phys}_1\Big|{}^-\phi^{\rm phys}_1\rangle_{\rm phys-}
=\int d\xi^-_1\,({}^-{\psi}^{\rm phys}_1(\vphi_1))^*\,{}^-{\phi}^{\rm phys}_1(\vphi_1)\label{pip-}
\ea
with cylindrical pre--measure $d\xi^-_1=(2\pi\hbar)\,d\vphi_1\,d\vphi_1^2\,\delta(\vphi_1^2-c_1^2)$. Likewise, the PIP+ at $n=2$ is 
\ba
\langle{}^+\psi^{\rm phys}_2\Big|{}^+\phi^{\rm phys}_2\rangle_{\rm phys+}&=&\f{1}{2\pi\hbar}\int du_2\,dv_2\,(\psi^{\rm kin}_2(u_2,v_2))^*\,e^{\f{3i}{4\hbar}(u_2)^2}\int dz\, e^{-\f{3i}{4\hbar}z^2}\,\phi^{\rm kin}_2(z,v_2)\nn\\
&=&\int d\xi^+_2(u_2,v_2)({}^+\psi^{\rm phys}_2(u_2,v_2))^*\,{}^+\phi^{\rm phys}_2(u_2,v_2),\nn
\ea
with Faddeev-Popov regularized post--measure $d\xi^+_2=({2\pi\hbar})\,\delta(u_2-c_2)\,du_2\,dv_2$. 

The independent quantum pre--observables of $1\rightarrow2$ are $\hat{\vphi}_1,\hat{p}^1$, while the  quantum post--observables are $\hat{v}_2,\hat{p}^2_v$.

For the propagator $K_{1\rightarrow2}$ we require
\ba
{}^+\hat{C}^2_u\,K_{1\rightarrow2}=0={}^-\hat{C}^1_2\,(K_{1\rightarrow2})^*,\nn
\ea
such that
\ba
K_{1\rightarrow2}(\vphi_1,u_2,v_2)=\f{1}{2\pi\hbar}\,e^{\f{3i}{4\hbar}(u_2)^2}\,f_{1\rightarrow2}(v_2,\vphi_1),\nn
\ea
where $f_{1\rightarrow2}$ is a function which we shall now determine. Thanks to lemma \ref{lem3} the (Faddeev-Popov) pre-- and post--fixed propagators read
\ba\label{ex-pfprop}
K^{f_+}_{1\rightarrow2}=(2\pi\hbar)\,\delta(\vphi_1^2-{\vphi'}_1^2)\,K_{1\rightarrow2},\q\q\q\q K^{f_-}_{1\rightarrow2}=(2\pi\hbar)\,\delta(u_2-u'_2)\,K_{1\rightarrow2}.\nn
\ea
respectively. These must satisfy (\ref{coninvert}) in the form
\ba
\delta(u_2-u'_2)\,\delta(v_2-v'_2)&\overset{!}{=}&\int d\vphi_1\,d\vphi_1^2\,K^{f_+}_{1\rightarrow2}(\vphi_1,u_2,v_2)\,\left(K^{f_-}_{1\rightarrow2}(\vphi_1,u'_2,v'_2)\right)^*\nn\\
\delta(\vphi_1-{\vphi'}_1)\,\delta(\vphi_1^2-{\vphi'}_1^2)&\overset{!}{=}&\int du_2\,dv_2\,K^{f_-}_{1\rightarrow2}({\vphi'}_1,u_2,v_2)\left(K^{f_+}_{1\rightarrow2}(\vphi_1,u_2,v_2)\right)^*.\nn
\ea
As one can easily check, up to unitary phase changes, these conditions uniquely imply
  \ba
 K_{1\rightarrow2}(\vphi_1,u_2,v_2)=\left(\f{1}{{2\pi\hbar}}\right)^{3/2}\,e^{iS_2(\vphi_1,u_2,v_2)/\hbar}=\left(\f{1}{{2\pi\hbar}}\right)^{3/2}\,e^{i\left((\vphi_1)^2+\f{3}{4}(u_2)^2+\f{1}{4}(v_2)^2-\vphi_1v_2\right)/\hbar}.\label{ex-prop12}\nn
 \ea 
 
 Using ${}^+\psi^{\rm phys}_2=\int d\vphi_1\,d\vphi_1^2\,K^{f_+}_{1\rightarrow2}{}^-\psi^{\rm phys}_1$ and the above equations, it is straightforward to check that $1\rightarrow2$ is unitary
  \ba
  \langle{}^+\psi^{\rm phys}_2\Big|{}^+\phi^{\rm phys}_2\rangle_{\rm phys+}= \langle{}^-\psi^{\rm phys}_1\Big|{}^-\phi^{\rm phys}_1\rangle_{\rm phys-}.\nn
  \ea


 \subsection{Composition to the effective move $0\rightarrow2$}
 
Finally, let us compose the moves $0\rightarrow1$ and $1\rightarrow2$ to an effective move $0\rightarrow2$. Gaussian integration yields for the pre--fixed propagator\footnote{We have applied a phase shift to eliminate an $i$ in the measure.}
\ba\label{ex-effprop}
{K}^{f_+}_{0\rightarrow2}\!\!\!\!\!&=&\!\!\!\!\!\int d\vphi_1\,d\vphi_1^2\,{K}^{f_+}_{1\rightarrow2}\,{K}^{f_+}_{0\rightarrow1}
=\f{1}{2}\,e^{\f{i}{8\hbar}(6(u_2)^2+(v_2)^2-8\theta)}\,\delta(\vphi_0-c_0)=\f{1}{2}\,e^{i(\tilde{S}_{02}(u_2,v_2)-\theta)/\hbar}\,\delta(\vphi_0-c_0),\nn
\ea 
where 
\ba
\tilde{S}_{02}=\f{3}{4}(u_2)^2+\f{1}{8}(v_2)^2\nn
\ea
is the classical effective action or Hamilton's principal function (i.e.\ $S_1+S_2$ with $\vphi_1$ integrated out) for the effective move $0\rightarrow2$. Hence, the (effective) post--physical state at $n=2$ for the move $0\rightarrow2$ becomes unique
\ba
{}^+\tilde{\psi}^{\rm phys}_2(u_2,v_2)=\int d\vphi_0\,K^{f_+}_{0\rightarrow2}\,{}^-\psi^{\rm phys}_0=\f{1}{\sqrt{8\pi\hbar}}\,e^{i\tilde{S}_{02}(u_2,v_2)/\hbar}\nn
\ea
and satisfies the new effective post--constraints ${}^+\tilde{C}^2_v=p^2_v-\f{1}{4}v_2$ of the action $\tilde{S}_{02}$
\ba
{}^+\widehat{\tilde C}{}^2_v\,{}^+\tilde{\psi}^{\rm phys}_2=(\hat{p}^2_v-\f{1}{4}v_2)\,\f{1}{\sqrt{8\pi\hbar}}\,e^{i\tilde{S}_{02}(u_2,v_2)/\hbar}=0.\nn
\ea
The move $0\rightarrow2$ is therefore fully constrained---in contrast to $1\rightarrow2$. The post--physical Hilbert space ${}^+\tilde{\ch}^{\rm phys}_2$ of the move $0\rightarrow2$ is thus one-dimensional---in contrast to the post--physical Hilbert space ${}^+\ch^{\rm phys}_2\simeq L^2(\mathbb{R},dv_2)$ of the move $1\rightarrow2$ which is infinite dimensional. Similarly, for the move $0\rightarrow2$ there are no non-trivial quantum pre-- and post--observables---in contrast to $1\rightarrow2$. This provides an explicit example for the discussion in sections \ref{sec_effcon} and \ref{sec_dirac} and illustrates how the post--physical Hilbert space and the Dirac pre-- and post--observables at a given time step depend on the evolution move. ${}^-\ch^{\rm phys}_0$ and ${}^+\tilde\ch^{\rm phys}_2$ can be regarded as representing a unique physical vacuum state without propagating degrees of freedom.

\section{Remarks on the special situation in simplicial gravity models}\label{sec_qg}

So far we have considered general variational discrete systems. 
However, even apart from the fact that Euclidean configuration spaces $\cq\simeq\mathbb{R}^N$ are not appropriate for quantum gravity models (Euclidean here does {\it not} refer to the space-time signature), there are a number of special properties of gravity one has to take into account when adapting the present formalism to a simplicial gravity model. 

The situation in gravity is special for many reasons. Classically, the dynamics and diffeomorphism symmetry of the continuum theory is generated by the Hamiltonian and diffeomorphism constraints. The Dirac hypersurface deformation algebra of these constraints implies a path independence of the evolution between an initial and final spatial hypersurface \cite{kieferbook}. This constraint structure also entails that in quantum gravity there is no (coordinate) time evolution and physical states are {\it a priori} `timeless' \cite{Kuchar:1991qf,Isham:1992ms,Anderson:2012vk}. Instead, the path integral is expected to act as a projector onto solutions to the quantum Hamiltonian and diffeomorphism constraints \cite{Halliwell:1990qr,Rovelli:1998dx,Noui:2004iy,Thiemann:2013lka}. This, in particular, means that physical states, solving the Hamiltonian and diffeomorphism constraints, do not evolve under the action of a time evolution operator (given by an exponential of the constraints). This apparent `timelessness' notwithstanding, the physical states contain the entire information about the dynamics and a notion of evolution with respect to internal clock degrees of freedom can often be extracted using the relational paradigm of dynamics \cite{Rovelli:2004tv,Rovelli:1989jn,Bojowald:2010xp,Bojowald:2010qw,Hohn:2011us,Tambornino:2011vg}.

For 3D vacuum Regge Calculus \cite{Regge:1961px} (without cosmological constant), albeit being a simplicial gravity model, the situation is analogous. The reason is that the (flat) space-time discretization in terms of a (flat) Regge triangulation is a so-called perfect discretization which preserves the symmetries and dynamics of the continuum \cite{Bahr:2009qc,Bahr:2009mc,Dittrich:2011vz}. This is a consequence of the fact that the 3D theory is special in that it does not contain any local propagating degrees of freedom. Since it is a perfect discretization of the continuum theory, 3D Regge Calculus is also a totally constrained system and features the Hamiltonian and diffeomorphism constraints as pre-- and post--constraints \cite{Dittrich:2011ke}. In fact, the pre--constraints always coincide with the post--constraints in 3D Regge Calculus (i.e.\ they are always of case (a) in section \ref{sec_fullcomp}) such that non-trivial coarse graining pre-- and post--constraints of cases (b1) and (b2) do not arise. This, again, is a consequence of the topological nature of the theory. The system is hyperbolic in the sense that different solutions arising from a given initial data set are equivalent by symmetry transformations \cite{Dittrich:2011ke,Dittrich:2013jaa}. In the discrete such symmetry transformations correspond to vertex translations within the triangulation that can also move a vertex on top of another or split a vertex into two \cite{Rocek:1982fr,Bahr:2009ku,Dittrich:2009fb,Dittrich:2011ke}. In analogy to the continuum, this, in particular, implies a path independence as argued in \cite{Dittrich:2013xwa}: given an initial and a final spatial triangulated hypersurface $\Sigma_i$ and $\Sigma_f$, it does not matter by means of which discrete evolution moves, and thus through which triangulated hypersurfaces, one evolves from $\Sigma_i$ to $\Sigma_f$ (see figure \ref{fig_pi} for an illustration). Thanks to these symmetries and the hyperbolicity it does not matter which spatial triangulations one chooses in the evolution; without loss of generality, in the 3D theory, one may restrict the dynamics to (spatial) triangulation preserving global evolution moves.\footnote{In a local evolution, one would, however, require the full set of Pachner moves \cite{Dittrich:2011ke,Hoehn:2014wwa}.}

In its quantized form as the Ponzano--Regge spin foam model \cite{ponzreg,Freidel:2004vi}, 3D Quantum Regge Calculus features a diffeomorphism symmetry \cite{Freidel:2002dw}, its path integral is a projector onto solutions of the quantum constraints \cite{Noui:2004iy} and the model is triangulation independent \cite{Perez:2012wv,Zapata:2002eu,Dittrich:2011vz}. Hence, also in the quantum theory one encounters a path independence and hyperbolicity of the evolution. Since non-trivial coarse graining constraints of cases (b1) and (b2) of section \ref{sec_fullcomp} do not arise, there will also not occur any non-unitary projection of physical Hilbert spaces and pre-- and post--observables in the 3D quantum theory. Hence, time evolution must always unitarily map between isomorphic pre-- and post--physical Hilbert spaces which, nevertheless, may correspond to different discretizations. In this case, the move dependence of physical Hilbert spaces disappears.

 As argued in \cite{Dittrich:2013xwa}, the discretization changing Hamiltonian time evolution \cite{Dittrich:2011ke,Bonzom:2011hm} can be reconciled with the fact that physical states do not evolve for a totally constrained system by identifying the physical states at different time steps and on different spatial triangulations with one another. The discretization changing time evolution can rather be viewed as a refining, coarse graining or entangling operation which represents one and the same physical state on different discretizations and thereby different Hilbert spaces. From this perspective it is also not useful to distinguish between `forward' and `backward' evolution in quantum gravity---as we have done so far in this work by only considering factors of $e^{iS_1/\hbar}$ (rather than also $e^{-iS_1/\hbar}$) in a propagator $K_{0\rightarrow1}$---because nothing physically changes. Instead, `forward' and `backward' evolution should be simultaneously considered in the quantum theory which leads to a superposition of both\footnote{Superposition of `forward' and `backward' relational evolution in an internal clock degree of freedom is also a generic feature of gravitational systems, see \cite{Bojowald:2010xp,Bojowald:2010qw,Hohn:2011us} for a detailed discussion. However, this superposition of internal time directions originates in the quadratic momentum structure of the Hamiltonian constraint rather than a summation over both positive and negative values of lapse and shift.} and can be viewed as an integration over both positive and negative values of lapse and shift. This, moreover, helps to ensure diffeomorphism symmetry and bears on the tension between `causality' (understood in the sense of time direction) and gauge invariance in quantum gravity elaborated on in \cite{Teitelboim:1983fh} (see also \cite{Dittrich:2013xwa,Hoehn:2014wwa} for a discussion). This is also the reason for the appearance of the Regge action in the cosine (rather than an exponential) in the semiclassical limit of spin foam models \cite{Conrady:2008mk,Barrett:2009gg,Perez:2012wv}. 

\begin{SCfigure}
\psfrag{sf}{$\Sigma_f$}
\psfrag{si}{$\Sigma_i$}
\includegraphics[scale=.35]{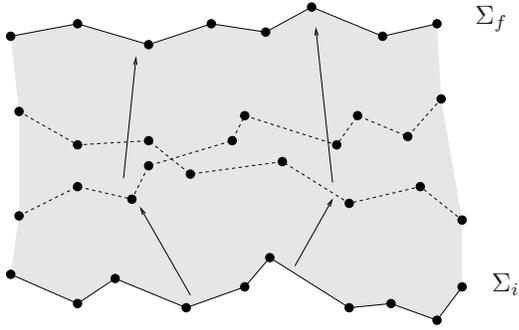}
\hspace*{1.8cm} \caption{\small A discrete path independence, as also discussed in \cite{Dittrich:2013xwa}, requires that the discrete evolution from an initial spatial hypersurface $\Sigma_i$ to a final spatial hypersurface $\Sigma_f$ is independent of the choice of evolution moves and intermediate hypersurfaces through which one evolves. 3D Regge Calculus features this path independence, while 4D Regge Calculus does not. }\label{fig_pi}
\end{SCfigure}

For 4D simplicial gravity models, on the other hand, the diffeomorphism symmetry of the continuum is generically broken for curved solutions \cite{Bahr:2009ku,Bahr:2011xs,Bonzom:2013ofa} and the Hamiltonian and diffeomorphism constraints \cite{Dittrich:2009fb,Dittrich:2011ke,Dittrich:2013jaa} do not in general arise as exact pre-- or post--constraints but rather as approximate or `pseudo'-constraints \cite{Dittrich:2009fb,Dittrich:2011ke,Gambini:2002wn,Gambini:2005vn}. The canonical dynamics is generated by global or local evolution moves \cite{Dittrich:2009fb,Dittrich:2011ke} and, given the generic absence of Hamiltonian and diffeomorphism constraints, not equivalent to a constraint generated dynamics---in contrast to the 3D case. As a consequence of the broken diffeomorphism symmetries, 4D Regge Calculus is a classically non-hyperbolic system such that different solutions arising from a given initial data set may be inequivalent in the sense that they can no longer be mapped into each other by symmetry transformations \cite{Dittrich:2011ke,Dittrich:2013jaa}. In particular, the 4D theory does not feature the path independence of the continuum or of 3D Regge Calculus; it does matter by means of which evolution moves and through which spatial triangulated hypersurfaces one evolves from an initial $\Sigma_i$ to a final $\Sigma_f$ (see figure \ref{fig_pi}). In contrast to the 3D theory, the non-hyperbolicity implies that restricting the dynamics to spatial triangulation preserving global evolution moves also entails a restriction of the solution space of 4D Regge Calculus \cite{Dittrich:2011ke,Dittrich:2013jaa}.\footnote{For a non-trivial dynamics the full set of canonical Pachner evolution moves in 4D is required \cite{Dittrich:2011ke,Dittrich:2013jaa,Hoehn:2014wwa,dh4}.} 

In the quantum theory the path integral for discrete systems, if constructed via (\ref{regpi}), will project onto solutions to all post--constraints. However, since for 4D Regge Calculus the set of pre-- and post--constraints does not in general include the Hamiltonian and diffeomorphism constraints \cite{Dittrich:2009fb,Dittrich:2011ke,Dittrich:2013jaa} and the latter arise rather as `pseudo'-constraints, the path integral can only be expected to act as an approximate projector onto solutions to the Hamiltonian and diffeomorphism constraints for large scales \cite{Dittrich:2008pw,Dittrich:2013xwa}. Consequently, since time evolution is no longer a symmetry, (pre-- and post--)physical states will generally evolve non-trivially under the time evolution moves---including refining ones. One can no longer expect a path independence in the quantum theory \cite{Dittrich:2013xwa}; evolving via distinct sets of evolution moves from a given $\Sigma_i$ to a $\Sigma_f$ will in general also generate distinct pre-- and post--physical states at $i$ and $f$. This is because there will exist many inequivalent bulk triangulations (in the sense that there exist no symmetry transformations translating among them) interpolating the fixed pair $\Sigma_i$ and $\Sigma_f$. This will also hold for a refining time evolution. Similarly, the quantum theory can be expected to be non-hyperbolic in the following sense: given an initial pre--physical state, inequivalent post--physical states may be generated from it by different sequences of evolution moves. In this case, one can no longer identify different physical states at different time steps as being one and the same physical state, but represented on different discretizations, as in the 3D theory.

Nevertheless, despite the broken symmetries, the discretization changing time evolution moves in 4D simplicial gravity should also not be viewed as a proper time evolution (in an external discrete time) but likewise as coarse graining, refining and entangling moves and the `forward' and `backward' direction must be superposed as in the 3D case \cite{Dittrich:2013xwa}. The difference to the 3D case is that, due to the absence of a path independence, 
pre-- and post--physical states even related by refining evolution moves can in general no longer be identified with each other. Furthermore, since the 4D theory contains propagating degrees of freedom, non-trivial coarse graining pre-- or post--constraints of cases (b1) and (b2) in section \ref{sec_fullcomp} can occur, such that observables, physical states and Hilbert spaces will change non-unitarily under coarse graining moves.

In order to construct a consistent 4D quantum gravity theory from simplicial discretizations, one should rather aim at constructing a cylindrically consistent dynamics which also admits the definition of a continuum limit \cite{Dittrich:2012jq,Dittrich:2013xwa}. The notion of dynamical cylindrical consistency employs the (quantum) time evolution maps of refining time evolution moves as embedding maps of Hilbert spaces of coarser physical states into Hilbert spaces of finer physical states. (Note that within the present formalism such dynamical embedding maps are specified by the action.) The states on the coarser and finer discretization are identified as the same physical state but represented on different discretizations such that, in analogy to the continuum, physical states do not evolve in the `external' discrete time steps (for refining moves). 
This yields a cylindrical equivalence class of states that can also be embedded in the continuum Hilbert space \cite{Dittrich:2012jq,Dittrich:2013xwa}. The cylindrical consistency condition for such dynamical embedding maps is equivalent to a path independence (from coarser to finer discretizations) \cite{Dittrich:2013xwa}. That is, any choice of a refining time evolution leading to the same discretization yields the same physical state such that it is meaningful to identify states. This, in turn, implies the implementation of a consistent (anomaly free) discrete Dirac hypersurface deformation algebra \cite{Bonzom:2013tna}. Dynamical cylindrical consistency thus implies a strong relation among diffeomorphism symmetry, discretization independence and path independence \cite{Bahr:2009ku,Bahr:2009qc,Bahr:2011uj,Dittrich:2012qb,Dittrich:2011vz,Dittrich:2013xwa}. Given the discussion above, this relation can be extended to include hyperbolicity in the above sense.

This notion of dynamical cylindrical consistency requires the discretization to be `perfect' in the sense of supporting the continuum dynamics \cite{Dittrich:2013xwa,Bahr:2009qc,Bahr:2009mc,Dittrich:2011vz}. It should be noted that cylindrical consistency of states always requires the direction from coarser to finer discretizations (or graphs) because this corresponds to the direction of the inductive limit of Hilbert spaces (on discretizations) which gives rise to the continuum Hilbert space. That is, a state on a coarser discretization can be identified, using a refining time evolution, with a state on a finer discretization. However, the converse is not true (with the exception of 3D Regge Calculus which is topological). Not every state on a finer discretization can be identified, using a coarse graining time evolution, with a state on a coarser discretization because the former will, in general, carry more dynamical information. For theories with local propagating degrees of freedom, going from a finer to a coarser discretization must in general lead to the non-trivial coarse graining pre--constraints of case (b1) at the `finer' time step. This, in particular, includes perfect discretizations with propagating degrees of freedom that encode the continuum dynamics. That is, while dynamical cylindrical consistency of perfect discretizations with propagating degrees of freedom implies that refining time evolution can be used to identify states on different discretizations, it does \underline{not} preclude the non-unitary projections of physical Hilbert spaces and Dirac pre-- and post--observables under non-trivial coarse graining moves.


A generic 4D simplicial space-time discretization, on the other hand, is not a perfect discretization and thus does not give rise to a cylindrically consistent theory, as discussed above. But, using coarse graining techniques, one can construct effective theories that feature an improved behaviour as regards symmetries and dynamics \cite{Bahr:2009ku,Bahr:2009qc,Bahr:2011uj,Dittrich:2012qb,Dittrich:2012jq}. At fixed points of the coarse graining flow these theories can be expected to possess a cylindrically consistent dynamics (see \cite{Dittrich:2013voa} for work in this direction). For interacting theories such improved discretizations generally involve very non-local couplings which render them analytically difficult. An alternative proposal to construct a cylindrically consistent continuum theory of quantum gravity has been put forward in \cite{Dittrich:2012jq,Dittrich:2013xwa}: instead of focusing on constructing a perfect discretization from an underlying simplicial discretization, one may work with the amplitude maps of the `general boundary formulation'---discussed in section \ref{sec_gbf}---and require that these constitute (dynamical) cylindrically consistent observables (which can thus be defined on the continuum Hilbert space).

 \section{Summary and conclusions}\label{sec_conc}

One of the most pressing issues in quantum gravity is to better understand and interpret the discretization (or graph) changing Hamiltonian dynamics appearing in various approaches \cite{Thiemann:1996ay,Thiemann:1996aw,Alesci:2010gb,Dittrich:2011ke,Dittrich:2013jaa,Dittrich:2013xwa} and, related to this, to understand the relation between covariant state sum models and canonical approaches \cite{Noui:2004iy,Dittrich:2009fb,Dittrich:2011ke,Alesci:2011ia,Thiemann:2013lka,Alexandrov:2011ab,Bonzom:2011tf,Alesci:2010gb}.
 
As a step in this direction, this manuscript offers a systematic quantum formalism for variational discrete systems with flat Euclidean configuration spaces $\cq\simeq\mathbb{R}^N$ which is applicable to both Euclidean and Lorentzian space-time signature. It employs the action to generate the canonical dynamics in terms of propagators and thereby directly links the covariant and canonical picture. The formalism encompasses both constrained and unconstrained global evolution moves and incorporates both constant and evolving Hilbert spaces. It thus applies to both discretization preserving and changing dynamics. Pre-- and post--physical states are constructed through projection of kinematical states with group averaging projectors. In order to construct the state sum from a composition of global evolution moves, we introduce the notion of kinematical propagators. In analogy to kinematical and physical states, physical propagators are constructed via the projection of kinematical propagators with group averaging projectors. Such a projection procedure can be viewed as a construction principle for the path integral of constrained quantum systems. This method automatically keeps track of divergences arising in the path integral; divergences can be easily regularized by dropping (or gauge fixing) doubly occurring projectors in the convolution of propagators. This also suggests a new perspective on the study of tracing and regularizing divergences in spin foam quantum gravity models \cite{Riello:2013bzw,Bonzom:2013ofa,Bonzom:2010ar}.

The various types of constraints in the quantum theory and their roles can be discussed. In summary:
 \begin{itemize}
 \item[(a)] Constraints $\hat{C}^n_a$ that are both pre-- and post--constraints are first class symmetry generators and responsible for genuine divergences in the composition of propagators to a state sum.
 \item[(b1)] Pre--constraints ${}^-\hat{C}^n_{b_-}$ which are independent of the post--constraints but first class are non-trivial coarse graining conditions on the post--physical states of the move $n-1\rightarrow n$. These constraints project out a subset of Dirac observables of the move $n-1\rightarrow n$, corresponding to `finer information', in the composition with $n\rightarrow n+1$. Such pre--constraints do {\it not} cause divergences in the path integral, but yield non-unitary projections of Hilbert spaces and `fine grained' Dirac observables.
 \item[(b2)] Post--constraints ${}^+\hat{C}^n_{b_+}$ which are independent of the pre--constraints but first class ensure that a pre--physical state at $n-1$ carrying `coarser information' can be evolved under $n-1\rightarrow n$ into a post--physical state at $n$ on a refined discretization. Such post--constraints are non-trivial coarse graining conditions for the pre--physical states at $n$ of the move $n\rightarrow n+1$. Non-trivial Dirac observables of the move $n\rightarrow n+1$, corresponding to `finer' information, are projected out in the composition with $n-1\rightarrow n$. These post--constraints do {\it not} lead to divergences in the path integral, but to non-unitary projections of physical Hilbert spaces and `fine grained' Dirac observables.
 \item[(c)] Second class pre-- and post--constraints \cite{Dittrich:2013jaa} are solved classically.
 \end{itemize}
 We emphasize that constraints of cases (b1) and (b2) only occur for systems with a temporally varying discretization and propagating degrees of freedom. 
 
As pointed out in section \ref{sec_effcon}, the composition of evolution moves generating a discretization changing dynamics can lead to a `propagation' of quantum constraints. For instance, in a composition of two moves $0\rightarrow1$ and $1\rightarrow 2$, pre--constraints at $1$ of case (b1) above induce new effective pre--constraints at step $0$ for the effective move $0\rightarrow2$. These effective pre--constraints ensure that the pre--physical states at $0$ that are mapped to post--states at $1$ only carry information up to a certain allowed refinement scale. In this case, they can be consistently propagated further to $2$. In other words, the pre--states at $0$ and the post--states at $1$ must correspond to a refinement of a coarser state. Only those Dirac observables at $0$ survive the composition of the moves which also commute with the effective pre--constraints at $0$. Any additional Dirac observables representing physical degrees of freedom carrying information below this refinement scale do not commute with the new pre--constraints and are projected out via the corresponding projectors. By the same reasoning, the post--constraints at $1$ of case (b2) above give rise to new effective post--constraints at step $2$. These too ensure that degrees of freedom of the move $1\rightarrow2$ below a certain refinement scale set by the move $0\rightarrow1$ are projected out.

For a temporally varying (imperfect) discretization, pre-- and post--constraints---and therefore the pre-- and post--physical Hilbert spaces as well as the quantum pre-- and post--observables---are thus evolution move dependent. This was also explicitly demonstrated in the toy model of section \ref{sec_ex}. 
In particular, the pre-- and post--constraints of cases (b1) and (b2) above enforce a non-unitary projection of physical Hilbert spaces upon composing evolution moves. In this sense, the physical Hilbert spaces evolve in the course of evolution. Coarse graining thus leads to an irreversible loss of information in the dynamics because the non-unitary projections of Hilbert spaces cannot be undone. For further illustration and an explicit implementation of the present formalism for quadratic discrete actions, see also \cite{Hoehn:2014aoa}. 


As pointed out in section \ref{sec_gbf}, the present formalism can be viewed as a discrete version of the `general boundary formulation' of quantum theory \cite{Oeckl:2003vu,Oeckl:2005bv,Oeckl:2010ra,Oeckl:2011qd}. In this light, it is not surprising that different global evolution moves are generally associated to different constraints, physical Hilbert spaces and physical degrees of freedom. In a space-time context, different global evolution moves correspond to different triangulated space-time regions and these can be quite arbitrary. 

The situation in simplicial quantum gravity models is special for many reasons, as discussed in section \ref{sec_qg}. The time evolution moves should be viewed as generating a coarse graining, refining or entangling of the degrees of freedom of the discretization, rather than a proper time evolution of the physical states \cite{Dittrich:2013xwa}. Furthermore, the quantum time evolution moves in simplicial gravity contain a superposition of `forward' and `backward' orientation of the `time direction'. For 4D simplicial gravity models the situation is additionally complicated by the fact that the diffeomorphism symmetry of the continuum is generically broken \cite{Bahr:2009ku,Dittrich:2008pw,Bahr:2011xs,Bonzom:2013ofa} such that the path independence and hyperbolicity of the continuum dynamics is absent. In order to construct a cylindrically consistent 4D discrete dynamics which features such a path independence, diffeomorphism symmetry and admits the definition of a continuum limit \cite{Dittrich:2012jq,Dittrich:2013xwa}, a coarse graining procedure toward improved or perfect discretizations is necessary \cite{Bahr:2009ku,Bahr:2009qc,Bahr:2011uj,Dittrich:2012qb}.

Before the present quantum formalism can be directly applied to non-perturbative quantum gravity models, it firstly needs to be generalized to incorporate systems with arbitrary configuration manifolds---as its classical counterpart \cite{Dittrich:2013jaa,Dittrich:2011ke}. This will lead to global and topological non-trivialities in the quantization \cite{isham2}, but there is no obstruction in principle. However, it can be expected that the qualitative features of this formalism remain largely unchanged. Moreover, so far we have only considered the pure state case. In order to generalize the formalism to also include mixed states one may proceed along the lines of the positive formalism for the `general boundary formulation' \cite{Oeckl:2012ni} and adapt it to the discrete.

The global evolution moves of the present manuscript can always be decomposed into sequences of local evolution moves. In a space-time context, the latter do not evolve an entire hypersurface at once, but locally update the discretization of the hypersurface. For instance, in simplicial gravity models these local moves encompass the Pachner evolution moves \cite{Dittrich:2011ke,Thiemann:1996ay,Thiemann:1996aw,Bonzom:2011hm,Alesci:2010gb} which constitute the most general and basic time evolution moves. The quantization of such local evolution moves by means of the present formalism is the topic of the companion paper \cite{Hoehn:2014wwa}.

 \appendix
 
 \section{Proofs of the lemmas of section \ref{sec_lin}}\label{app}
 
We begin with the proof of Lemma \ref{lem1}
\begin{proof}
We employ the variable splitting introduced in the beginning of section \ref{sec_lin}. Using the standard position representation, it is straightforward to convince oneself that 
\ba
\hat{C}_I\,\psi(\lambda^I,x^\alpha)=e^{iS(\lambda^I,x^\alpha)/\hbar}\,\hat{p}_I\,e^{-iS(\lambda^I,x^\alpha)/\hbar}\,\psi(\lambda^I,x^\alpha).\nn
\ea
Hence,
\ba
\left(\hat{C}_I\right)^n=\left(e^{iS(\lambda^I,x^\alpha)/\hbar}\,\hat{p}_I\,e^{-iS(\lambda^I,x^\alpha)/\hbar}\right)^n=e^{iS(\lambda^I,x^\alpha)/\hbar}\,\left(\hat{p}_I\right)^n\,e^{-iS(\lambda^I,x^\alpha)/\hbar}\nn
\ea
and the (improper) projectors take the form (the spectra of the $\hat{C}_I$ are absolutely continuous)
\ba
\delta(\hat{C}_I)=\f{1}{2\pi\hbar}\int dt\,e^{is\hat{C}_I/\hbar}=\f{1}{2\pi\hbar}\int dt\,e^{iS/\hbar}\,e^{it\hat{p}_I/\hbar}\,e^{-iS/\hbar}.
\ea
Since the constraints are abelian, $[\hat{C}_I,\hat{C}_J]=0$, the different (improper) projectors $\delta(\hat{C}_I)$ commute and no factor ordering ambiguity in the definition of the physical states arises,
\ba
\psi^{\rm phys}(\lambda^I,x^\alpha)&=&\f{1}{(2\pi\hbar)^k}\int \prod_I\left(dt^I\,e^{iS/\hbar}e^{it^I\hat{p}_I/\hbar}e^{-iS/\hbar}\right)\psi^{\rm kin}(\lambda^I,x^\alpha)\nn\\
&=&\f{1}{(2\pi\hbar)^k}\int e^{iS/\hbar}\prod_I\left(dt^I\,e^{it^I\hat{p}_I/\hbar}\right)\phi^{\rm kin}(\lambda^I,x^\alpha)\nn\\
&=&\f{1}{(2\pi\hbar)^k}\,e^{iS(\lambda^I,x^\alpha)/\hbar}\int \prod_Idt^I\,\phi^{\rm kin}(\lambda^I+t^I,x^\alpha).\label{lemphys}
\ea
We have defined $\phi^{\rm kin} (\lambda^I,x^\alpha):=e^{-iS(\lambda^I,x^\alpha)/\hbar}\,\psi^{\rm kin}(\lambda^I,x^\alpha)$. Notice that $\psi^{\rm phys}(\lambda^I,x^\alpha)$ can only depend on the $\lambda^I$ through the factor $e^{iS(\lambda^I,x^\alpha)/\hbar}$. (Dependence on other time steps cancels out.)

Next, we note that $[\hat{G}_M(\lambda^I,x^\alpha),\hat{C}_I]=i\hbar\,\f{\p G_M}{\p \lambda^I}$, such that
\ba
\prod_{K=1}^k\delta(\hat{G}_K(\lambda^I,x^\alpha))=\f{\prod_{K=1}^k\delta(\lambda^K-c^K)}{\Big|\det\left([\hat{G}_M,\hat{C}_N]/\hbar\right)\Big|}\label{delta}
\ea
where $c^K$ is the value $\lambda^K$ must take after solving $G_K=0$, $K=1,\ldots,k$ (note that only a single solution $c^K$ per $\lambda^K$ exists, since $G_K$ are global gauge conditions). No factor ordering ambiguities arise because $[\hat{G}_M(\lambda^I,x^\alpha),\hat{C}_I]=f(\lambda^I,x^\alpha)$.

Finally, the conjunction of (\ref{lemphys}) and (\ref{delta}) yields the desired result
\ba
&&\hspace*{-1cm}(2\pi)^k\prod_I\delta(\hat{C}_I)\Big|\det\left([\hat{G}_M,\hat{C}_N]\right)\Big|\prod_K\delta(\hat{G}_K(\lambda^I,x^\alpha))\prod_J\delta(\hat{C}_J)\,\psi^{\rm kin}(\lambda^I,x^\alpha)\nn\\
&=&\f{1}{(2\pi\hbar)^k}\,e^{iS/\hbar}\int\prod_I\left(dt'^Idt^Ie^{it'^I\hat{p}_I}\right)\,e^{-iS/\hbar}\prod_K\delta(\lambda^K-c^K)\,e^{iS/\hbar}\,\phi^{\rm kin}(\lambda^I+t^I,x^\alpha)\nn\\
&=&\f{1}{(2\pi\hbar)^k}\,e^{iS/\hbar}\int\prod_Idt'^Idt^I\delta(\lambda^I+t'^I-c^I)\,\phi^{\rm kin}(\lambda^I+t^I+t'^I,x^\alpha)\nn\\
&=&\f{1}{(2\pi\hbar)^k}\,e^{iS(\lambda^I,x^\alpha)/\hbar}\int\prod_Idt^I\phi^{\rm kin}(c^I+t^I,x^\alpha)\underset{(\ref{lemphys})}{=}\psi^{\rm phys}(\lambda^I,x^\alpha).\nn
\ea\end{proof}
 
 Next, we prove lemma \ref{lem2}
 \begin{proof}
In the position representation, (\ref{pip}) reads
\ba
\left\langle\psi^{\rm phys}\Big|\phi^{\rm phys}\right\rangle_{\rm phys}\!\!\!\!\!\!\!\!\!\!\!\!\!\!\!&=&\!\!\!\!\!\!\!\int_{\cq}\prod_{I,\alpha}d\lambda^Idx^\alpha\left(\psi^{\rm kin}(\lambda^I,x^\alpha)\right)^*\prod_{I=1}^k\delta(\hat{C}_I)\,\phi^{\rm kin}(\lambda^I,x^\alpha)\nn\\
&\underset{\text{\tiny{Lemma \ref{lem1}}}}{=}&\!\!\!\!\!\!\!(2\pi)^k\int_{\cq}\prod_{I,\alpha}d\lambda^Idx^\alpha\left(\psi^{\rm kin}\right)^*\prod_{I=1}^k\delta(\hat{C}_I)\Big|\det\left([\hat{G}_M,\hat{C}_N]\right)\Big|\prod_{K=1}^k\delta(\hat{G}_K)\,\phi^{\rm phys}\nn\\
&=&\!\!\!\!\!\!\!(2\pi)^k\int_{\cq}\prod_{I,\alpha}d\lambda^Idx^\alpha\left(\prod_{I=1}^k\delta(\hat{C}_I)\psi^{\rm kin}\right)^*\Big|\det\left([\hat{G}_M,\hat{C}_N]\right)\Big|\prod_{K=1}^k\delta(\hat{G}_K)\,\phi^{\rm phys}\nn\\
&=&\!\!\!\!\!\!\!\!\!\!(2\pi)^k\!\!\int_{\cq}\prod_{I,\alpha}d\lambda^Idx^\alpha\Big|\!\det\left([\hat{G}_M,\hat{C}_N]\right)\!\!\Big|\prod_{K=1}^k\delta(\hat{G}_K)\!\!\left(\psi^{\rm phys}(\lambda^I,x^\alpha)\right)^*\!\!\phi^{\rm phys}(\lambda^I,x^\alpha).\nn
\ea
In the fourth line, we have made use of the fact that the $\hat{C}_I$ are self-adjoint with respect to the KIP on $\ch^{\rm kin}$ such that we may pull the (improper) projectors from one side to the other. The last step is possible because $[\hat{G}_K,\hat{C}_I]$ only depends on the configuration variables.

On account of the determinant of the Jacobian, the PIP does {\it not} depend on the particular choice of the gauge conditions $G_K(\lambda^I,x^\alpha)=0$. Furthermore, from (\ref{lemphys}) it follows that $\left(\psi^{\rm phys}(\lambda^I,x^\alpha)\right)^*\phi^{\rm phys}(\lambda^I,x^\alpha)$ is independent of the gauge parameter $\lambda^I$. Hence, the PIP is gauge-invariant. Finally, (\ref{lem2eq}) follows from integration of (\ref{delta}) over $\prod_Id\lambda^I$.
\end{proof}

We close with the proof of lemma \ref{lem3} 
 \begin{proof}
We begin with the last line in (\ref{conprop2}) and proceed analogously to the proof of lemma \ref{lem2},
\ba
{}^+\psi^{\rm phys}_1\!\!\!\!\!\!\!\!\!\!\!\!
&=&\!\!\!\!\!\int_{\cq_0} \prod_{I,\alpha}d\mu_0^{I}\,dx_0^\alpha\, K_{0\rightarrow1}^{f_+}{}^-\psi^{\rm phys}_0\nn\\
&\underset{\text{\tiny Lemma \ref{lem1}}}{=}&\!\!\!\!\!(2\pi)^{k_-}\int_{\cq_0} \prod_{I,\alpha}d\mu_0^{I}\,dx_0^\alpha\, K_{0\rightarrow1}^{f_+}\prod_{I=1}^{k_-}\delta({}^-\hat{C}^0_I)\Big|\det\left([{}^-\hat{G}^0_M,{}^-\hat{C}^0_N]\right)\Big|\prod_{K=1}^{k_-}\delta({}^-\hat{G}^0_K)\,{}^-\psi^{\rm phys}_0\nn\\
&=&\!\!\!\!\!(2\pi)^{k_-}\int_{\cq_0} \prod_{I,\alpha}d\mu_0^{I}\,dx_0^\alpha \left(\prod_{I=1}^{k_-}\delta^*({}^-\hat{C}^0_I)K_{0\rightarrow1}^{f_+}\right)\Big|\!\det\left([{}^-\hat{G}^0_M,{}^-\hat{C}^0_N]\right)\!\Big|\prod_{K=1}^{k_-}\delta({}^-\hat{G}^0_K)\,{}^-\psi^{\rm phys}_0\nn\\
&\underset{(\ref{conprop})}{=}&\!\!\!\!\!(2\pi)^{k_-}\int_{\cq_0} \prod_{I,\alpha}d\mu_0^{I}\,dx_0^\alpha\,\Big|\det\left([{}^-\hat{G}^0_M,{}^-\hat{C}^0_N]\right)\Big|\prod_{K=1}^{k_-}\delta({}^-\hat{G}^0_K)\, K_{0\rightarrow1}\,{}^-\psi^{\rm phys}_0.\nn
\ea
Independence of the value of $\mu_0^I$ follows from recalling (\ref{cond5}) and otherwise for the same reason as $\lambda^I$-independence at the end of the proof of Lemma \ref{lem2}.
\end{proof}

\section*{Acknowledgements}
The author is grateful to Bianca Dittrich for numerous insightful discussions and comments on an earlier version of this manuscript. The author, furthermore, thanks Wojciech Kaminski and Rafael Sorkin for discussion and an anonymous referee for useful comments. Research at Perimeter Institute is supported by the Government of Canada through Industry Canada and by the Province of Ontario through the Ministry of Research and Innovation.

\bibliography{bibliography}{}
\bibliographystyle{utphys}

\end{document}